\documentclass[a4paper,fleqn]{cas-sc}

\usepackage[authoryear,longnamesfirst]{natbib} 

\def\tsc#1{\csdef{#1}{\textsc{\lowercase{#1}}\xspace}}
\tsc{WGM}
\tsc{QE}
\tsc{EP}
\tsc{PMS}
\tsc{BEC}
\tsc{DE}

\usepackage[displaymath, mathlines]{lineno}

\usepackage{amsmath,amssymb,amsfonts,amsthm}
\usepackage{bm}
\usepackage{booktabs}
\usepackage{graphicx}
\usepackage{mathtools}
\usepackage{microtype}
\usepackage{hyperref}
\usepackage{fullpage}
\usepackage{tikz}
\usepackage[nameinlink,capitalise]{cleveref}

\usepackage{tabularx}
\usepackage{array}
\usepackage{svg}
\usepackage{wrapfig}

\definecolor{python1}{HTML}{1f77b4}
\definecolor{python2}{HTML}{ff7f0e}
\definecolor{python3}{HTML}{2ca02c}
\definecolor{python4}{HTML}{d62728}
\definecolor{python5}{HTML}{9467bd}
\definecolor{python6}{HTML}{8c564b}
\definecolor{python7}{HTML}{e377c2}
\definecolor{python8}{HTML}{7f7f7f}
\definecolor{python9}{HTML}{bcbd22}
\definecolor{python10}{HTML}{17becf}
\definecolor{designcolor}{HTML}{1f77b4}

\colorlet{tablehead}{python1!12}
\colorlet{tablerelation}{python1!5}
\colorlet{tableline}{python8!35}
\colorlet{tabletext}{black!88}

\newtheorem{theorem}{Theorem}[section]
\newtheorem{proposition}[theorem]{Proposition}
\newtheorem{lemma}[theorem]{Lemma}
\newtheorem{corollary}[theorem]{Corollary}

\theoremstyle{definition}
\newtheorem{definition}[theorem]{Definition}
\newtheorem{assumption}[theorem]{Assumption}

\theoremstyle{remark}
\newtheorem{remark}[theorem]{Remark}

\crefname{theorem}{theorem}{theorems}
\Crefname{theorem}{Theorem}{Theorems}

\crefname{proposition}{proposition}{propositions}
\Crefname{proposition}{Proposition}{Propositions}

\crefname{lemma}{lemma}{lemmas}
\Crefname{lemma}{Lemma}{Lemmas}

\crefname{corollary}{corollary}{corollaries}
\Crefname{corollary}{Corollary}{Corollaries}

\crefname{definition}{definition}{definitions}
\Crefname{definition}{Definition}{Definitions}

\crefname{assumption}{assumption}{assumptions}
\Crefname{assumption}{Assumption}{Assumptions}

\crefname{remark}{remark}{remarks}
\Crefname{remark}{Remark}{Remarks}

\definecolor{michaelgreen}{rgb}{0,0.5,0}

\renewcommand{\d}{\mathop{}\!\mathrm{d}}

\begin{document}
\let\WriteBookmarks\relax
\def\floatpagepagefraction{1}
\def\textpagefraction{.001}
\shorttitle{Stochastic Latent Twin}
\shortauthors{M. Chung et~al.}

\title [mode = title]{Stochastic Latent Twins: Learning Stochastic Solution Operators}                      



\author[1]{Matthias Chung}[orcid=0000-0001-7822-4539]
\cormark[1]
\ead{matthias.chung@emory.edu}
\ead[url]{https://matthiaschung.github.io/}

\credit{Conceptualization of this study, Methodology, Software}

\affiliation[1]{organization={Department of Mathematics, Emory University},
                addressline={400 Dowman Drive}, 
                city={Atlanta},
                postcode={30322}, 
                state={Georgia},
                country={USA}}

\author[1]{Jack Michael Solomon}[orcid=0009-0001-0822-9438]
\ead{jack.michael.solomon@emory.edu}

\credit{Data curation, Writing - Original draft preparation}

\author[2]{Deepanshu Verma}[orcid=0000-0002-0904-5470]

\ead{dverma@clemson.edu}
\ead[URL]{https://dpnshvrm.github.io/}

\affiliation[3]{organization={School of Mathematical and Statistical Sciences, Clemson University},
                addressline={220 Parkway Drive}, 
                city={Clemson},
                postcode={29634}, 
                state={South Carolina}, 
                country={USA}}

\begin{abstract}
Scientific prediction under uncertainty requires more than a single future state: it requires the distribution of possible outcomes. We show that this distribution-valued evolution can be learned directly as a stochastic solution operator between arbitrary source and target times. We introduce \emph{Stochastic Latent Twins}, which represent the physical state through a deterministic encoder--decoder while placing stochasticity entirely in a probability-valued latent evolution map. This separation makes the stochastic transition law independent of the state representation and enables direct probabilistic prediction without recursive time stepping.

We establish an end-to-end Wasserstein approximation theory that separates representation and latent-transition errors, derive stability guarantees, and develop a distributional theory of physics conformity with latent-to-physical transfer bounds and conformity by construction. The resulting framework can be
learned directly from observed transitions and supports forward prediction, backward conditional evolution, one-shot generation, and physics-conforming stochastic dynamics within the same formulation. Numerical experiments ranging from linear and non-Gaussian SDEs to stochastic Lorenz dynamics, generative modeling, a stochastic Hamiltonian system, and a two-dimensional stochastic Burgers equation verify the theory, expose the distinct roles of state representation and transition-family expressivity, compare favorably with recent stochastic and one-step generative methods, and demonstrate exact physical conservation by construction. More broadly, Stochastic Latent Twins provide a step toward scientific surrogates that represent not only the evolution of a system, but also its uncertainty and physical structure, opening new opportunities across uncertainty quantification, data assimilation, inverse problems, control, and digital twins.
\end{abstract}


\begin{keywords}
Latent Twins \sep Stochastic Solution Operators \sep Wasserstein Approximation \sep Physics Conformity \sep Stochastic Differential Equations 
\end{keywords}

\maketitle

\section{Introduction}\label{sec:introduction}

Stochastic differential equations (SDEs) provide a fundamental model for dynamical systems subject to random forcing and unresolved variability, with applications ranging from physics and engineering to biology, climate, and finance; see, e.g., \citep{oksendal2003stochastic,kloeden1992numerical}. Consider the classical It\^o SDE
\begin{equation}
    \mathrm{d}X_t = a(X_t,t)\,\mathrm{d}t + G(X_t,t)\,\mathrm{d}W_t,
    \label{eq:intro-sde}
\end{equation}
where $X_t\in\mathcal X$ denotes the random state and $W_t$ is a Wiener process. A realization of \Cref{eq:intro-sde} defines a sample path $t\mapsto x_t$. In contrast to a deterministic dynamical system, prescribing the state $X_s=x_s$ at time $s$ does not in general determine a unique state at a time $t\neq s$. Instead, it determines a conditional distribution, which we write as
\begin{equation}\label{eq:intro-stochastic-solution-operator}
    \Phi_{s,t}(x_s) := \operatorname{Dist}\left(X_t\mid X_s=x_s\right),
    \qquad
    \Phi_{s,t}:\mathcal X\longrightarrow\mathcal P(\mathcal X),
\end{equation}
where $\mathcal P(\mathcal X)$ denotes the space of Borel probability measures on $\mathcal X$. We call $\Phi_{s,t}$ the stochastic solution operator.

Numerical simulation of an SDE typically requires a time discretization together with the generation of random increments, and accurate approximation of sample paths or their statistics may require many time steps and realizations \citep{kloeden1992numerical}. This burden motivates learning $\Phi_{s,t}$ directly between arbitrary time points $s$ and $t$, avoiding repeated small-step simulation at inference. Data-driven surrogate models provide one such approach \citep{hasan2022identifying,chen2024learning,kiyohara2026neural}.

We introduce Stochastic Latent Twins (SLTs), a latent neural-network surrogate framework for $\Phi_{s,t}$, based on an encode--evolve--decode representation in which the encoder and decoder are deterministic and the latent evolution is probability-valued. Building on the Latent Twin framework \citep{chung2026latent,chung2027physics,verma2026latent}, we replace latent evolution from point-valued latent states to conditional latent distributions between arbitrary time points, while retaining the separation between state representation and temporal evolution.

With encoder $e:\mathcal X\rightarrow\mathcal Z$ and decoder $d:\mathcal Z\rightarrow\mathcal X$, a state $x_s$ is encoded as $z_s=e(x_s)$ and evolved according to
\begin{equation}\label{eq:intro-stochastic-latent-map}
    m(z_s,s,t) = \operatorname{Dist}\left(Z_t\mid Z_s=z_s\right) \in\mathcal P(\mathcal Z).
\end{equation}
The predicted latent distribution is then mapped back to physical space through $d$. A precise definition is given in \Cref{def:stochastic-latent-twin} and illustrated in \Cref{fig:stochasticLatentTwinDS}.

\begin{figure}
\begin{center}
    \begin{tikzpicture}[
box/.style={
    rounded corners=3pt,
    draw=black!30,
    line width=0.6pt,
    font=\small,
    align=center,
    inner sep=2pt
},
statebox/.style={
    box,
    minimum height=0.55cm,
    text height=1.7ex,
    text depth=0.45ex
},
latentbox/.style={
    box,
    minimum height=0.55cm,
    text height=1.7ex,
    text depth=0.45ex
},
smallbox/.style={
    box,
    minimum height=0.36cm,
    text height=1.5ex,
    text depth=0.35ex,
    inner sep=1pt
},
trapez/.style={
    draw=black!15,
    line width=0.4pt
},
flow/.style={
    -latex,
    line width=1pt,
    black!55
},
sampleflow/.style={
    -latex,
    dashed,
    line width=0.8pt,
    black!45
}
]

\definecolor{stateColS}{RGB}{0,114,178}    
\definecolor{stateColT}{RGB}{230,111,0}    
\definecolor{latentColS}{RGB}{0,158,115}   
\definecolor{latentColT}{RGB}{153,79,201}  
\definecolor{encoderCol}{RGB}{196,154,120} 
\definecolor{decoderCol}{RGB}{230,90,120}  
\definecolor{evolCol}{RGB}{240,193,30}     

\pgfmathsetmacro{\leftX}{0.0}
\pgfmathsetmacro{\midX}{3.5}
\pgfmathsetmacro{\rightX}{8.5}

\pgfmathsetmacro{\stateY}{0.0}
\pgfmathsetmacro{\latentY}{-2.85}
\pgfmathsetmacro{\outputY}{-5.20}

\pgfmathsetmacro{\stateHalf}{1.50}
\pgfmathsetmacro{\latentHalf}{0.75}

\pgfmathsetmacro{\funnelTopGap}{0.18}
\pgfmathsetmacro{\funnelDepth}{1.62}

\newcommand{\funnel}[6]{%
    \draw[fill=#4!65, trapez]
        ([xshift=-#2 cm,yshift=-\funnelTopGap cm]#1.south) --
        ([xshift=-#3 cm,yshift=-\funnelDepth cm]#1.south) --
        ([xshift=#3 cm,yshift=-\funnelDepth cm]#1.south) --
        ([xshift=#2 cm,yshift=-\funnelTopGap cm]#1.south) -- cycle
        node (#5) at ([yshift=-0.91cm]#1.south) {#6};
}

\node[
    statebox,
    fill=stateColS!45,
    minimum width=3.0cm
]
    (stateS) at (\leftX,\stateY)
    {$x_s$};

\funnel{stateS}{\stateHalf}{\latentHalf}{encoderCol}{encoderS}
    {$\begin{matrix}\mbox{encoder}\\[-0.0ex]e\end{matrix}$}

\node[
    latentbox,
    fill=latentColS!55,
    minimum width=1.5cm
]
    (latentS) at (\leftX,\latentY)
    {$z_s$};

\funnel{latentS}{\latentHalf}{\stateHalf}{decoderCol}{decoderS}
    {$\begin{matrix}\mbox{decoder}\\[-0.0ex]d\end{matrix}$}

\node[
    statebox,
    fill=stateColS!45,
    minimum width=3.0cm
]
    (reconstructionS) at (\leftX,\outputY)
    {$\widetilde x_s$};

\node[
    statebox,
    fill=stateColT!45,
    minimum width=3.0cm
]
    (stateT) at (\rightX,\stateY)
    {$x_t$};

\funnel{stateT}{\stateHalf}{\latentHalf}{encoderCol}{encoderT}
    {$\begin{matrix}\mbox{encoder}\\[-0.0ex]e\end{matrix}$}

\node[
    smallbox,
    fill=latentColT!50,
    minimum width=1.15cm
]
    (latentTObserved) at (\rightX,-2.13)
    {$z_t$};

\node[
    latentbox,
    fill=latentColT!50,
    minimum width=3.0cm
]
    (latentDistribution) at (\rightX,\latentY)
    {$\operatorname{Dist}(Z_t\mid Z_s=z_s)$};

\draw[sampleflow]
    (latentTObserved.south) --
    (latentDistribution.north);

\funnel{latentDistribution}{\latentHalf}{\stateHalf}{decoderCol}{decoderT}
    {$\begin{matrix}\mbox{decoder}\\[-0.0ex]d_\#\end{matrix}$}

\node[
    statebox,
    fill=stateColT!45,
    minimum width=3.0cm
]
    (predictedLaw) at (\rightX,\outputY)
    {$\Psi_{s,t}(x_s)$};

\node[
    statebox,
    fill=stateColS!45,
    minimum width=1.0cm
]
    (timeS) at (\midX,-1.75)
    {$s$};

\node[
    latentbox,
    fill=latentColS!55,
    minimum width=1.0cm,
    minimum height=1.0cm
]
    (latentInput) at (\midX,\latentY)
    {$z_s$};

\node[
    statebox,
    fill=stateColT!45,
    minimum width=1.0cm
]
    (timeT) at (\midX,-3.95)
    {$t$};

\draw[flow]
    (latentS.east) --
    (latentInput.west);

\draw[
    fill=evolCol!60,
    trapez
]
    ([xshift=0.50cm]timeS.north east) --
    ([xshift=-0.50cm,yshift=0.30cm]latentDistribution.west) --
    ([xshift=-0.50cm,yshift=-0.30cm]latentDistribution.west) --
    ([xshift=0.50cm]timeT.south east) -- cycle;

\node[
    align=center,
    font=\small,
    text width=2.0cm
]
    at (5.30,\latentY)
    {latent evolution\\$m$};

\end{tikzpicture}
\end{center}
\caption{\emph{Schematic of a Stochastic Latent Twin (SLT) for stochastic dynamical systems.} The encoder maps the observed states $x_s$ and $x_t$ to their latent representations $z_s=e(x_s)$ and $z_t=e(x_t)$, respectively. Given $(z_s,s,t)$, the latent evolution map $m$ returns the transition law $m(z_s,s,t)=\operatorname{Dist}(Z_t\mid Z_s=z_s)$, which represents the  distribution of the latent state at time $t$ conditioned on $z_s$ at time $s$. The encoded state $z_t$ is one realization of this distribution corresponding to the observed state $x_t$. The decoder maps latent states back to physical space, with $\widetilde x_s=d(e(x_s))$ denoting the reconstruction of $x_s$. Applying the same decoder to the predicted latent distribution gives its pushforward in physical space, denoted by $d_\#$, so that the Stochastic Latent Twin $\Psi_{s,t}(x_s)=d_\#m(z_s,s,t)=d_\#m(e(x_s),s,t)$. Thus, the encoder--decoder pair defines the state representation, while the stochastic temporal evolution is contained entirely in $m$.}
\label{fig:stochasticLatentTwinDS}
\end{figure}

\paragraph{Contribution.} Stochastic Latent Twins can be trained directly from observed state transitions and evaluated at arbitrary, including irregularly spaced, time points without recursive small-step propagation. The main contribution of this work is the formulation and analysis of {Stochastic Latent Twins}, a latent solution-operator framework in which stochastic evolution is learned directly as a conditional probability distribution. We develop a Wasserstein approximation theory that connects latent stochastic evolution to the resulting physical-space solution law and introduce a distributional theory of physics conformity for stochastic dynamics, including conservation, dissipation, and latent-to-physical transfer of physical structure.

\paragraph{Organization.} The remainder of the paper is organized as follows. \Cref{sec:background} places Stochastic Latent Twins within the broader development of learned stochastic dynamics and distribution-valued solution operators, including neural and latent SDEs, stochastic flow maps, probabilistic operator learning, and stochastic transfer operators. \Cref{sec:stochastic-latent-twins} introduces the Stochastic Latent Twin formulation. We then develop its Wasserstein approximation theory (\Cref{subsec:end-to-end-approximation}), stability properties (\Cref{subsec:slt-stability}), and physics-conforming formulation (\Cref{subsec:physics}).

\Cref{sec:numerics} examines the framework across problems ranging from controlled SDEs (\Cref{sec:linear_sde_experiment,subsec:cir}) and fully data-driven stochastic dynamics (\Cref{subsec:lorenz}) to one-shot generative modeling (\Cref{subsec:swissroll}), physics conforming stochastic dynamics
(\Cref{subsec:pendulum}), and a two-dimensional stochastic Burgers partial differential equation (SPDE) (\Cref{subsec:burgers}), with direct comparisons against Neural Stochastic Flows and recent one-step generative methods. The appendices provide supporting theoretical derivations (\Cref{app:related-methods,app:linear-low-rank}) and detailed numerical setups, training and evaluation protocols, and additional diagnostics (\Cref{app:linear-numerics,app:cir-numerics,app:lorenz-numerics,app:swissroll-numerics,app:pendulum-numerics,app:burger-numerics}) to facilitate reproducibility.

\section{Related work}\label{sec:background}

Surrogate models for stochastic dynamics can be formulated in fundamentally different ways, depending on the mathematical object chosen to represent the evolution. These include local stochastic dynamics, conditional transition laws between source and target times, probability-density evolution, and stochastic operators.

One class of approaches learns the local stochastic dynamics. Neural SDEs parameterize the drift and diffusion coefficients \citep{li2020scalable,kidger2021neuralsdes}, while latent SDE approaches place such dynamics in a learned state space, often within variational or state-space formulations \citep{hasan2022identifying,course2023amortized,bartosh2025sdematching}. Prediction at a later time is then obtained by propagating the learned stochastic dynamics.

Rather than learning an infinitesimal model and propagating it forward, another class of methods represents stochastic evolution directly over finite time intervals. Stochastic flow-map learning constructs stochastic maps between states separated in time \citep{chen2024learning}, while transition-density approaches learn conditional transition densities or the associated Fokker--Planck evolution \citep{su2025transition,zeng2026transition}.  These methods share the direct finite-interval perspective of Stochastic Latent Twins, but differ in how stochastic evolution is represented and coupled to the state representation. Stochastic Latent Twins combine direct conditional-law learning with a separate latent representation, providing the basis for the approximation and physics-conformity developments of this work.

Neural Stochastic Flows (NSFs) \citep{kiyohara2026neural} provide the closest comparison to the present formulation. They learn conditional transition laws between arbitrary time points using conditional normalizing flows and encode structural properties of stochastic flows, including identity and Chapman--Kolmogorov consistency. Their latent extension is formulated through a variational state-space model. Stochastic Latent Twins instead retain a deterministic encoder--decoder representation and place stochasticity in the latent evolution itself, without prescribing a particular family for the conditional transition law. The theoretical developments also differ in scope: Neural Stochastic Flows focus primarily on structural consistency of the learned stochastic flow, whereas we develop end-to-end Wasserstein approximation and stability results together with a distributional theory of physics conformity and latent-to-physical transfer. Numerical comparisons are given in \Cref{subsec:lorenz,subsec:burgers}.

Probabilistic and stochastic neural operators approach uncertainty from a function-space perspective, learning distribution-valued or stochastic maps between input and output fields \citep{bulte2025probabilistic,bahmani2025neuralchaos,shi2026expanding}. Stochastic transfer- and Koopman-operator methods take a related operator viewpoint, focusing on the evolution of observables or learned representations adapted to the underlying stochastic dynamics \citep{kostic2024learning,ke2025learning}. Stochastic Latent Twins take a different route, centering the formulation on state-conditioned transition laws between source and target times while using the latent space to represent the evolving state.

A further connection arises with generative modeling. Diffusion and score-based models and flow matching construct transport between probability distributions \citep{song2021score,lipman2023flow}, while recent one-step methods seek to replace iterative generation by increasingly direct transport or solution maps \citep{frans2025one,geng2026mean,luo2026soflow}. These approaches are primarily designed to transform one distribution into another, whereas Stochastic Latent Twins learn state-conditioned transition laws. When the source state is itself distributed, however, these conditional laws induce a corresponding distribution-to-distribution evolution. This connection motivates the one-shot generative experiment in \Cref{subsec:swissroll}.

Stochastic Latent Twins also build directly on the deterministic Latent Twin,  Physics-Conforming Latent Twin, and Latent Twin Operators frameworks \citep{chung2026latent,chung2027physics, verma2026latent}. The present work extends the latent solution-operator viewpoint from point-valued to probability-valued evolution and generalizes physics conformity to stochastic solution laws, including distributional conservation and dissipation. A more detailed comparison with related stochastic-learning frameworks is provided in \Cref{app:related-methods}.

\section{Stochastic Latent Twins (SLTs)}\label{sec:stochastic-latent-twins}

The central objective is to approximate the stochastic solution operator $\Phi_{s,t}$ defined in \cref{eq:intro-stochastic-solution-operator}, which maps an initial state $x_s$ to the transition law $\Phi_{s,t}(x_s)$, the distribution of $X_t$ given $X_s=x_s$. In standard stochastic-process terminology, $\Phi_{s,t}$ is the transition kernel of the process \citep{ethier2009markov}. We use solution-operator terminology to emphasize the connection with deterministic solution operators and the Latent Twin framework.

\begin{definition}[Stochastic Latent Twin] \label{def:stochastic-latent-twin}
A \emph{Stochastic Latent Twin} of $\Phi_{s,t}$ is a surrogate stochastic solution operator
\begin{equation}\label{eq:slt-definition}
    \Psi_{s,t}(x_s) := d_\# m(e(x_s),s,t),
\end{equation}
defined through a latent space $\mathcal Z$ and
\begin{enumerate}
    \item[(i)] an encoder $e:\mathcal X\rightarrow\mathcal Z$,
    \item[(ii)] a \emph{latent evolution map} $ m:\mathcal Z\times[0,T]^2\rightarrow\mathcal P(\mathcal Z)$, and 
    \item[(iii)] a decoder $d:\mathcal Z\rightarrow\mathcal X$, inducing
    $d_\#:\mathcal P(\mathcal Z)\rightarrow\mathcal P(\mathcal X)$.
\end{enumerate}
\end{definition}
For a given $(z_s,s,t)$, the value $m(z_s,s,t)=\operatorname{Dist}(Z_t\mid Z_s=z_s)$ is the \emph{latent transition law}. The formulation does not prescribe a probability family for these laws; in this work, we use Gaussian, Gaussian-mixture, and noncentral-$\chi^2$ transition families. When referring to trainable realizations, $\theta$ denotes the parameters of the encoder, latent evolution map, and decoder collectively. We suppress it and write $e$, $m$, and $d$ unless the dependence matters.

\begin{remark}[Deterministic Latent Twins as a special case]
\label{rem:deterministic-lt}
The Stochastic Latent Twin framework naturally contains deterministic Latent Twins as a special case. Indeed, suppose that the latent evolution map returns the Dirac transition law, $m(z_s,s,t)=\delta_{M_{s,t}(z_s)}$, for some deterministic latent evolution map $M_{s,t}:\mathcal Z\rightarrow\mathcal Z$. Then $Z_t=M_{s,t}(z_s)$ almost surely, and the Stochastic Latent Twin reduces to $\Psi_{s,t}(x_s) = d_\#\,\delta_{M_{s,t}(e(x_s))} = \delta_{d(M_{s,t}(e(x_s)))}$. The stochastic formulation therefore extends point-valued solution operators to probability-valued ones, with direct access to predictive distributions and their statistics.

\end{remark}

Approximating the stochastic solution operator requires both faithful state representation and an accurate stochastic evolution, 
\begin{equation}\label{eq:slt-approximation-goals}
    d\circ e\approx\operatorname{id}_{\mathcal X},
    \qquad
    \Psi_{s,t}(x_s)
    =
    d_\#m(e(x_s),s,t)
    \approx
    \Phi_{s,t}(x_s),
\end{equation}
and the approximation error separates accordingly into a state-representation part and a stochastic-evolution part, as quantified in \Cref{sec:slt-approximation}.

\begin{remark}[Temporal consistency] \label{rem:temporal-consistency}
When the latent transition law admits a density, Markovian evolution satisfies the Chapman--Kolmogorov composition property
\begin{equation} \label{eq:latent-chapman-kolmogorov}
    p(z_t\mid z_s,s,t) = \int_{\mathcal Z} p(z_t\mid z_r,r,t)\, p(z_r\mid z_s,s,r)\, \mathrm dz_r, \qquad \text{for }s<r<t,
\end{equation}
that is, direct evolution from $s$ to $t$ agrees with evolution through an intermediate time after averaging over all possible intermediate states. This is the stochastic analogue of the composition property of deterministic solution operators, reducing to the familiar semigroup property in the time-homogeneous case \citep{ethier2009markov}. Chapman--Kolmogorov consistency provides a natural temporal consistency condition and may be promoted during training, as is also done in Neural Stochastic Flows \citep{kiyohara2026neural}. Since encoding need not preserve the Markov property when relevant state information is discarded, we regard \cref{eq:latent-chapman-kolmogorov} as an optional structural property rather than an assumption of the approximation theory.
\end{remark}

\section{Approximation and Structural Properties}
\label{sec:slt-approximation}

Since both the stochastic solution and its Stochastic Latent Twin approximation are probability distributions, their discrepancy is naturally measured in probability space. We denote by $\mathcal P_p(\mathcal X)\subset\mathcal P(\mathcal X)$ with $p\in [1,\infty)$, the space of probability measures on $\mathcal X$ with finite $p$-th moment. We use the $p$-Wasserstein distance $W_p$ to measure discrepancies between probability distributions; see, e.g., \citep{villani2009optimal}.

\begin{assumption}[Stochastic dynamics] \label{ass:stochastic-dynamics}
For every $x_s\in\mathcal X$ and $(s,t)\in\mathcal T$, the conditional stochastic solution operator is well defined and satisfies
\begin{equation}\label{eq:transition-p-moment}
    \Phi_{s,t}(x_s)\in\mathcal P_p(\mathcal X), \qquad\text{equivalently}\qquad \int_{\mathcal X} \|x\|_{\mathcal X}^p\, \Phi_{s,t}(x_s)(\mathrm dx) <\infty.
\end{equation}
\end{assumption}
Here we use $\Phi_{s, t}(x_s)(\d x)$ to denote integration against the conditional law which is not required to admit a density.  Moreover, no boundedness or compact-support condition is imposed on the stochastic state. The assumption only requires the conditional law to have a finite $p$-th moment. Transition laws with unbounded support, including Gaussian laws, are therefore admissible. This places the conditional laws in the natural domain of the $p$-Wasserstein distance used below.

The choice of $\mathcal T$ determines the admissible time orientation. For standard forward evolution, one may take $\mathcal T \subseteq \{(s,t)\in[0,T]^2:s\leq t\}$.  More generally, $\mathcal T$ may contain any time pairs for which the corresponding conditional solution laws are well defined, including backward conditional operators when such laws exist; see \Cref{subsec:forward-backward}. The approximation theory below is formulated directly in terms of these conditional probability laws and therefore does not require a prescribed time orientation.

We next impose the corresponding approximation properties on the state representation.

\begin{assumption}[Encoder and decoder]
\label{ass:encoder-decoder}
The encoder $e:\mathcal X\to\mathcal Z$ and decoder $d:\mathcal Z\to\mathcal X$ are Lipschitz continuous:
\begin{align}
    \|e(x)-e(y)\|_{\mathcal Z}
    &\leq
    L_e\|x-y\|_{\mathcal X},
    \qquad x,y\in\mathcal X,
    \label{eq:encoder-lipschitz}\\
    \|d(z)-d(w)\|_{\mathcal X}
    &\leq
    L_d\|z-w\|_{\mathcal Z},
    \qquad z,w\in\mathcal Z,
    \label{eq:decoder-lipschitz}
\end{align}
for constants $L_e,L_d>0$.
\end{assumption}
The encoder--decoder pair is assumed to be Lipschitz continuous, while its representation accuracy is quantified by the following reconstruction error.
\begin{definition}[Distribution-weighted reconstruction error]
\label{def:stochastic-ae-error}
For $x_s\in\mathcal X$ and $(s,t)\in\mathcal T$, define
\begin{equation}
\label{eq:stochastic-ae-error}
    \varepsilon_{\rm ae}(x_s;s,t)
    :=
    \left(
        \int_{\mathcal X}
        \|d(e(x))-x\|_{\mathcal X}^{p}\,
        \Phi_{s,t}(x_s)(\mathrm dx)
    \right)^{1/p}.
\end{equation}
\end{definition}
Since
\[
    X_t\mid X_s=x_s\sim\Phi_{s,t}(x_s),
\]
the reconstruction error can equivalently be written as
\[
    \varepsilon_{\rm ae}(x_s;s,t)
    =
    \left(
        \mathbb E\left[
            \|d(e(X_t))-X_t\|_{\mathcal X}^{p}
            \,\middle|\,
            X_s=x_s
        \right]
    \right)^{1/p}.
\]
Thus, $\varepsilon_{\rm ae}(x_s;s,t)$ measures reconstruction accuracy under the conditional distribution of the stochastic target state.

\subsection{Latent stochastic evolution}
\label{subsec:latent-stochastic-evolution}

The true conditional solution law induces a corresponding probability law in latent space through the encoder. Since Lipschitz maps preserve finite $p$-th moments under pushforward, \Cref{ass:stochastic-dynamics} and
\Cref{ass:encoder-decoder} imply $ e_\#\Phi_{s,t}(x_s)\in\mathcal P_p(\mathcal Z)$.

\begin{assumption}[Latent evolution map]
\label{ass:latent-map}
For every $x_s\in\mathcal X$ and $(s,t)\in\mathcal T$, the learned latent transition law matches
$
    m(e(x_s),s,t)\in\mathcal P_p(\mathcal Z).
$
\end{assumption}

We quantify how closely the learned latent transition matches the encoded conditional solution law through the following Wasserstein error.

\begin{definition}[Latent transition error] \label{def:latent-map-error}
For $x_s\in\mathcal X$ and $(s,t)\in\mathcal T$, define
\begin{equation} \label{eq:latent-map-error}
    \varepsilon_{\rm map}(x_s;s,t) := W_p\left( m(e(x_s),s,t), e_\#\Phi_{s,t}(x_s) \right).
\end{equation}
\end{definition}
We can now combine the reconstruction and latent transition errors to bound the resulting error of the Stochastic Latent Twin.

\subsection{End-to-end approximation}
\label{subsec:end-to-end-approximation}

We first recall the stability of the Wasserstein distance under Lipschitz pushforwards.

\begin{lemma}[Decoder pushforward] \label{lem:decoder-pushforward}
Let $\rho,\eta\in\mathcal P_p(\mathcal Z)$ and let $d:\mathcal Z\to\mathcal X$ be $L_d$-Lipschitz. Then
\begin{equation} \label{eq:decoder-pushforward}
    W_p(d_\#\rho,d_\#\eta) \leq L_d W_p(\rho,\eta).
\end{equation}
\end{lemma}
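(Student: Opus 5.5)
The plan is a coupling argument: push an optimal coupling between $\rho$ and $\eta$ forward through the product map $d\times d$ and use it as a competitor in the definition of $W_p(d_\#\rho,d_\#\eta)$.

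First, we check that everything is well defined. Since $d$ is $L_d$-Lipschitz, we have $\|d(z)\|_{\mathcal X}\le \|d(z_0)\|_{\mathcal X}+L_d\|z-z_0\|_{\mathcal Z}$ for any fixed $z_0\in\mathcal Z$. Hence $d_\#\rho$ and $d_\#\eta$ have finite $p$-th moments whenever $\rho,\eta\in\mathcal P_p(\mathcal Z)$, and the left-hand side of \cref{eq:decoder-pushforward} is finite.

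Next, we fix a coupling. Let $\gamma\in\Pi(\rho,\eta)$ be an optimal coupling for $W_p(\rho,\eta)$. On Polish spaces such a coupling exists by lower semicontinuity of the cost and tightness of $\Pi(\rho,\eta)$. To avoid invoking existence, we can instead take a $\delta$-optimal coupling and let $\delta\to0$ at the end. Define $\widetilde\gamma:=(d\times d)_\#\gamma$ on $\mathcal X\times\mathcal X$. Its marginals are $d_\#\rho$ and $d_\#\eta$, because projection commutes with the product map. Therefore $\widetilde\gamma\in\Pi(d_\#\rho,d_\#\eta)$.

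Finally, we estimate the cost of $\widetilde\gamma$ using the change-of-variables formula for pushforwards:
\begin{equation*}
W_p(d_\#\rho,d_\#\eta)^p\le\int_{\mathcal X\times\mathcal X}\|x-y\|_{\mathcal X}^p\,\widetilde\gamma(\mathrm dx,\mathrm dy)=\int_{\mathcal Z\times\mathcal Z}\|d(z)-d(w)\|_{\mathcal X}^p\,\gamma(\mathrm dz,\mathrm dw)\le L_d^p\,W_p(\rho,\eta)^p.
\end{equation*}
The last step applies \cref{eq:decoder-lipschitz} pointwise under the integral. Taking $p$-th roots gives the claim.

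No step poses a real obstacle. The only point needing care is the existence of an optimal coupling, or equivalently the $\delta$-approximation argument together with measurability of $d\times d$. Measurability is immediate because $d$ is continuous.
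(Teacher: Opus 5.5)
Your proof is correct and follows essentially the same coupling argument as the paper: push a coupling of $\rho$ and $\eta$ forward through $d\times d$, apply the Lipschitz bound pointwise under the integral, and then optimize over couplings. The only cosmetic difference is that the paper bounds the cost for an arbitrary coupling and then takes the infimum, which avoids the existence-of-optimal-coupling question that you handle with a $\delta$-optimal coupling.
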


\begin{proof}
Let $(Z,W)$ be a pair of random variables whose joint distribution has marginals $Z\sim\rho$ and $W\sim\eta$, such a joint distribution is called a \emph{coupling} of $\rho$ and $\eta$ in optimal transport. Applying the decoder to both components gives a joint random variable $(d(Z),d(W))$ with marginals $d_\#\rho$ and $d_\#\eta$.

By the definition of the $p$-Wasserstein distance, any joint distribution with the prescribed marginals provides an upper bound on $W_p^p$. Hence,
\[
    W_p^p(d_\#\rho,d_\#\eta)
    \leq
    \mathbb E\left[
        \|d(Z)-d(W)\|_{\mathcal X}^p
    \right]\leq
    L_d^p
    \mathbb E\left[
        \|Z-W\|_{\mathcal Z}^p
    \right],
\]
where the latter inequality follows from $L_d$-Lipschitz continuity of the decoder.

Taking the infimum over all couplings of $\rho$ and $\eta$ yields
\[
    W_p^p(d_\#\rho,d_\#\eta)
    \leq
    L_d^p W_p^p(\rho,\eta),
\]
 and the desired estimate follows after taking the $p$-th root.
\end{proof}

We can now combine the latent approximation and reconstruction estimates to
obtain an end-to-end error bound for the Stochastic Latent Twin.

\begin{theorem}[Stochastic Latent Twin approximation] \label{thm:slt-approximation}
Suppose \Cref{ass:stochastic-dynamics,ass:encoder-decoder,ass:latent-map} hold. Then, for every $x_s\in\mathcal X$ and $(s,t)\in\mathcal T$,
\begin{equation} \label{eq:slt-approximation}
    W_p\left( \Psi_{s,t}(x_s), \Phi_{s,t}(x_s) \right)
    \leq
    L_d\,\varepsilon_{\rm map}(x_s;s,t) + \varepsilon_{\rm ae}(x_s;s,t).
\end{equation}
\end{theorem}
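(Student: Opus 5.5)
The plan is to insert the intermediate law $(d\circ e)_\#\Phi_{s,t}(x_s) = d_\# e_\#\Phi_{s,t}(x_s)$ and apply the triangle inequality for $W_p$ on $\mathcal P_p(\mathcal X)$:
\[
    W_p\bigl(\Psi_{s,t}(x_s),\Phi_{s,t}(x_s)\bigr)
    \leq
    W_p\bigl(d_\# m(e(x_s),s,t),\, d_\# e_\#\Phi_{s,t}(x_s)\bigr)
    +
    W_p\bigl((d\circ e)_\#\Phi_{s,t}(x_s),\, \Phi_{s,t}(x_s)\bigr).
\]
Before doing so, I will check that all three measures lie in $\mathcal P_p(\mathcal X)$, so that $W_p$ is a finite metric there. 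For $\Phi_{s,t}(x_s)$ this is \Cref{ass:stochastic-dynamics}. For $e_\#\Phi_{s,t}(x_s)$, the remark preceding \Cref{ass:latent-map} gives membership in $\mathcal P_p(\mathcal Z)$, and applying $d_\#$ preserves finite $p$-th moments because $d$ is Lipschitz. For $\Psi_{s,t}(x_s)=d_\# m(e(x_s),s,t)$, membership follows from \Cref{ass:latent-map} together with the Lipschitz continuity of $d$. In each case the estimate $\|d(z)\|\le \|d(0)\|+L_d\|z\|$ controls the $p$-th moment.

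For the first term, I will invoke \Cref{lem:decoder-pushforward} with $\rho=m(e(x_s),s,t)$ and $\eta=e_\#\Phi_{s,t}(x_s)$. This bounds the term by $L_d\,W_p(\rho,\eta)=L_d\,\varepsilon_{\rm map}(x_s;s,t)$, directly from \Cref{def:latent-map-error}.

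For the second term, I will construct the coupling explicitly. Take $X\sim\Phi_{s,t}(x_s)$ and consider the pair $(d(e(X)),X)$. Its marginals are $(d\circ e)_\#\Phi_{s,t}(x_s)$ and $\Phi_{s,t}(x_s)$, so by the definition of $W_p$,
\[
W_p^p \le \mathbb E\|d(e(X))-X\|^p=\varepsilon_{\rm ae}(x_s;s,t)^p .
\]
This is exactly \Cref{def:stochastic-ae-error}, which requires only measurability of $d\circ e$, and that is guaranteed by continuity. Taking $p$-th roots and summing the two bounds gives \cref{eq:slt-approximation}.

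The main obstacle is modest: it is verifying that the triangle inequality for $W_p$ applies, which is why the moment checks above are needed. That $W_p$ is a metric on $\mathcal P_p$ is a standard fact I will cite from \citep{villani2009optimal}. The remaining steps are direct applications of the lemma and of the definitions.
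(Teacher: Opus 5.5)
Your proposal is correct and follows the paper's proof essentially step for step. It uses the same intermediate law $d_\# e_\#\Phi_{s,t}(x_s)$, the triangle inequality, \Cref{lem:decoder-pushforward} for the latent-transition term, and the explicit coupling $(d(e(X)),X)$ for the reconstruction term, and your preliminary check that all three measures lie in $\mathcal P_p(\mathcal X)$ adds a bit of rigor the paper leaves implicit.
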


\begin{proof}
Fix $x_s\in\mathcal X$ and $(s,t)\in\mathcal T$. By the definition of the Stochastic Latent Twin and the triangle inequality,
\begin{align}
    W_p\left( \Psi_{s,t}(x_s), \Phi_{s,t}(x_s) \right) 
    \leq W_p\left( d_\#m(e(x_s),s,t), d_\#e_\#\Phi_{s,t}(x_s) \right)
    + W_p\left( d_\#e_\#\Phi_{s,t}(x_s), \Phi_{s,t}(x_s) \right).
\label{eq:slt-error-decomposition}
\end{align}

By \Cref{lem:decoder-pushforward} and \Cref{def:latent-map-error},
\[
    W_p\left( d_\#m(e(x_s),s,t), d_\#e_\#\Phi_{s,t}(x_s) \right) \leq L_d\,\varepsilon_{\rm map}(x_s;s,t).
\]

For the second term, the composition rule for pushforwards gives
\[
    d_\#e_\#\Phi_{s,t}(x_s) = (d\circ e)_\#\Phi_{s,t}(x_s).
\]
Since $(X_t\mid X_s=x_s)\sim\Phi_{s,t}(x_s)$, the pair $\bigl(d(e(X_t)),X_t\bigr)$ defines a coupling of $(d\circ e)_\#\Phi_{s,t}(x_s)$ and $\Phi_{s,t}(x_s)$. Therefore,
\begin{align*}
    W_p^p\left( (d\circ e)_\#\Phi_{s,t}(x_s), \Phi_{s,t}(x_s) \right)
    \leq
    \mathbb E\left[ \|d(e(X_t))-X_t\|_{\mathcal X}^p \,\middle|\, X_s=x_s \right]
    = \varepsilon_{\rm ae}(x_s;s,t)^p.
\end{align*}
Hence,
\[
    W_p\left( d_\#e_\#\Phi_{s,t}(x_s), \Phi_{s,t}(x_s) \right)
    \leq 
    \varepsilon_{\rm ae}(x_s;s,t).
\]
Combining the two estimates proves the result.
\end{proof}

\begin{corollary}[Uniform approximation] \label{cor:slt-uniform-approximation}
For any $\mathcal X_0\subseteq\mathcal X$ and $\mathcal T_0\subseteq\mathcal T$, suppose that
\[
    \sup_{\substack{x_s\in\mathcal X_0\\(s,t)\in\mathcal T_0}} \varepsilon_{\rm map}(x_s;s,t) \leq \overline{\varepsilon}_{\rm map},
    \quad \text{and} \qquad
    \sup_{\substack{x_s\in\mathcal X_0\\(s,t)\in\mathcal T_0}} \varepsilon_{\rm ae}(x_s;s,t) \leq \overline{\varepsilon}_{\rm ae}.
\]
Then
\begin{equation}
\label{eq:slt-uniform-approximation}
\sup_{\substack{x_s\in\mathcal X_0\\(s,t)\in\mathcal T_0}}
W_p\left(
    \Psi_{s,t}(x_s),
    \Phi_{s,t}(x_s)
\right)
\leq
L_d\overline{\varepsilon}_{\rm map}
+
\overline{\varepsilon}_{\rm ae}.
\end{equation}
\end{corollary}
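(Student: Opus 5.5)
The corollary follows pointwise from \Cref{thm:slt-approximation}, followed by passing to the supremum. The plan is to fix an arbitrary pair $x_s\in\mathcal X_0$ and $(s,t)\in\mathcal T_0$. Since $\mathcal X_0\subseteq\mathcal X$ and $\mathcal T_0\subseteq\mathcal T$, the standing \Cref{ass:stochastic-dynamics,ass:encoder-decoder,ass:latent-map} apply at this pair. These assumptions are implicit in the corollary, inherited from the theorem, and are needed so that $\varepsilon_{\rm map}$ and $W_p$ are finite and well defined.

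Applying \cref{eq:slt-approximation} at the fixed pair gives
\[
    W_p\left(\Psi_{s,t}(x_s),\Phi_{s,t}(x_s)\right)
    \leq L_d\,\varepsilon_{\rm map}(x_s;s,t)+\varepsilon_{\rm ae}(x_s;s,t).
\]
By the hypotheses, each error term is dominated by its supremum over $\mathcal X_0\times\mathcal T_0$. Because $L_d>0$, multiplying by $L_d$ preserves the inequality, and the right-hand side is bounded by $L_d\overline{\varepsilon}_{\rm map}+\overline{\varepsilon}_{\rm ae}$.

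This bound does not depend on the chosen pair, so it is an upper bound for the whole family $\{W_p(\Psi_{s,t}(x_s),\Phi_{s,t}(x_s))\}$. Taking the supremum over $x_s\in\mathcal X_0$ and $(s,t)\in\mathcal T_0$ therefore yields \cref{eq:slt-uniform-approximation}.

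There is no substantive obstacle here. The only point that needs care is the order of operations: we bound pointwise by constants first and then take the supremum. This avoids any appeal to an inequality of the form $\sup(a+b)\le\sup a+\sup b$, which would in any case be valid. If $\mathcal X_0$ or $\mathcal T_0$ is empty, the statement is vacuous under the convention $\sup\emptyset=-\infty$ or $0$.
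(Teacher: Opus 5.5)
Your proof is correct and is the intended argument. The paper gives no separate proof of this corollary; it treats it as an immediate consequence of \Cref{thm:slt-approximation}, applied pointwise and then passing to the supremum over $\mathcal X_0\times\mathcal T_0$, exactly as you do.
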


Consequently, if $\overline{\varepsilon}_{\rm ae}\to0$ and $\overline{\varepsilon}_{\rm map}\to0$ along a sequence of Stochastic Latent Twins, then the corresponding surrogate solution operators converge uniformly to the stochastic solution operator in $W_p$ over those sets.
 
\begin{remark}[Relation to deterministic Latent Twins]
The deterministic Latent Twin approximation theory is commonly formulated through uniform estimates over prescribed state sets. Here, the basic approximation result is instead formulated directly for conditional probability laws. Uniform approximation follows when the distribution-weighted reconstruction and latent transition errors are uniformly controlled over a chosen set of conditioning states and time pairs. 
\end{remark}

The approximation theory developed above is formulated directly for probability-valued transition operators and does not require an explicit differential representation of the underlying stochastic dynamics. Thus, while SDEs and SPDEs provide the primary examples considered in this work, the results apply more generally to any family of conditional transition laws
\[
    \Phi_{s,t}(x)\in\mathcal P_p(\mathcal X),
    \qquad
    (s,t)\in\mathcal T,
\]
satisfying the stated moment and regularity assumptions. An SDE should therefore be regarded as one mechanism for generating such a transition family rather than as part of the definition of a Stochastic Latent Twin.

\subsection{Stability of the learned stochastic solution operator}
\label{subsec:slt-stability}

The sensitivity of the learned stochastic solution operator to perturbations of the conditioning state is governed by the stability of its components. Given two source states $x,y\in\mathcal X$ at time $s$, their separation bounds the distance between the corresponding predicted distributions at time $t$.

Under the Lipschitz assumptions on the encoder and decoder in
\Cref{ass:encoder-decoder}, suppose additionally that the learned latent evolution map is Wasserstein stable,
\[
    W_p\left(m(z,s,t),m(w,s,t)\right) \leq L_m(s,t)\|z-w\|_{\mathcal Z}, \qquad z,w\in\mathcal Z.
\]
Then, for all $x,y\in\mathcal X$,
\begin{equation} \label{eq:physical-stability}
    W_p\left( \Psi_{s,t}(x), \Psi_{s,t}(y) \right)
    \leq
    L_dL_m(s,t)L_e\|x-y\|_{\mathcal X}.
\end{equation}
Perturbations of the conditioning state therefore remain controlled under the learned stochastic evolution.

The same question can be posed when the state at time $s$ is itself random. Under the same hypothesis, for source-state distributions $\mu,\nu\in\mathcal P_p(\mathcal X)$,
\begin{equation}\label{eq:distribution-stability}
    W_p\left(
        \int_{\mathcal X}\Psi_{s,t}(x)\,\mu(\mathrm dx),
        \int_{\mathcal X}\Psi_{s,t}(x)\,\nu(\mathrm dx)
    \right)
    \leq
    L_dL_m(s,t)L_e\,W_p(\mu,\nu).
\end{equation}
Thus, closeness of two source-state distributions in $W_p$ implies
controlled closeness of their evolved distributions under the learned
Stochastic Latent Twin.

\subsection{Forward and backward conditional evolution}
\label{subsec:forward-backward}

A useful consequence of the conditional-law formulation is that the Stochastic Latent Twin framework does not intrinsically require a forward time-stepping formulation. Since the approximation theory above is formulated in terms of conditional laws rather than forward propagation, it applies naturally to both forward prediction and backward inference, connecting the latter to Bayesian inference and uncertainty quantification \citep{kaipio2005statistical,stuart2010inverse}.

For $s<t$, the forward and backward conditional laws are
\[
\Phi_{s,t}(x) = \operatorname{Dist}(X_t\mid X_s=x), \qquad \Phi_{t,s}(y) = \operatorname{Dist}(X_s\mid X_t=y).
\]
The backward conditional law is defined for almost every $y$ with respect to the law of $X_t$.
The backward conditional law is not obtained by simply inverting the forward solution operator. Rather, it characterizes the distribution of earlier states consistent with a later observation. Which earlier states are consistent depends on where the process was likely to be at time $s$, and not only on the forward transition. When transition and marginal densities exist, with $\rho_s$ and $\rho_t$ denoting the marginal densities of $X_s$ and $X_t$, respectively, Bayes' formula gives $ p(x_s\mid x_t) = \tfrac{p(x_t\mid x_s)\rho_s(x_s)}{\rho_t(x_t)}$, showing that backward inference depends on both the forward transition law and the marginal distributions at the conditioning times.

Backward evolution is therefore a conditional inference problem rather than negative-time integration of the forward SDE. Under suitable regularity assumptions, a time-reversed diffusion may itself admit an SDE representation. However, the reverse-time dynamics are determined by the backward conditional law and generally differ from the original forward-time dynamics \citep{haussmann1986time}.

\subsection{Physics conformity}
\label{subsec:physics}

For a deterministic solution operator, physics conformity can be expressed directly in terms of the predicted state evolution \citep{chung2027physics}. In the stochastic setting, the prediction is instead probability-valued, raising the question of how physics conformity should be defined for such stochastic evolution.

The appropriate notion depends on the physical property under consideration. Conservation laws require physical quantities to remain unchanged, whereas dissipative laws impose one-sided constraints on their evolution. In a stochastic system, such properties may hold for every realization or only on average. Other physical properties, such as positivity, or boundedness, constrain the set of admissible states and may be expressed through the support of the predicted distribution; see, e.g., \citep{oksendal2003stochastic,daprato2014stochastic}.

Let $\mathcal C:\mathcal X\rightarrow\mathbb R^q$ denote a collection of physical quantities or observables that characterize the physical structure of the dynamics, such as mass, energy, or momentum.

\begin{definition}[Distributional conformity]
\label{def:distributional-physics-conformity}
Let $\Phi_{s,t}(x_s) = \operatorname{Dist}(X_t\mid X_s=x_s)$ denote the stochastic solution operator, and let $\mathcal C$ represent physical quantities of interest. The stochastic evolution is \emph{distributionally conforming} with respect to $\mathcal C$ if
\begin{equation}\label{eq:distributional-conformity}
    \mathcal C_\#\Phi_{s,t}(x_s) = \delta_{\mathcal C(x_s)},
\end{equation}
where $\delta_{\mathcal C(x_s)}$ denotes the Dirac measure concentrated at the value $\mathcal C(x_s)$.
\end{definition}

The pushforward measure describes the distribution of the physical quantities under the stochastic evolution. Thus, \Cref{eq:distributional-conformity} requires this distribution to remain a Dirac measure concentrated at $\mathcal C(x_s)$, expressing exact preservation of the physical quantities under the stochastic evolution.

Physical laws may also take the form of inequalities rather than equalities. In particular, dissipative quantities such as energy may decrease along the evolution rather than remain constant. This leads to the corresponding distributional inequality. For clarity, we formulate dissipation for a single scalar physical quantity; multiple inequality constraints can be treated componentwise in the same way.

\begin{definition}[Distributional dissipation] \label{def:distributional-dissipation}
Let $\mathcal C:\mathcal X\rightarrow\mathbb R$ denote a dissipative physical quantity. The stochastic evolution is \emph{distributionally dissipative} with respect to $\mathcal C$ if, for every $x_s\in\mathcal X$ and every $(s,t)\in\mathcal T$ with $s\leq t$,
\begin{equation}\label{eq:distributional-dissipation}
    \operatorname{supp}\left( \mathcal C_\#\Phi_{s,t}(x_s) \right) \subseteq (-\infty,\mathcal C(x_s)].
\end{equation}
\end{definition}
The support condition in \Cref{eq:distributional-dissipation} requires the dissipative quantity not to exceed its value at $x_s$, i.e., $\mathcal C(X_t)\leq\mathcal C(x_s)$ almost surely given $X_s=x_s$. Thus, conservation concentrates the pushed-forward transition law at $\mathcal C(x_s)$, whereas dissipation restricts its support to values not exceeding $\mathcal C(x_s)$.

If $\Phi_{s,t}(x_s)=\delta_{x_t}$, these conditions reduce to $\mathcal C(x_t)=\mathcal C(x_s)$ and $\mathcal C(x_t)\leq\mathcal C(x_s)$, respectively, recovering the deterministic physics-conformity conditions.

\begin{remark}[Alternative notions of stochastic physics conformity] \label{rem:alternative-physics-conformity}
The distributional conditions above impose conservation or dissipation realization-wise. In other settings, weaker notions may be appropriate. For example, a conservation law may hold only in expectation, $\mathbb E\left[ \mathcal C(X_t)\mid X_s=x_s \right] = \mathcal C(x_s)$, with the analogous inequality for dissipative quantities. Other physical constraints can be imposed through support restrictions. For example, positivity requires the predicted distribution to be supported only on nonnegative states. Realization-wise conservation or dissipation implies the corresponding condition in expectation, but the converse need not hold.
\end{remark}

\paragraph{Conformity inherited from stochastic solution-operator approximation.}

For the exact dynamics satisfying the distributional conformity condition, the deviation of a learned stochastic solution operator $\Psi_{s,t}$ from this
physical property can be quantified by the conformity defect
\begin{equation}\label{eq:physics-defect}
    \mathfrak C_p(x_s;s,t) := W_p\left( \mathcal C_\#\Psi_{s,t}(x_s), \delta_{\mathcal C(x_s)} \right).
\end{equation}
The following result shows that this defect is controlled by the approximation
error of the stochastic solution operator.

\begin{corollary}[A posteriori physics conformity from solution-law approximation] \label{cor:physics-from-approx}
Assume that $\mathcal C:\mathcal X\rightarrow\mathbb R^q$ is $L_{\mathcal C}$-Lipschitz and that $\Phi$ satisfies \cref{eq:distributional-conformity}. Under the assumptions of \Cref{thm:slt-approximation}, for every $x_s\in\mathcal X$ and $(s,t)\in\mathcal T$,
\begin{equation}
\label{eq:physics-from-approx}
    \mathfrak C_p(x_s;s,t) \leq L_{\mathcal C} \left( L_d\varepsilon_{\rm map}(x_s;s,t) + \varepsilon_{\rm ae}(x_s;s,t) \right).
\end{equation}
\end{corollary}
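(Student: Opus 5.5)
The argument rewrites the Dirac target using the conformity hypothesis on $\Phi$ and then pulls the Lipschitz map $\mathcal C$ out of the Wasserstein distance. Fix $x_s\in\mathcal X$ and $(s,t)\in\mathcal T$. Because $\Phi$ satisfies \cref{eq:distributional-conformity}, we have $\delta_{\mathcal C(x_s)}=\mathcal C_\#\Phi_{s,t}(x_s)$. Hence the conformity defect in \cref{eq:physics-defect} can be written as
\[
    \mathfrak C_p(x_s;s,t)=W_p\left(\mathcal C_\#\Psi_{s,t}(x_s),\,\mathcal C_\#\Phi_{s,t}(x_s)\right).
\]
This reduces the claim to a pushforward stability estimate followed by the end-to-end approximation bound.

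Next, apply \Cref{lem:decoder-pushforward}. It is stated for the decoder, but its proof only uses that the pushed-forward map is Lipschitz between normed spaces. So we apply it with $d$ replaced by $\mathcal C:\mathcal X\to\mathbb R^q$, the spaces replaced by $\mathcal X$ and $\mathbb R^q$, and $L_d$ replaced by $L_{\mathcal C}$. This gives
\[
    \mathfrak C_p(x_s;s,t)\le L_{\mathcal C}\,W_p\left(\Psi_{s,t}(x_s),\Phi_{s,t}(x_s)\right).
\]
Before doing so we must check that both measures lie in $\mathcal P_p(\mathcal X)$, since the lemma requires this:
\begin{itemize}
\item $\Phi_{s,t}(x_s)\in\mathcal P_p(\mathcal X)$ by \Cref{ass:stochastic-dynamics}.
\item $\Psi_{s,t}(x_s)=d_\#m(e(x_s),s,t)\in\mathcal P_p(\mathcal X)$, because $m(e(x_s),s,t)\in\mathcal P_p(\mathcal Z)$ by \Cref{ass:latent-map} and Lipschitz pushforwards preserve finite $p$-th moments.
\end{itemize}
The images under $\mathcal C$ then also have finite $p$-th moments, for the same reason.

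Finally, \Cref{thm:slt-approximation} bounds $W_p(\Psi_{s,t}(x_s),\Phi_{s,t}(x_s))$ by $L_d\varepsilon_{\rm map}(x_s;s,t)+\varepsilon_{\rm ae}(x_s;s,t)$. Substituting this yields \cref{eq:physics-from-approx}.

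I expect no real obstacle. The only points needing care are the reuse of \Cref{lem:decoder-pushforward} for a map into $\mathbb R^q$ rather than the decoder, and the moment bookkeeping that makes every Wasserstein distance above well defined.
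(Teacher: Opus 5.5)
Your proposal is correct and follows the paper's proof: you replace $\delta_{\mathcal C(x_s)}$ by $\mathcal C_\#\Phi_{s,t}(x_s)$ via \cref{eq:distributional-conformity}, apply the Lipschitz-pushforward argument of \Cref{lem:decoder-pushforward} with $\mathcal C$ in place of $d$, and conclude with \Cref{thm:slt-approximation}. Your check that $\Psi_{s,t}(x_s)$ and $\Phi_{s,t}(x_s)$ lie in $\mathcal P_p(\mathcal X)$ is a harmless addition that the paper leaves implicit.
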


\begin{proof}
Since $\mathcal C$ is $L_{\mathcal C}$-Lipschitz,
\[
    \mathfrak C_p(x_s;s,t)
    =
    W_p\left(
        \mathcal C_\#\Psi_{s,t}(x_s),
        \mathcal C_\#\Phi_{s,t}(x_s)
    \right)
    \leq
    L_{\mathcal C}
    W_p\left(
        \Psi_{s,t}(x_s),
        \Phi_{s,t}(x_s)
    \right),
\]
and the result follows from \Cref{thm:slt-approximation}.
\end{proof}

This result establishes that physical conformity follows naturally as a consequence of accurate stochastic solution-law approximation. However, such a posteriori conformity does not explicitly exploit known physical structure. A stronger approach is to incorporate physical constraints directly into the latent representation and evolution, providing an \emph{a priori} mechanism in which conformity is enforced by construction \citep{chung2027physics}.

\paragraph{Physics-conforming latent evolution.}

The construction of a physics-conforming stochastic latent evolution requires two ingredients. First, we introduce a latent representation of the physical quantity to be preserved or dissipated by the stochastic evolution. Second, this latent quantity must be compatible with its physical-space counterpart.

The conservation and dissipation conditions introduced above can likewise be imposed directly on the latent stochastic evolution. Let $\mathcal C_{\mathcal Z}:\mathcal Z\rightarrow\mathbb R^q$ denote the latent physical quantity corresponding to $\mathcal C:\mathcal X\rightarrow\mathbb R^q$. For conservation, exact latent distributional conformity requires
\begin{equation}\label{eq:latent-distributional-conformity}
    (\mathcal C_{\mathcal Z})_\# m(z_s,s,t) = \delta_{\mathcal C_{\mathcal Z}(z_s)}.
\end{equation}
Thus, although the latent state may evolve stochastically, the physical quantity represented by $\mathcal C_{\mathcal Z}$ remains concentrated at $\mathcal C_{\mathcal Z}(z_s)$. More generally, approximate latent conformity requires
\begin{equation}\label{eq:latent-as-conformity}
    W_p\left( (\mathcal C_{\mathcal Z})_\# m(z_s,s,t), \delta_{\mathcal C_{\mathcal Z}(z_s)} \right) \leq \varepsilon_{\rm latent}.
\end{equation}
Unlike the a posteriori conformity obtained from solution-law approximation, \Cref{eq:latent-as-conformity} imposes physical structure directly on the learned latent evolution.

To ensure that the latent physical quantity represents the same physical quantity after decoding, exact compatibility would require $\mathcal C(d(z)) = \mathcal C_{\mathcal Z}(z)$. More generally, we allow an approximate pullback relation,
\begin{equation}\label{eq:stochastic-pullback}
\sup_{z\in\mathcal Z}
\left\|
\mathcal C(d(z))-\mathcal C_{\mathcal Z}(z)
\right\|
\leq
\varepsilon_{\rm pull}.
\end{equation}
This condition anchors the latent physical quantity to the physical domain. Consequently, preservation of $\mathcal C_{\mathcal Z}$ under the latent stochastic evolution translates into preservation of the physical quantity $\mathcal C$ after decoding.

\begin{theorem}[Stochastic physics-conformity transfer] \label{thm:stochastic-physics-transfer}
Suppose \Cref{ass:latent-map} holds and that \Cref{eq:latent-as-conformity,eq:stochastic-pullback} hold. If $\mathcal C$ is $L_{\mathcal C}$-Lipschitz, then
\begin{equation}
\label{eq:physics-transfer-bound}
    \mathfrak C_p(x_s;s,t) \leq \varepsilon_{\rm latent} + 2\varepsilon_{\rm pull} + L_{\mathcal C}\varepsilon_{\rm ae}(x_s;s,s).
\end{equation}
\end{theorem}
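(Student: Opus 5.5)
The plan is to compare the decoded predicted law of the physical quantity with $\delta_{\mathcal C(x_s)}$ through a chain of intermediate measures on $\mathbb R^q$, bounding each link by one of the three error quantities. Write $z_s=e(x_s)$ and $\mu:=m(z_s,s,t)$, which lies in $\mathcal P_p(\mathcal Z)$ by \Cref{ass:latent-map}. By \Cref{def:stochastic-latent-twin} and the composition rule for pushforwards, $\mathcal C_\#\Psi_{s,t}(x_s)=(\mathcal C\circ d)_\#\mu$. I would then use the triangle inequality for $W_p$ along the chain
\[
(\mathcal C\circ d)_\#\mu \;\to\; (\mathcal C_{\mathcal Z})_\#\mu \;\to\; \delta_{\mathcal C_{\mathcal Z}(z_s)} \;\to\; \delta_{\mathcal C(d(z_s))} \;\to\; \delta_{\mathcal C(x_s)} .
\]

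\emph{First link.} Let $Z\sim\mu$. The pair $\bigl(\mathcal C(d(Z)),\mathcal C_{\mathcal Z}(Z)\bigr)$ is a coupling of the two pushforwards, exactly as in the proof of \Cref{lem:decoder-pushforward}. Hence
\[
W_p^p\le \mathbb E\|\mathcal C(d(Z))-\mathcal C_{\mathcal Z}(Z)\|^p\le \varepsilon_{\rm pull}^p
\]
by the uniform bound \cref{eq:stochastic-pullback}. Finiteness of the $p$-th moments follows from $\mathcal C\circ d$ being Lipschitz, together with the pullback bound, which gives finiteness for $\mathcal C_{\mathcal Z}$ as well.

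\emph{Second link.} This is exactly \cref{eq:latent-as-conformity} evaluated at $z_s=e(x_s)$, giving at most $\varepsilon_{\rm latent}$.

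\emph{Third and fourth links.} Both are distances between Dirac measures, and $W_p(\delta_a,\delta_b)=\|a-b\|$ because the only coupling is the product. The third link is $\|\mathcal C_{\mathcal Z}(z_s)-\mathcal C(d(z_s))\|\le\varepsilon_{\rm pull}$, again by \cref{eq:stochastic-pullback}. This is the source of the factor $2$ in front of $\varepsilon_{\rm pull}$. The fourth link is $\|\mathcal C(d(e(x_s)))-\mathcal C(x_s)\|\le L_{\mathcal C}\|d(e(x_s))-x_s\|_{\mathcal X}$.

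The remaining step, which I expect to be the only delicate point, is identifying this last quantity with $\varepsilon_{\rm ae}(x_s;s,s)$. I would use the natural convention that the stochastic solution operator at equal times is the identity transition, $\Phi_{s,s}(x_s)=\delta_{x_s}$ (with $(s,s)\in\mathcal T$). Then \Cref{def:stochastic-ae-error} gives
\[
\varepsilon_{\rm ae}(x_s;s,s)=\Bigl(\textstyle\int\|d(e(x))-x\|^p\,\delta_{x_s}(\mathrm dx)\Bigr)^{1/p}=\|d(e(x_s))-x_s\|_{\mathcal X}.
\]
I would state this convention explicitly in the proof. If it is not granted, the bound should simply be read with $\|d(e(x_s))-x_s\|$ in place of $\varepsilon_{\rm ae}(x_s;s,s)$.

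Summing the four links yields $\varepsilon_{\rm latent}+2\varepsilon_{\rm pull}+L_{\mathcal C}\varepsilon_{\rm ae}(x_s;s,s)$, which is \cref{eq:physics-transfer-bound}. Note that the argument uses neither the true dynamics' conformity nor \Cref{thm:slt-approximation}; it is a purely a priori transfer bound.
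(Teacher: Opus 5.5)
Your proof is correct and is essentially the paper's. The paper writes $\mathfrak C_p(x_s;s,t)$ as the $L^p$ norm of $\mathcal C(d(Z_t))-\mathcal C(x_s)$ with $Z_t\sim m(e(x_s),s,t)$, inserts the same intermediate points $\mathcal C_{\mathcal Z}(Z_t)$, $\mathcal C_{\mathcal Z}(z_s)$, $\mathcal C(d(z_s))$, and combines the four terms by Minkowski's inequality, using $\Phi_{s,s}(x_s)=\delta_{x_s}$ as you do; this is your chain of $W_p$ triangle inequalities, since all your couplings come from the same $Z$.

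One thing to tidy: your finite-moment justification uses Lipschitz continuity of $d$, which this theorem does not assume. Finiteness follows instead from \cref{eq:latent-as-conformity} together with \cref{eq:stochastic-pullback}, and is not needed at all if $W_p$ is treated as an extended metric.
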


\begin{proof}
Let $z_s=e(x_s)$ and $Z_t\sim m(z_s,s,t)$. By the definition of the Stochastic Latent Twin, $ \operatorname{Dist}(d(Z_t)) = d_\#m(z_s,s,t) = \Psi_{s,t}(x_s)$, and consequently $ \operatorname{Dist}(\mathcal C(d(Z_t))) = \mathcal C_\#\Psi_{s,t}(x_s)$. It follows that
\[
\begin{aligned}
    \mathfrak C_p(x_s;s,t) = W_p\left( \mathcal C_\#\Psi_{s,t}(x_s), \delta_{\mathcal C(x_s)} \right) 
    = W_p\left( \operatorname{Dist}(\mathcal C(d(Z_t))), \delta_{\mathcal C(x_s)} \right) 
    = \left( \mathbb E\left[ \left\| \mathcal C(d(Z_t)) -\mathcal C(x_s) \right\|^p \right] \right)^{1/p}.
\end{aligned}
\]
The final equality follows from the explicit form of the Wasserstein distance to a Dirac measure. To relate this physical-space discrepancy to the latent conformity condition, we insert the latent observable at $Z_t$ and $z_s$, together with its decoded value at $z_s=e(x_s)$. The triangle inequality gives

\begin{equation}
\begin{aligned}
    \left\| \mathcal C(d(Z_t))-\mathcal C(x_s) \right\|
    &\leq \left\| \mathcal C(d(Z_t))-\mathcal C_{\mathcal Z}(Z_t) \right\|
    + \left\| \mathcal C_{\mathcal Z}(Z_t)-\mathcal C_{\mathcal Z}(z_s) \right\| \\[1ex]
    &\qquad\qquad+\left\| \mathcal C_{\mathcal Z}(z_s)-\mathcal C(d(z_s)) \right\|
    + \left\| \mathcal C(d(e(x_s)))-\mathcal C(x_s) \right\|.
\end{aligned}
\end{equation}
The first and third terms are bounded by $\varepsilon_{\rm pull}$ through \Cref{eq:stochastic-pullback}. Since $Z_t\sim m(z_s,s,t)$, the Wasserstein distance to the Dirac measure in \Cref{eq:latent-as-conformity} can be written equivalently as
\[
    \left( \mathbb E\left[ \left\| \mathcal C_{\mathcal Z}(Z_t) - \mathcal C_{\mathcal Z}(z_s) \right\|^p \right] \right)^{1/p}
    \leq
    \varepsilon_{\rm latent}.
\]

For the identity transition, $\Phi_{s,s}(x_s)=\delta_{x_s}$, so that \Cref{def:stochastic-ae-error} gives
\[
    \varepsilon_{\rm ae}(x_s;s,s) = \|d(e(x_s))-x_s\|_{\mathcal X}.
\]
Hence, by the Lipschitz continuity of $\mathcal C$,
\[
    \left\|
        \mathcal C(d(e(x_s)))-\mathcal C(x_s)
    \right\|
    \leq
    L_{\mathcal C}\varepsilon_{\rm ae}(x_s;s,s).
\]
Since $\mathfrak C_p$ is defined through an $L^p$ norm, we use Minkowski's inequality to combine the four contributions in the preceding decomposition. Thus,
\[
    \mathfrak C_p(x_s;s,t) \leq \varepsilon_{\rm latent} +2\varepsilon_{\rm pull} +L_{\mathcal C}\varepsilon_{\rm ae}(x_s;s,s),
\]
which proves the result.
\end{proof}

When latent conformity is enforced exactly by the architecture, the latent conformity error disappears from the transfer bound.

\begin{corollary}[Physics conformity by construction]
\label{cor:physics-conformity-construction}
Under the assumptions of \Cref{thm:stochastic-physics-transfer}, suppose that the latent stochastic evolution is distributionally conforming by construction, so that $\varepsilon_{\rm latent}=0$. Then
\begin{equation}\label{eq:physics-conformity-construction}
    \mathfrak C_p(x_s;s,t) \leq 2\varepsilon_{\rm pull} + L_{\mathcal C}\varepsilon_{\rm ae}(x_s;s,s).
\end{equation}
If, in addition, the latent and physical quantities are exactly compatible, so that $\varepsilon_{\rm pull}=0$, then
\begin{equation}
    \mathfrak C_p(x_s;s,t) \leq L_{\mathcal C}\varepsilon_{\rm ae}(x_s;s,s).
\end{equation}
In particular, exact reconstruction implies exact distributional physics conformity.
\end{corollary}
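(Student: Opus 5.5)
This is a direct specialization of \Cref{thm:stochastic-physics-transfer}, so the plan is to substitute the additional hypotheses into \eqref{eq:physics-transfer-bound} rather than build a new argument. The only work is to check that those hypotheses are exactly what the transfer bound needs and that nothing beyond the zero values is used.

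\emph{First step.} If the latent evolution is distributionally conforming by construction, then \eqref{eq:latent-distributional-conformity} holds for every $z_s$. In particular it holds at $z_s=e(x_s)$, so $(\mathcal C_{\mathcal Z})_\# m(z_s,s,t)=\delta_{\mathcal C_{\mathcal Z}(z_s)}$. The Wasserstein distance in \eqref{eq:latent-as-conformity} then vanishes, and \eqref{eq:latent-as-conformity} is satisfied with $\varepsilon_{\rm latent}=0$. Inserting this into \eqref{eq:physics-transfer-bound} gives \eqref{eq:physics-conformity-construction}.

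\emph{Second step.} If in addition $\mathcal C(d(z))=\mathcal C_{\mathcal Z}(z)$ for all $z$, then \eqref{eq:stochastic-pullback} holds with $\varepsilon_{\rm pull}=0$, and the bound reduces to $L_{\mathcal C}\varepsilon_{\rm ae}(x_s;s,s)$.

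\emph{Final claim.} Exact reconstruction means $d(e(x_s))=x_s$, so \Cref{def:stochastic-ae-error}, evaluated with $\Phi_{s,s}(x_s)=\delta_{x_s}$ as in the proof of \Cref{thm:stochastic-physics-transfer}, gives $\varepsilon_{\rm ae}(x_s;s,s)=0$. Hence $\mathfrak C_p(x_s;s,t)=0$, that is, $W_p(\mathcal C_\#\Psi_{s,t}(x_s),\delta_{\mathcal C(x_s)})=0$. Since $W_p$ is a metric on $\mathcal P_p$, this yields $\mathcal C_\#\Psi_{s,t}(x_s)=\delta_{\mathcal C(x_s)}$, which is \Cref{def:distributional-physics-conformity} for the learned operator $\Psi$.

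\emph{Main point to check.} Exact reconstruction is only needed at the conditioning state $x_s$, via the identity transition $\Phi_{s,s}(x_s)=\delta_{x_s}$. So no reconstruction accuracy is required on the support of $\Phi_{s,t}(x_s)$, in contrast to \Cref{cor:physics-from-approx}. I would also verify that $\mathcal C_\#\Psi_{s,t}(x_s)$ has finite $p$-th moment, so the $W_p$ identification is legitimate. This follows because $\mathcal C$ and $d$ are Lipschitz and $m(e(x_s),s,t)\in\mathcal P_p(\mathcal Z)$ by \Cref{ass:latent-map}, so the pushforward through $\mathcal C\circ d$ stays in $\mathcal P_p$. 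Apart from this check, the proof is pure substitution.
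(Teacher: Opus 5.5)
Your proposal is correct and matches the paper, which obtains the corollary by setting $\varepsilon_{\rm latent}=0$, and then $\varepsilon_{\rm pull}=0$, in the bound of \Cref{thm:stochastic-physics-transfer}, and reads off the exact-reconstruction case from $\varepsilon_{\rm ae}(x_s;s,s)=\|d(e(x_s))-x_s\|_{\mathcal X}$. Your finite-moment side check invokes Lipschitz continuity of $d$, which is not among the theorem's hypotheses, but the check is unnecessary: the finite right-hand side of the transfer bound already gives $\mathcal C_\#\Psi_{s,t}(x_s)\in\mathcal P_p$, and $\mathfrak C_p=0$ directly forces $\mathcal C(d(Z_t))=\mathcal C(x_s)$ almost surely.
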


The distinction between the a posteriori and physics-conforming estimates is particularly apparent in this case. Without explicit physical structure, conformity is inherited from the accuracy of the learned stochastic solution operator and therefore depends on $\varepsilon_{\rm map}$. When latent conformity is enforced by construction, this dependence disappears: physical conformity is controlled solely by the compatibility of the latent and physical quantities and by the reconstruction error. Thus, the learned stochastic evolution may retain approximation error while satisfying the prescribed physical law exactly in latent space.

\paragraph{Dissipative physical laws.}

The same latent-to-physical construction extends to the dissipative inequality in \Cref{def:distributional-dissipation}. For a scalar dissipative quantity, let $(a)_+:=\max\{a,0\}$. With $z_s=e(x_s)$ and $Z_t\sim m(z_s,s,t)$, define the physical and latent  dissipation defects by
\begin{equation}
    \mathfrak D_p(x_s;s,t) :=
\left(
\mathbb E\left[
\left(
\mathcal C(d(Z_t))
-\mathcal C(x_s)
\right)_+^p
\right]
\right)^{1/p}, \  \text{and} \
\mathfrak D_p^{\mathcal Z}(z_s;s,t)
:=
\left(
\mathbb E\left[
\left(
\mathcal C_{\mathcal Z}(Z_t)
-\mathcal C_{\mathcal Z}(z_s)
\right)_+^p
\right]
\right)^{1/p}.
\end{equation}
The latent evolution is approximately dissipative if $\mathfrak D_p^{\mathcal Z}(z_s;s,t) \leq \varepsilon_{\rm latent}^{\rm diss}$, and exactly dissipative when $\mathfrak D_p^{\mathcal Z}(z_s;s,t)=0$.

\begin{corollary}[Stochastic dissipation transfer]
\label{cor:stochastic-dissipation-transfer}
Suppose that $\mathcal C:\mathcal X\rightarrow\mathbb R$ is $L_{\mathcal C}$-Lipschitz, the pullback condition \Cref{eq:stochastic-pullback} holds, and, with $z_s=e(x_s)$, $\mathfrak D_p^{\mathcal Z}(z_s;s,t) \leq \varepsilon_{\rm latent}^{\rm diss}$. Then
\begin{equation}\label{eq:dissipation-transfer-bound}
    \mathfrak D_p(x_s;s,t)
    \leq
    \varepsilon_{\rm latent}^{\rm diss}
    +
    2\varepsilon_{\rm pull}
    +
    L_{\mathcal C}\varepsilon_{\rm ae}(x_s;s,s).
\end{equation}
\end{corollary}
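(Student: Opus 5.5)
The argument follows the proof of \Cref{thm:stochastic-physics-transfer}, with the absolute value replaced by the positive part. Fix $x_s$, set $z_s=e(x_s)$ and let $Z_t\sim m(z_s,s,t)$. I would first establish the pointwise scalar inequality
\[
    \bigl(\mathcal C(d(Z_t))-\mathcal C(x_s)\bigr)_+
    \leq
    \bigl|\mathcal C(d(Z_t))-\mathcal C_{\mathcal Z}(Z_t)\bigr|
    + \bigl(\mathcal C_{\mathcal Z}(Z_t)-\mathcal C_{\mathcal Z}(z_s)\bigr)_+
    + \bigl|\mathcal C_{\mathcal Z}(z_s)-\mathcal C(d(z_s))\bigr|
    + \bigl|\mathcal C(d(e(x_s)))-\mathcal C(x_s)\bigr|.
\]
To get it, I would telescope $\mathcal C(d(Z_t))-\mathcal C(x_s)$ into the same four differences used in \Cref{thm:stochastic-physics-transfer}, with the intermediate points $\mathcal C_{\mathcal Z}(Z_t)$, $\mathcal C_{\mathcal Z}(z_s)$ and $\mathcal C(d(e(x_s)))$. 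Two elementary facts then give the bound: $(a+b)_+\leq a_+ + b_+$, and $a_+\leq|a|$.

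The main subtlety is keeping the positive part only on the latent increment, which is the one term carrying the one-sided (dissipative) information. The other three terms are bounded crudely by absolute values. Applying subadditivity of $(\cdot)_+$ to a sum of four terms is immediate by induction, so this step is routine once the decomposition is chosen correctly.

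Next, I would take the $L^p(\Omega)$ norm of both sides and apply Minkowski's inequality. The four contributions are then bounded as follows:
\begin{itemize}
    \item The first and third terms are at most $\varepsilon_{\rm pull}$ each, by \Cref{eq:stochastic-pullback}, since that bound holds uniformly in $z$.
    \item The second term has $L^p$ norm exactly $\mathfrak D_p^{\mathcal Z}(z_s;s,t)$, which is at most $\varepsilon^{\rm diss}_{\rm latent}$ by assumption.
    \item The fourth term is deterministic. Using $\Phi_{s,s}(x_s)=\delta_{x_s}$, so that $\varepsilon_{\rm ae}(x_s;s,s)=\|d(e(x_s))-x_s\|_{\mathcal X}$, together with the Lipschitz continuity of $\mathcal C$, it is at most $L_{\mathcal C}\varepsilon_{\rm ae}(x_s;s,s)$.
\end{itemize}
Because the left-hand side is nonnegative, monotonicity of the $L^p$ norm applies, and summing the four bounds yields \Cref{eq:dissipation-transfer-bound}.

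Finally, I would check integrability so that the expectations are finite. \Cref{ass:latent-map} gives finite $p$-th moments of $Z_t$, and the Lipschitz maps $d$ and $\mathcal C$ preserve them. If that assumption were not granted, the bound would still hold trivially whenever the right-hand side is finite, since every term on the right is then controlled by the quantities above.
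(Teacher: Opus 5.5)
Your proposal is correct and follows essentially the same route as the paper: the four-term decomposition from \Cref{thm:stochastic-physics-transfer}, the inequality $(a+b)_+\leq a_+ + |b|$ with the positive part kept on the latent increment (you derive it from subadditivity of $(\cdot)_+$ together with $a_+\leq|a|$), and then Minkowski's inequality in $L^p$. Your integrability remark is a harmless addition. No extra assumption is needed there, since the hypotheses already make every term on the right-hand side have finite $L^p$ norm.
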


\begin{proof}
Using the same decomposition as in \Cref{thm:stochastic-physics-transfer} together with $(a+b)_+\leq a_+ + |b|$, the physical dissipation violation is bounded by the latent dissipation violation, the two pullback errors, and the reconstruction error. Taking the $L^p$ norm and using the assumed latent dissipation bound, \Cref{eq:stochastic-pullback}, and $\varepsilon_{\rm ae}(x_s;s,s) =\|d(e(x_s))-x_s\|_{\mathcal X}$ gives the result.
\end{proof}

Analogous latent-to-physical arguments apply to the expectation-level and admissibility constraints discussed in \Cref{rem:alternative-physics-conformity}.

The preceding conformity defects also provide natural regularization terms for training; these are given in \Cref{sec:numerics}.

\subsection{Linear stochastic systems}
\label{subsec:linear-stochastic}

The linear SDE provides a setting in which the approximation properties of a
Stochastic Latent Twin can be characterized explicitly. Consider
\begin{equation} \label{eq:linear-sde}
    \mathrm dX_t = AX_t\,\mathrm dt + G\,\mathrm dW_t,
\end{equation}
where $A\in\mathbb R^{n_x\times n_x}$ and $G\in\mathbb R^{n_x\times n_w}$. For $t\geq s$, the stochastic solution operator is
\begin{equation} \label{eq:linear-transition}
    \Phi_{s,t}(x_s) = \mathcal N\!\left( M_{s,t}x_s, Q_{s,t} \right),
    \quad \text{with }
    M_{s,t} = \mathrm e^{(t-s)A}
    \text{ and } 
    Q_{s,t} = \int_s^t \mathrm e^{(t-r)A} GG^\top \mathrm e^{(t-r)A^\top} \,\mathrm dr .
\end{equation}

We consider a linear-Gaussian Stochastic Latent Twin with trainable linear encoder and decoder $e(x)=Ex$, $d(z)=Dz$, and Gaussian latent transition law $m(z_s,s,t) = \mathcal N\!\left( M_{s,t}^{\mathcal Z}z_s, Q_{s,t}^{\mathcal Z} \right)$. The corresponding conditional transition distribution in physical space is therefore
\begin{equation} \label{eq:linear-slt-transition}
    \Psi_{s,t}(x_s) = \mathcal N\!\left( DM_{s,t}^{\mathcal Z}Ex_s, DQ_{s,t}^{\mathcal Z}D^\top \right).
\end{equation}
An immediate consequence of \cref{eq:linear-slt-transition} is that a Gaussian latent transition law followed by a linear decoder produces a Gaussian physical transition distribution, independently of the latent dimension $n_z$. Increasing $n_z$ can therefore improve the representation of a Gaussian transition law, but it cannot enlarge the predicted distributional family beyond Gaussian distributions. This distinction becomes important in the Cox--Ingersoll--Ross (CIR) experiment in \cref{subsec:cir}, where the exact transition law is non-Gaussian.

Here, $E$ and $D$ define the trainable encoder--decoder representation used across all considered source and target times, whereas $M_{s,t}^{\mathcal Z}$ and $Q_{s,t}^{\mathcal Z}$ determine the mean and covariance of the learned latent transition law for each time pair $(s,t)$.

At full latent dimension, $n_z=n_x$, the exact stochastic solution operator belongs to this model class. In particular, the choice $E=D=I$, $ M_{s,t}^{\mathcal Z}=M_{s,t}$, and $Q_{s,t}^{\mathcal Z}=Q_{s,t}$ gives
\[ 
    \Psi_{s,t}(x_s) = \Phi_{s,t}(x_s) \qquad \text{for every } s\leq t.
\]
Thus, a Stochastic Latent Twin can exactly recover the stochastic solution operator at full latent dimension. The preceding choice provides one explicit exact realization; the representation itself need not be unique.

For $n_z<n_x$, the exact full-dimensional realization is no longer available, and the approximation problem becomes genuinely more involved. Nevertheless, the reduced problem admits an important analytical simplification: for any candidate encoder $E$ and decoder $D$, the Wasserstein-optimal latent Gaussian transition law can be obtained explicitly, so that only the encoder and decoder remain to be optimized, as in a variable-projection reduction \citep{chung2023variable,golub1973differentiation}. The complete derivation is given in \cref{app:linear-low-rank}.

Because the encoder and decoder must represent all time pairs simultaneously, their optimization is generally not available in closed form. To obtain an analytical reference for the approximation achievable at a given latent dimension, we therefore consider one fixed source time $s$ and target time $t$ and allow the encoder and decoder to be optimized specifically for that time pair. Let $C_s:=\mathbb E[X_sX_s^\top]$ denote the second-moment matrix of the source state. For this reference problem, the remaining optimization also admits a closed-form solution. In particular, the optimal decoded $n_z$-dimensional space is determined by the dominant eigenspace of
\begin{equation} \label{eq:linear-H}
    H_{s,t} = M_{s,t}C_sM_{s,t}^\top + Q_{s,t}.
\end{equation}
If $\eta_1\geq\cdots\geq\eta_{n_x}\geq0$ are the eigenvalues of $H_{s,t}$, then the minimum expected squared $2$-Wasserstein transition error is
\begin{equation} \label{eq:linear-transition-optimal-error}
    \mathcal E_{s,t}^{\star}(n_z) = \sum_{i>n_z}\eta_i.
\end{equation}

Hence, for each fixed time pair, the eigenvalue expression provides a closed-form lower reference value for the transition error attainable with latent dimension $n_z$. In contrast, a Stochastic Latent Twin trained across multiple source and target times uses one encoder--decoder representation for all of these transitions.

The analytical results above provide reference solutions for the numerical experiments in \cref{sec:linear_sde_experiment}, where we investigate how closely trained linear-Gaussian Stochastic Latent Twins approach these predicted approximation limits as the latent dimension and prediction interval are varied.

\section{Numerical experiments}
\label{sec:numerics}

We assume that training data are given as transition samples $\{(x_{s_i},x_{t_i},s_i,t_i)\}_{i=1}^N$. For an observed transition $(x_s,x_t,s,t)$, the encoder gives $z_s=e(x_s)$ and $z_t=e(x_t)$. Since $x_t$ is a realization of the conditional law $\Phi_{s,t}(x_s)$, the encoded target $z_t$ is a realization of $e_\#\Phi_{s,t}(x_s)$. Thus, observed transitions provide direct samples from the conditional latent law targeted by $m(z_s,s,t)$.

For notational clarity, the training objectives below are written in expectation form; in practice, all expectations are replaced by empirical averages over the available samples.

The Stochastic Latent Twin is trained by
\begin{equation} \label{eq:training-objective}
    \min_{\theta}\; \lambda_{\rm ae}\mathcal L_{\rm ae} + \lambda_{\rm pred}\mathcal L_{\rm pred} + \lambda_{\rm reg}\mathcal L_{\rm reg},
\end{equation}
where $\theta$ collects the parameters of the encoder, latent evolution map, and decoder, $\mathcal L_{\rm ae}$ targets the distribution-weighted reconstruction error introduced in \Cref{def:stochastic-ae-error}, $\mathcal L_{\rm pred}$ targets the conditional transition law, and $\mathcal L_{\rm reg}$ denotes optional auxiliary or structural regularization. Depending on the experiment, the latter may represent temporal or physics-conformity conditions or a problem-specific distribution-level objective.

The deterministic encoder--decoder representation is trained using the reconstruction loss
\begin{equation} \label{eq:ae-training-loss}
    \mathcal L_{\rm ae} = \mathbb E \left[ \left\| d(e(X_\tau))-X_\tau \right\|_{\mathcal X}^2 \right],
\end{equation}
where $X_\tau$ is a state drawn from the training transitions. This corresponds to the case $p = 2$ in \Cref{def:stochastic-ae-error} with squared error. This objective then averages the same squared reconstruction error that defines the distribution-weighted reconstruction error over the state--time samples used during training. For other $p$, the exponent $2$ is replaced by $p$.

\paragraph{Distributional prediction losses.}

Our theoretical analysis in \Cref{sec:slt-approximation} measures the approximation error between the learned and exact stochastic solution laws in Wasserstein distance. When the exact transition law is available, this motivates the squared $2$-Wasserstein objective
\[
    \mathcal L_{\rm pred} = \mathbb E_{X_s}\!\left[ W_2^2\!\left( \Psi_{s,t}(X_s), \Phi_{s,t}(X_s) \right) \right].
\]

We use direct Wasserstein objectives in the controlled linear-SDE experiment \Cref{sec:linear_sde_experiment} and for several of the CIR models \Cref{subsec:cir,app:cir-models}.

For empirical distributions in higher dimensions, direct computation of the Wasserstein distance can become expensive. A computationally attractive alternative is the sliced $2$-Wasserstein distance, which averages one-dimensional Wasserstein distances between projections of the two distributions onto directions on the unit sphere \citep{rowland2019orthogonal}: in one dimension, empirical $W_2$ can be computed by matching sorted samples, avoiding a multidimensional optimal-transport problem. For simplicity, we refer to this as the sliced-Wasserstein distance (SWD).   We use sliced-Wasserstein distance as an auxiliary distribution-matching objective for the one-shot generative endpoint in the Swiss-roll experiment (\Cref{subsec:swissroll,app:swissroll-training}). The sliced-Wasserstein distance is also the evaluation metric in the Swiss-roll, pendulum, and Burgers experiments.  Throughout, we estimate this metric directly from samples of respective distributions; the number of projection directions for each experiment is given in the corresponding appendix.

Estimation of the sliced-Wasserstein distance can be inaccurate with few samples per conditioning state. Such data are common in fully data-driven settings, where the data consist of observed transitions $(X_s, X_t)$, often with only a single realization $X_t$ for each conditioning state $X_s$. Proper scoring rules are well suited to this setting: they evaluate a predictive distribution against a single observed outcome, and their expected score is minimized by the true conditional distribution \citep{gneiting2007strictly}.

When the latent transition law admits a tractable density $p(\cdot\mid z_s,s,t)$, the logarithmic score gives the negative log-likelihood
\begin{equation} \label{eq:nll-loss}
    \mathcal L_{\rm pred} = -\mathbb E\!\left[ \log p(Z_t\mid Z_s,s,t) \right],
\end{equation}
where $Z_s=e(X_s)$ and $Z_t=e(X_t)$. For a Gaussian latent transition law, this specializes to
\begin{equation} \label{eq:gaussian-nll-loss}
    \mathcal L_{\rm pred} = \mathbb E\!\left[ \tfrac12 (Z_t-\mu)^\top\Sigma^{-1}(Z_t-\mu) +\tfrac12\log\det\Sigma \right],
\end{equation}
up to an additive constant, where $\mu=\mu(Z_s,s,t)$ and $\Sigma=\Sigma(Z_s,s,t)$. Likelihood-based training is used for the Gaussian-mixture and learned noncentral-$\chi^2$ CIR models; see
\Cref{subsec:cir,app:cir-models}.

When samples can be drawn from the predicted conditional law but its density is unavailable or inconvenient to evaluate, we use the energy score. For an observed transition $(X_s,X_t)$, let
\[
    X_t^{(1)},X_t^{(2)} \overset{\mathrm{i.i.d.}}{\sim} \Psi_{s,t}(X_s).
\]
The corresponding loss is
\begin{equation} \label{eq:energy-score-loss}
    \mathcal L_{\rm pred} = \mathbb E\!\left[ \left\|X_t^{(1)}-X_t\right\|_{\mathcal X} - \tfrac12 \left\|X_t^{(1)}-X_t^{(2)}\right\|_{\mathcal X} \right].
\end{equation}
The  first term rewards agreement with the observed target, while the second accounts for the spread of the predictive distribution. We use the physical-space energy score for the stochastic Lorenz, Swiss-roll, stochastic pendulum and Burgers equation experiments \Cref{subsec:lorenz,subsec:swissroll,subsec:pendulum,subsec:burgers}.

\paragraph{Additional structural regularization.}

Beyond these prediction objectives, $\mathcal L_{\rm reg}$ may promote additional structure in the learned evolution. For instance, it can enforce the identity condition and promote Chapman--Kolmogorov consistency \Cref{eq:latent-chapman-kolmogorov}; see also \citep{kiyohara2026neural}.

For experiments involving known physical structure, the conformity defects introduced in \cref{subsec:physics} provide corresponding regularization terms. With the latent physical quantity $\mathcal C_{\mathcal Z}:\mathcal Z\rightarrow\mathbb R^q$, conservation may be promoted through
\begin{equation} \label{eq:latent-conservation-loss}
    \mathcal L_{\rm con} = \mathbb E \left[ \left\| \mathcal C_{\mathcal Z}( Z_t) - \mathcal C_{\mathcal Z}(z_s) \right\|^p \right], \qquad Z_t\sim m(z_s,s,t).
\end{equation}
For a scalar dissipative observable, the corresponding one-sided regularization is
\begin{equation} \label{eq:latent-dissipation-loss}
    \mathcal L_{\rm diss} = \mathbb E \left[ \left( \mathcal C_{\mathcal Z}(Z_t) - \mathcal C_{\mathcal Z}(z_s) \right)_+^p \right].
\end{equation}
Compatibility between the latent and physical observables may likewise be encouraged through
\begin{equation} \label{eq:pullback-loss}
    \mathcal L_{\rm pull} = \mathbb E \left[ \left\| \mathcal C(d(Z)) - \mathcal C_{\mathcal Z}(Z) \right\|^2 \right].
\end{equation}

For Gaussian transitions, the expectations in \Cref{eq:latent-conservation-loss,eq:latent-dissipation-loss} are approximated using Monte Carlo samples from the predicted latent Gaussian law. When conservation or dissipation is exactly enforced by the latent architecture, the corresponding regularization term is omitted. Likewise, the pullback term is unnecessary when latent-to-physical compatibility is built directly into the representation.

The architecture, training data, and choice of structural regularization used in each numerical example are specified below.

\paragraph{Sample-path generation.}

A Stochastic Latent Twin represents conditional transition laws between arbitrary time points. Discrete sample paths can therefore be generated by sequentially sampling these conditional laws. Given times $t_0<t_1<\cdots<t_M$ and an initial state $x_0$, we use
\[ z_k=e(x_k), \qquad Z_{k+1}\sim m(z_k,t_k,t_{k+1}), \qquad x_{k+1}=d(Z_{k+1}),
\]
for $k=0,\ldots,M-1$. Alternatively, sample paths may be generated entirely in latent space via
\[
    Z_{k+1}\sim m(Z_k,t_k,t_{k+1}),
\]
followed by decoding only when physical-space states are required. This latent recursion is appropriate when the learned latent evolution remains consistent under recursively generated latent states.

\subsection{Linear SDE: verification of the analytical approximation limits} \label{sec:linear_sde_experiment}

The linear-Gaussian setting isolates the effects of latent dimension and state representation when the probability family is already correctly specified. We therefore use the linear SDE \Cref{eq:linear-sde} as a controlled test of the approximation results of \Cref{subsec:linear-stochastic,app:linear-low-rank}. We consider a randomly generated $n_x=6$ dimensional stable, nonnormal system; the nonnormality avoids trivially aligned dominant subspaces across prediction horizons. Further details of the system construction and numerical optimization are given in \cref{app:linear-numerics}.

\Cref{fig:linear-sde} (left) illustrates the stochastic dynamics. Source and target times $s\leq t$ are selected over $[0,2]$, defining the conditional transitions used for training and evaluation. No sampled data are used in this experiment: the objective is evaluated in closed form from the Gaussian transition laws (see \cref{app:linear-numerics}).

The principal comparison is with the relaxed problem of \Cref{prop:transition-wise-optimum}, in which the encoder and decoder may depend jointly on $(s,t)$. Optimizing the representation separately for each transition yields an optimal lower bound on the transition error attainable by representations shared across multiple transitions. We compare this optimal lower bound with a representation $(E,D)$ shared by all transitions and with the time-dependent extension $(E_s,D_t)$ of \cref{app:linear-extensions}. In both trainable cases, the optimal latent Gaussian transition is eliminated analytically as described in \cref{app:linear-low-rank}; only the representation itself is optimized numerically. The reported error is the average over all forward time pairs of $\sqrt{\mathcal J_{s,t}}$, where $\mathcal J_{s,t}$ is the expected squared Wasserstein error of \cref{eq:linear-gaussian-w2-objective}.

\begin{figure}
    \centering
    \includegraphics[height=0.32\linewidth]{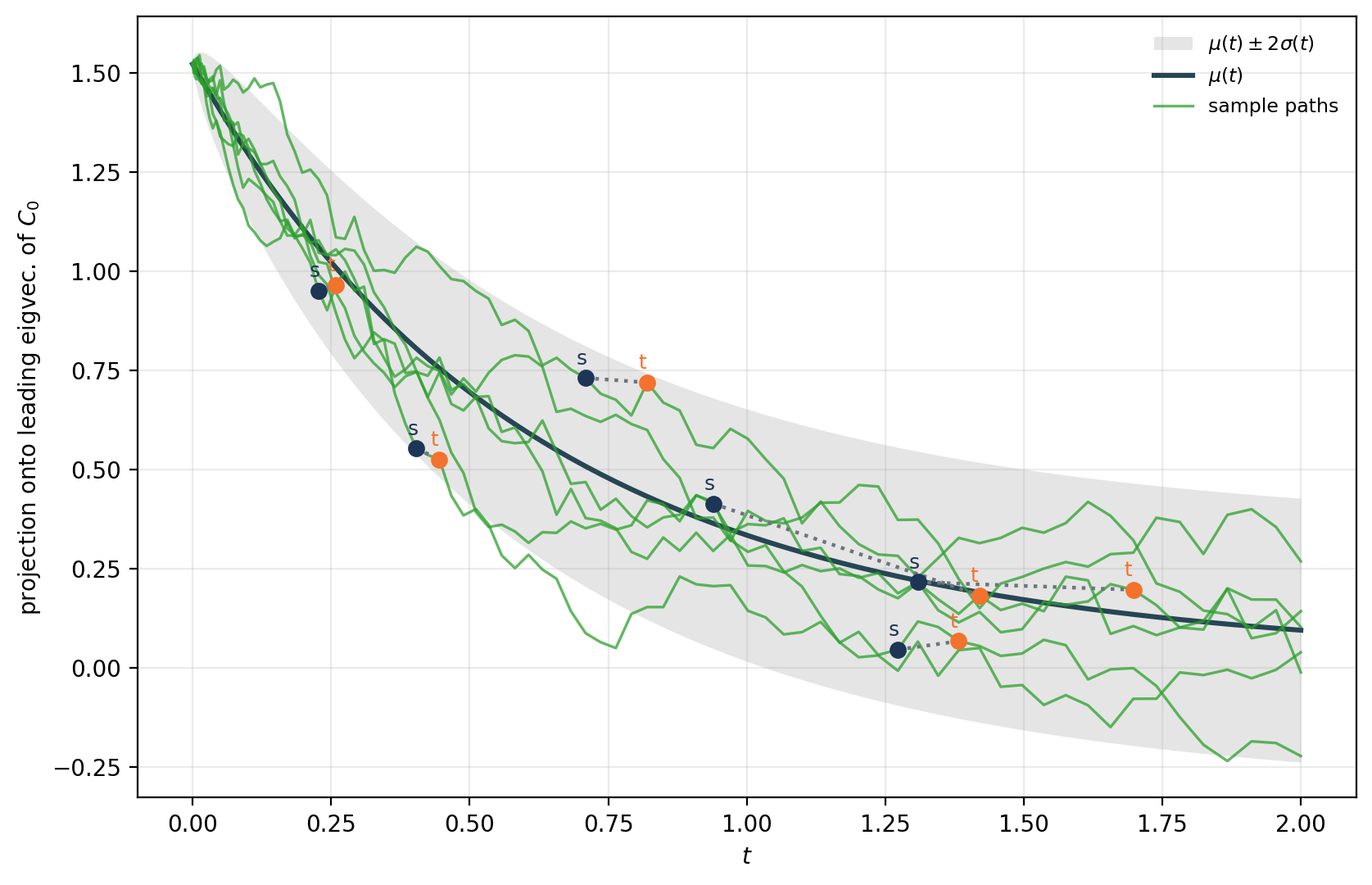}\quad
    \includegraphics[height=0.32\linewidth]{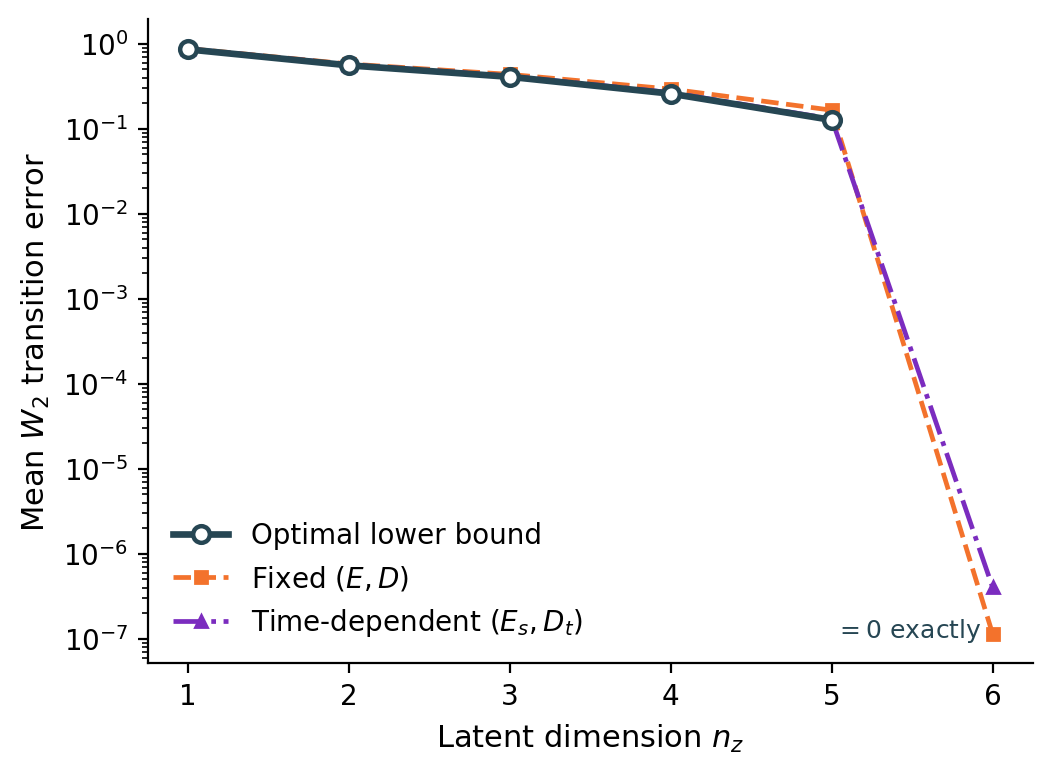}
    \caption{Linear stochastic system used for the verification experiment. \emph{Left:} stochastic dynamics, shown as the projection onto the leading eigenvector of the initial second-moment matrix $C_0$, with the mean $\pm 2$ standard deviations (shaded band), sample paths, and representative conditional transitions between source and target times $s<t$ used for training and evaluation. \emph{Right:} mean $2$-Wasserstein transition error versus latent dimension for the optimal lower bound, the fixed representation $(E,D)$, and the time-dependent representation $(E_s,D_t)$. The error is the mean of $\sqrt{\mathcal J_{s,t}}$ over all forward pairs (\cref{app:linear-numerics}); the full-dimensional case $n_z=n_x=6$ shows the exact-representation limit, where the optimal lower bound is exactly zero.}
    \label{fig:linear-sde}
\end{figure}

The numerical results \cref{fig:linear-sde} (right) reproduce the analytical approximation hierarchy. Increasing the latent dimension systematically reduces the transition error, and both trainable representations remain close to the optimal lower bound. Allowing the representation to vary as $(E_s,D_t)$ provides an additional improvement, although a common $(E,D)$ captures most of the reduction. At $n_z=n_x=6$, the transition error vanishes to numerical precision, confirming the exact-representation result of \Cref{subsec:linear-stochastic}.

Representative recursively generated trajectories are reported in \cref{app:linear-numerics}; as the latent dimension increases, the predicted mean and uncertainty approach the exact Gaussian evolution, with full recovery at $n_z=6$.

Although the results above exploit the analytical linear-Gaussian structure for controlled verification, this analytical elimination is not required by the Stochastic Latent Twin framework. In the subsequent experiments, the latent evolution map is instead learned directly from observed transition data.

\subsection{Cox--Ingersoll--Ross (CIR) model
} \label{subsec:cir}
We use the Cox--Ingersoll--Ross (CIR) model to examine what changes when the exact transition law is non-Gaussian. In particular, it allows us to separate three sources of approximation power that coincide in the linear-Gaussian example: latent dimension, nonlinear state representation, and the probability family used for the latent transition law. 

We consider the one-dimensional CIR model \citep{cox1985theory},
\begin{equation} \label{eq:cir-sde}
    \mathrm dX_t = \kappa(\beta-X_t)\,\mathrm dt + \sigma\sqrt{X_t}\,\mathrm dW_t.
\end{equation}
For $s \leq t$, its transition law is known explicitly and is a scaled noncentral chi-squared distribution \citep{cox1985theory},
\begin{equation}
\label{eq:cir-transition}
    \Phi_{s,t}(x_s) = c_{s,t}\, \chi^2_{\nu}\!\left(\lambda_{s,t}(x_s)\right),
\end{equation}
where $\chi^2_{\nu}(\lambda)$ denotes the noncentral chi-squared distribution with $\nu$ degrees of freedom and noncentrality parameter $\lambda$, $c_{s,t}>0$ is a scaling factor, and $\lambda_{s,t}(x_s)$ is the state-dependent noncentrality parameter. Throughout this experiment we take $\kappa=1$, $\theta=1$, and $\sigma=1.2$. Exact sampling from \Cref{eq:cir-transition} is available and is used to generate both the training data and reference solutions. The resulting CIR dynamics are illustrated in \Cref{fig:cir-summary} (left); further numerical details are given in \Cref{app:cir-numerics}.

\paragraph{Best Gaussian approximation.}
We first quantify how well the exact CIR transition law can be approximated by a Gaussian distribution. For a fixed source state $x_s$ and time pair $(s,t)$, consider
\begin{equation} \label{eq:cir-best-gaussian-problem}
    \mathcal E_{\rm G}^{\star}(x_s,s,t) = \min_{\mu\in\mathbb R,\;\sigma>0} W_2^2\!\left( \Phi_{s,t}(x_s), \mathcal N(\mu,\sigma^2) \right).
\end{equation}
We call the minimizing Gaussian the \emph{best Gaussian approximation} of $\Phi_{s,t}(x_s)$, and $\mathcal E_{\rm G}^{\star}$ its squared error. Using the one-dimensional quantile representation of the Wasserstein distance \citep{vallender1974calculation} and the corresponding characterization of the $W_2$-nearest Gaussian \citep{wang2026relative}, the minimizing mean $\mu^\star_{s,t}(x_s)$ is the mean of $\Phi_{s,t}(x_s)$, and the minimizing standard deviation is $\sigma^\star_{s,t}(x_s)=\int_0^1 \Phi_{s,t}^{-1}(u;x_s)\, \Phi_{\rm N}^{-1}(u)\,\mathrm du$, where $\Phi_{s,t}^{-1}(\,\cdot\,;x_s)$ and $\Phi_{\rm N}^{-1}$ denote the quantile functions of the CIR transition law and the standard normal distribution, respectively. The minimum squared Wasserstein error is
\begin{equation}
\label{eq:cir-optimal-gaussian-error}
    \mathcal E_{\rm G}^{\star}(x_s,s,t)
    = \int_{\mathbb R} \left( x-\mu^\star_{s,t}(x_s) \right)^2 \,\mathrm d\Phi_{s,t}(x_s)(x) - \bigl(\sigma^\star_{s,t}(x_s)\bigr)^2 .
\end{equation}
This value provides a lower reference for any model whose predicted physical transition law is Gaussian; the derivation is given in \cref{app:optimal-gaussian}.

We next test whether the Gaussian Stochastic Latent Twin considered in the linear-SDE experiment attains this reference. With a Gaussian latent transition law and a linear decoder, the decoded physical transition remains Gaussian for every latent dimension $n_z$. Since the physical state is one-dimensional, $n_z=1$ already suffices to represent an arbitrary scalar Gaussian transition, and increasing $n_z$ only provides a redundant parameterization of the same physical Gaussian family. We nevertheless train models with $n_z=1,2,$ and $4$ using the Wasserstein objective to verify this limitation numerically: all three models closely approach the error of the best Gaussian approximation, and their mean squared Wasserstein errors, averaged over source states and prediction horizons, differ by only approximately $0.03\%$. Thus, increasing latent dimension cannot remove the approximation error caused by restricting the physical transition law to the Gaussian family.

There are two distinct ways to enlarge the class of physical transition laws: one may enrich the latent probability family while retaining a linear representation, or retain a simple Gaussian latent law and enrich the encoder--decoder representation. We examine these two mechanisms separately.

\paragraph{Enriching the transition family.}

First, we enrich the latent probability family while keeping the encoder--decoder representation linear. Specifically, we replace the Gaussian transition by a $K=4$ Gaussian mixture,
\[
    m(z_s,s,t) = \sum_{j=1}^{K} \pi_j(z_s,s,t)\, \mathcal N\!\left( \mu_j(z_s,s,t), \Sigma_j(z_s,s,t) \right),
\]
with mixture weights, means, and covariances learned from the transition data. The resulting physical transition is therefore a Gaussian mixture as well.

As shown in \Cref{fig:cir-distributional-expressivity}, enriching the latent probability family substantially reduces the Wasserstein error relative to the Gaussian model. Under the common evaluation protocol, the mean squared Wasserstein error decreases from approximately $5.89\times10^{-2}$ for the single-Gaussian model to $1.85\times10^{-3}$ for the Gaussian mixture. This comparison shows that non-Gaussian physical transition laws can be obtained by enriching the latent probability family. 

\paragraph{Nonlinear representation.}

We next retain a single Gaussian latent transition law but replace the linear encoder and decoder by small multilayer perceptrons. The predicted physical transition is then
\begin{figure}
    \centering
    \includegraphics[width=\linewidth]{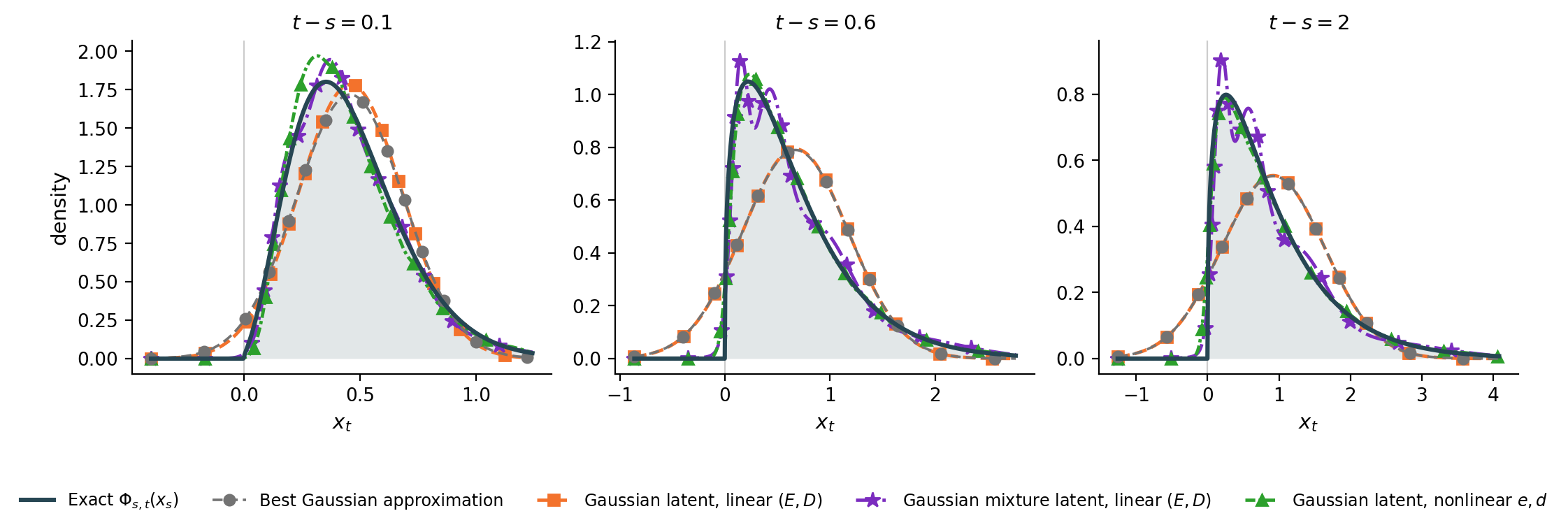}
    \caption{Distributional expressivity for the CIR transition. The linear-decoder Gaussian Stochastic Latent Twin approaches the best Gaussian approximation but cannot overcome its distributional restriction by increasing the latent dimension. A Gaussian mixture improves the approximation while retaining a linear representation, whereas a nonlinear decoder produces a highly accurate non-Gaussian physical transition from a single Gaussian latent law.}
    \label{fig:cir-distributional-expressivity}
\end{figure}
\begin{equation} \label{eq:cir-nonlinear-gaussian}
    \Psi_{s,t}(x_s) = d_\# \mathcal N\!\left( \mu(e(x_s),s,t), \Sigma(e(x_s),s,t) \right),
\end{equation}
which is no longer restricted to be Gaussian.

This change produces a much larger reduction in transition error. With $n_z=1$, the nonlinear representation reproduces the skewness and one-sided structure of the CIR transition substantially more accurately than the linear-decoder Gaussian models, see \cref{fig:cir-distributional-expressivity}. Under the common evaluation protocol, the mean squared Wasserstein error is approximately $2.69\times10^{-4}$, more than two orders of magnitude below the best Gaussian approximation. Thus, probability-family expressivity and representation expressivity are distinct: under a nonlinear encoder--decoder representation, a Gaussian latent law can induce a non-Gaussian physical transition.

\paragraph{Matching the CIR transition family.}

The previous comparisons obtain non-Gaussian physical transition laws either by enriching the latent distribution or by applying a nonlinear decoder. For CIR, however, the exact transition family itself is known. Since the problem is already one-dimensional, no separate state representation is required and we identify $z=x$. We therefore parameterize the transition directly as a scaled noncentral chi-squared law, $m(x_s,s,t) = c(x_s,s,t)\, \chi^2_{\nu(x_s,s,t)}\!\left(\lambda(x_s,s,t)\right)$.

We learn the parameters of this transition directly from sampled transitions, without supplying the analytical CIR formulas, using negative log-likelihood. The learned model is evaluated in the same $2$-Wasserstein metric used for the preceding comparisons and closely approximates the exact conditional transition laws across the tested source states and prediction horizons. Under the common evaluation protocol, the learned model attains a mean squared $2$-Wasserstein error of approximately $5.47\times10^{-5}$.

This comparison separates approximation from estimation. Restricting the physical transition to a Gaussian family introduces an irreducible approximation error, whereas the noncentral-$\chi^2$ family contains the exact CIR transition law. For the learned noncentral-$\chi^2$ model, the remaining error therefore reflects finite-data learning and numerical error rather than misspecification of the transition family.

The dependence on prediction horizon is summarized in \Cref{fig:cir-summary} (right), which reports the mean $2$-Wasserstein distance, without squaring, at each horizon.

\begin{figure}
    \centering
    \includegraphics[height=0.32\linewidth]{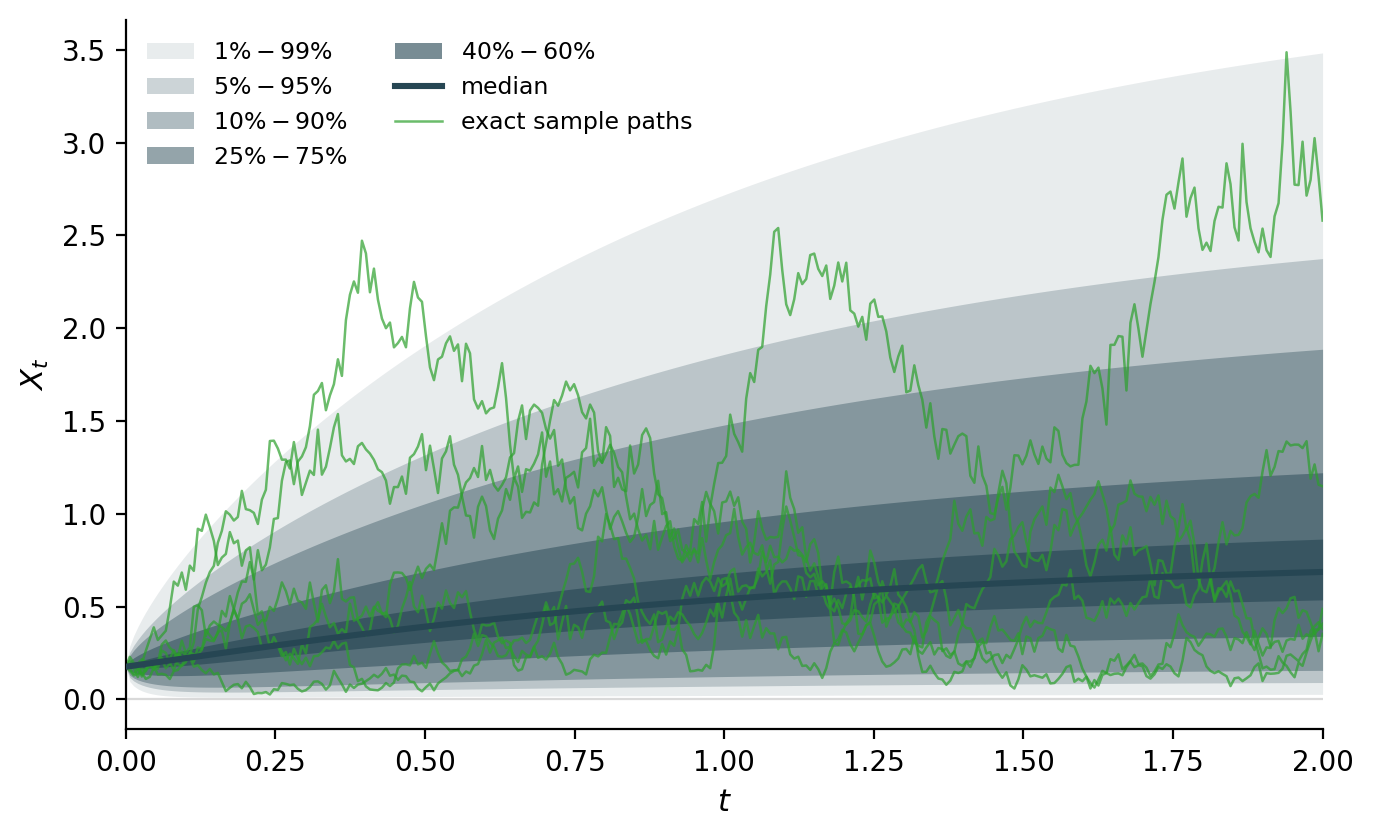}\quad
    \includegraphics[height=0.32\linewidth]{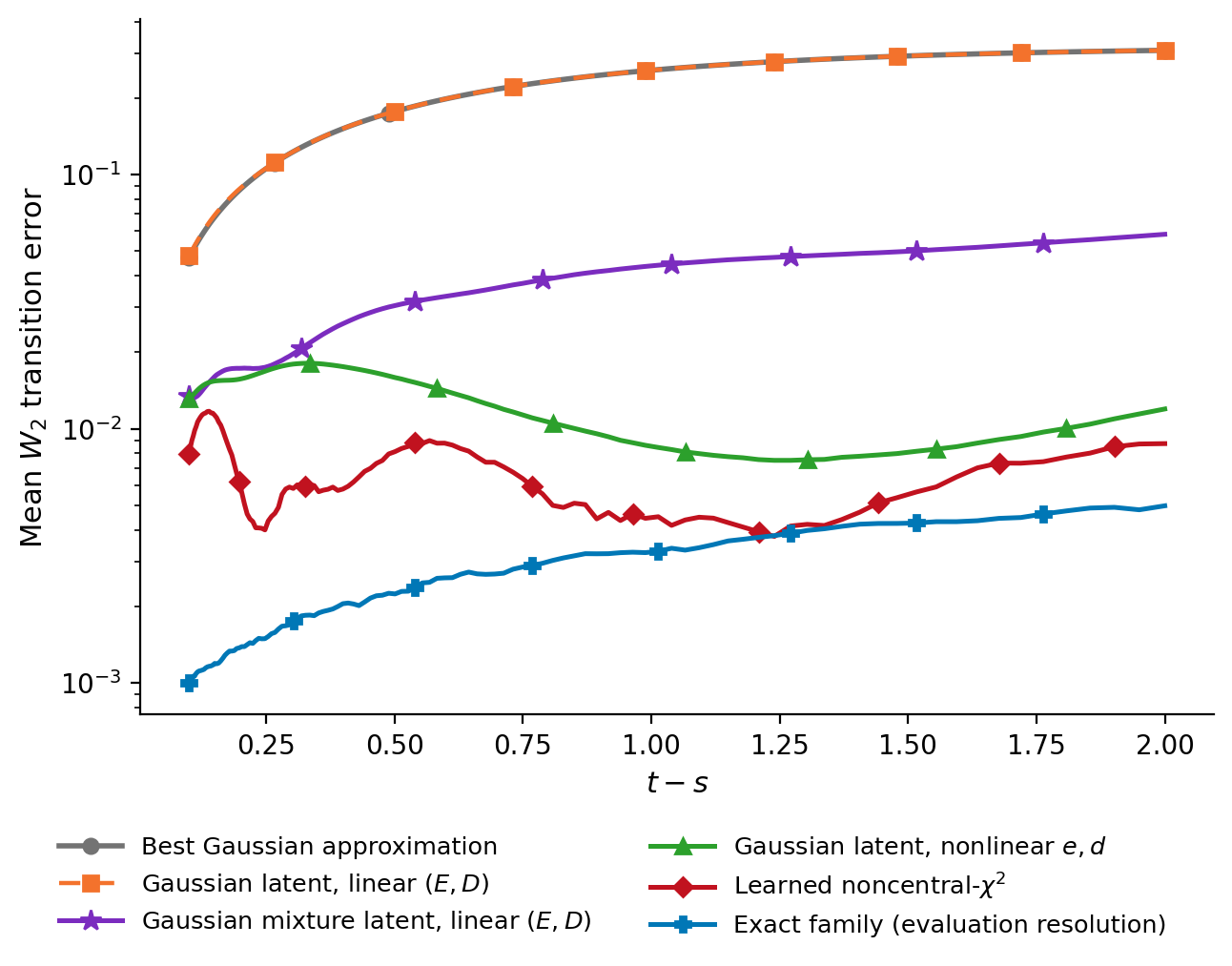}
    \caption{CIR experiment. \emph{Left:} exact CIR transition quantiles and representative sample paths; the shaded regions show central probability intervals of the conditional transition law as the prediction horizon increases. \emph{Right:} mean $2$-Wasserstein transition error as a function of prediction horizon for the CIR models, averaged over a common set of source states. 
    }
    \label{fig:cir-summary}
\end{figure}

Taken together, the CIR experiment separates three sources of approximation power that coincide in the linear-Gaussian setting. Approximation quality depends jointly on latent dimension, state representation, and the probability family used for stochastic evolution.

Additional distributional diagnostics with pointwise deviation of the predicted conditional distribution \Cref{fig:cir-cdf-deviation} and recursively generated sample paths are reported in \Cref{app:cir-diagnostics,app:cir-recursion}.

\subsection{Stochastic Lorenz: data-driven transition learning and comparison with Neural Stochastic Flows}
\label{subsec:lorenz}

The linear-SDE and CIR experiments study stochastic solution operators in settings where the relevant transition laws are analytically available. We next remove this analytical information and learn the stochastic transition directly from simulated trajectories of a nonlinear dynamical system. This setting also enables a direct comparison with Neural Stochastic Flows \citep{kiyohara2026neural}.
\begin{wrapfigure}{r}{0.45\textwidth}
    \centering
    \includegraphics[width=0.95\linewidth,clip,trim=0 0 0 2cm]{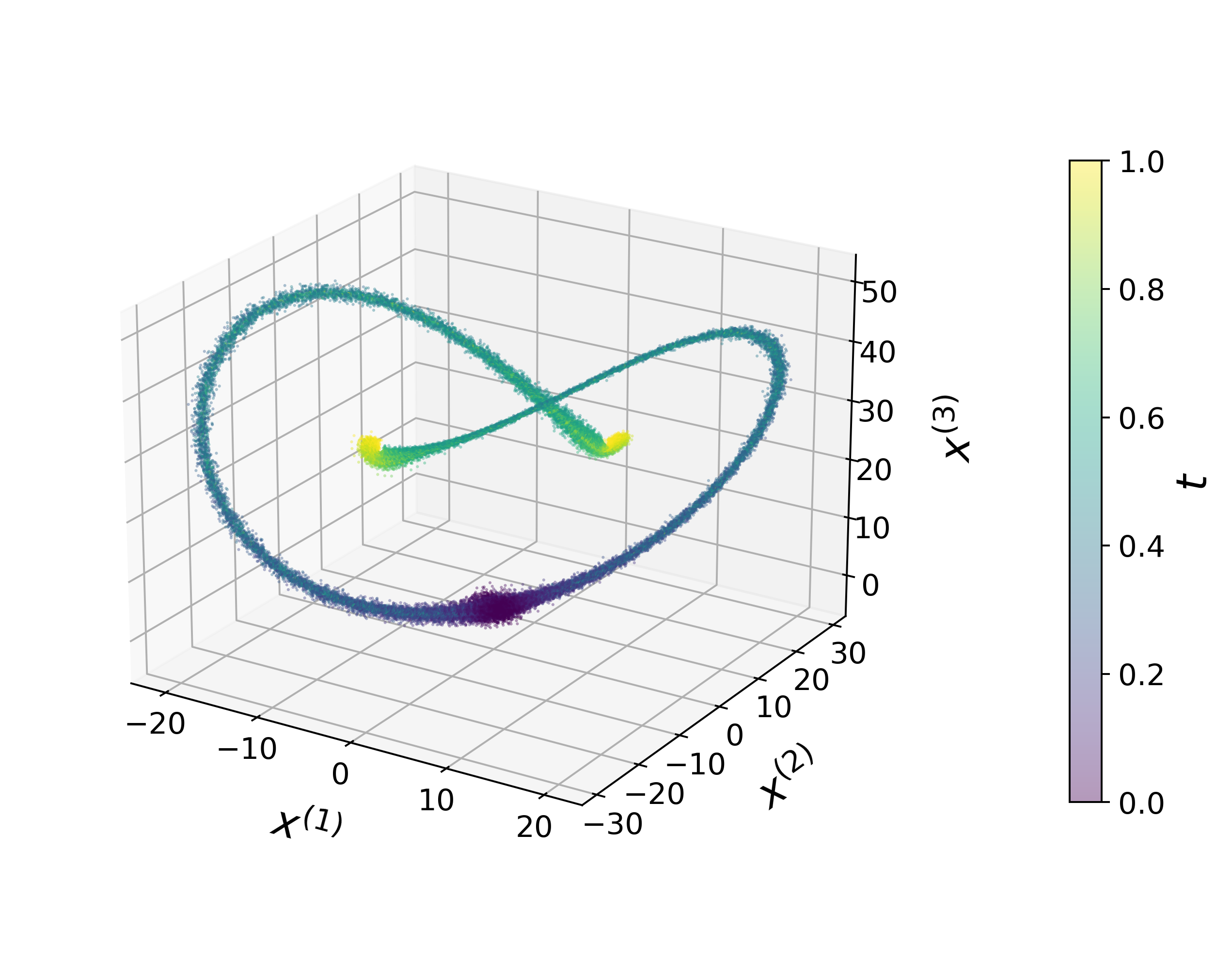}
    \caption{Evolution of the stochastic Lorenz state distribution over $t\in[0,1]$. Observed states are colored by time. The initially concentrated population near the unstable equilibrium evolves toward the bimodal structure associated with the two lobes of the Lorenz attractor.}
    \label{fig:lorenz-pointcloud}
    \vspace{-1cm}
\end{wrapfigure}

We follow the stochastic Lorenz-63 setup of \citep{li2020scalable}, used by Neural Stochastic Flows \citep{kiyohara2026neural}. 
Although the Lorenz system is a classic example of chaotic dynamics \citep{lorenz1963deterministic}, the finite-time experiment considered here focuses on the evolution from a Gaussian distribution centered at the unstable equilibrium to a pronounced bimodal distribution associated with the two lobes of the Lorenz attractor; see \cref{fig:lorenz-pointcloud}. This finite-time distributional bifurcation provides a data-driven test of the learned conditional transition laws.

The physical state $X_t\in\mathbb R^3$ satisfies
\begin{equation}\label{eq:lorenz-sde}
    \mathrm dX_t =
    \begin{pmatrix}
        \sigma(X_t^{(2)}-X_t^{(1)}) \\
        X_t^{(1)}(\rho-X_t^{(3)})-X_t^{(2)} \\
        X_t^{(1)}X_t^{(2)}-\beta X_t^{(3)}
    \end{pmatrix}
    \mathrm dt + \alpha\,\mathrm dW_t,
\end{equation}
with $\sigma=10$, $\rho=28$, $\beta=8/3$, and $\alpha=0.15$. Initial states are sampled from $X_0\sim\mathcal N(0,I)$, and trajectories are observed at $41$ equally spaced times on $[0,1]$. Training uses transition pairs from all pairs of observation times $s<t$. Further data generation details are given in \cref{app:lorenz-data}.

In contrast to the preceding linear-in-state latent transitions, we use
\[
    m(z_s,s,t) = \mathcal N\!\left( \mu(z_s,s,t), L(z_s,s,t)L(z_s,s,t)^\top \right),
\]
where a neural network maps $(z_s,s,t)$ directly to the conditional mean and covariance factor, allowing both to depend nonlinearly on the source state. The model is trained from observed transition pairs using reconstruction together with the physical-space energy score \Cref{eq:energy-score-loss}; the latent dimension is $n_z=8$, and architecture and training details are given in \Cref{app:lorenz-training}.

On $4{,}000$ test transitions sampled across the complete set of source and target times, the Stochastic Latent Twin attains predictive mean root-mean-squared error $1.69$ and energy score  $0.238$ in physical coordinates (\cref{tab:lorenz-nsf}). In addition to these transition-level measures, we examine the direct $0\to 1$ prediction, for which the initially concentrated population has developed the strongly bimodal structure of the Lorenz attractor. \Cref{fig:lorenz-t1density} shows that a single evaluation of the learned stochastic solution operator reproduces the two dominant modes and their spatial distribution without intermediate time stepping.

\begin{figure}
    \centering
    \includegraphics[width=\linewidth]{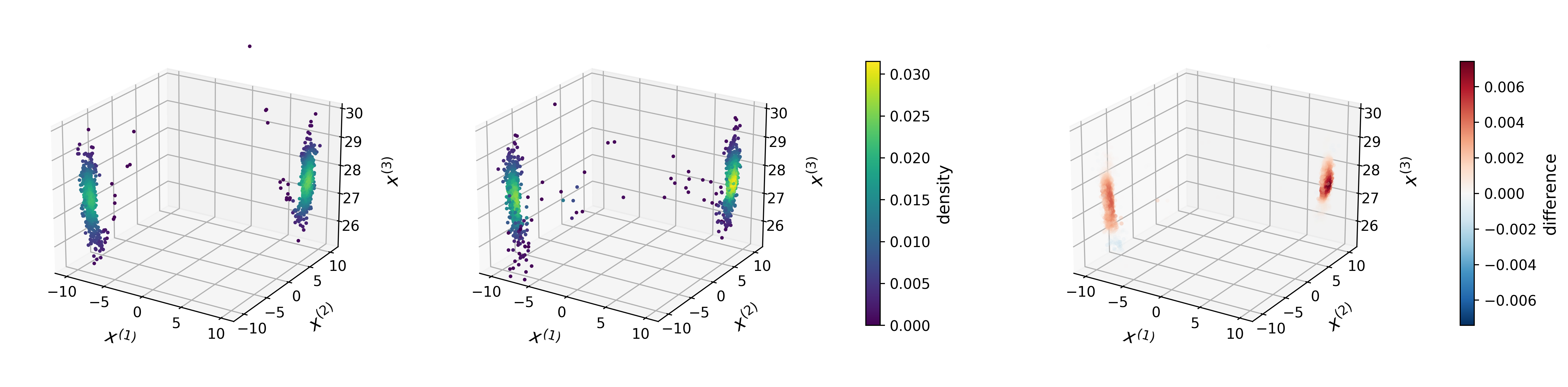}
    \caption{Distribution of the Lorenz state at $t=1$ for the $1{,}024$ test trajectories colored by kernel density estimates. Left: reference stochastic Lorenz process. Middle: direct Stochastic Latent Twin prediction $\Psi_{0,1}(x_0)$ from the same initial states. Right: difference between the estimated densities. The learned transition reproduces the bimodal population associated with the two lobes of the Lorenz attractor.}
    \label{fig:lorenz-t1density}
\end{figure}

For a direct comparison, we train Neural Stochastic Flows \citep{kiyohara2026neural} using the authors' reference implementation on the same data; its flow and bridge networks have approximately the same number of trainable parameters as the Stochastic Latent Twin above. At a common $100$-epoch training budget, the Stochastic Latent Twin attains smaller predictive-mean root-mean-squared error and energy score than the Neural Stochastic Flow, and a smaller $1$-Wasserstein distance between the $x^{(1)}$ marginals at $t=1$. Additional implementation details and results at shorter training budgets are reported in \cref{app:lorenz-competitors}.

\begin{table}
\centering
\caption{Stochastic Lorenz prediction at a common $100$-epoch training budget. Both models have approximately the same number of trainable parameters. All errors are evaluated in physical coordinates; root-mean-squared error of the predictive mean (RMSE) and energy score use the same $4{,}000$ test transitions and are computed from $64$ one-shot predictive samples per transition. The final column reports the $1$-Wasserstein distance between the predicted and reference $x^{(1)}$ marginals at $t=1$, from direct $0\to1$ predictions for the $1{,}024$ test trajectories. Wall-clock times refer to the present CPU implementations and are reported for reference rather than as optimized computational benchmarks.}
\label{tab:lorenz-nsf}
\begin{tabularx}{\linewidth}{@{}Xrrrrr@{}}
\toprule
Method & Parameters & Training time & RMSE & Energy & $W_1$ at $t=1$ \\
\midrule
Stochastic Latent Twin
    & $50{,}743$ & $4{,}522$ s
    & $\mathbf{1.69}$ & $\mathbf{0.238}$ & $\mathbf{0.112}$ \\
Neural Stochastic Flows
    & $51{,}540$ & $18{,}719$ s
    & $1.89$ & $0.310$ & $0.255$ \\
\bottomrule
\end{tabularx}
\end{table}

The Lorenz experiment demonstrates data-driven arbitrary-time prediction for a nonlinear stochastic dynamical system. We next use the same conditional solution-operator viewpoint in the backward direction, where the learned transition serves as a one-shot generative map.
\subsection{Swiss roll: Stochastic Latent Twin as a one-shot generative model} \label{subsec:swissroll}

The preceding experiments consider forward conditional evolution. We next turn to the backward conditional setting discussed in
\Cref{subsec:forward-backward} and test whether a Stochastic Latent Twin can learn a non-Gaussian backward transition entirely from data and use the resulting stochastic solution operator for one-shot generation. 

The target is a two-dimensional spiral projection of the Swiss-roll benchmark \citep{tenenbaum2000global,pedregosa2011scikit}. We corrupt the data on $t\in[0,1]$ with the Ornstein--Uhlenbeck (OU) process
\begin{equation}\label{eq:swiss-roll-forward}
    \mathrm dX_t = -\lambda X_t\,\mathrm dt + \sqrt{2\lambda}\,\mathrm dW_t, \qquad \lambda=2, 
\end{equation}
for which the distribution at $t=1$ is numerically on the same scale as $\mathcal N(0,I)$ under the finite-sample diagnostic used below. Details are given in \Cref{app:swissroll-data}.

The Stochastic Latent Twin is trained backward: later, more strongly corrupted states are used as sources and earlier states as targets. Although the forward OU transition is Gaussian, the backward conditional law depends on the non-Gaussian data distribution and is learned here only from sampled transitions. We use a Gaussian latent transition law with latent dimension $n_z=6$ and a nonlinear encoder and decoder, as motivated by the CIR experiment. The prediction loss is the physical energy score \Cref{eq:energy-score-loss}, together with an auxiliary sliced-Wasserstein term that matches one-shot samples from $\mathcal N(0,I)$ to the data at $t=0$, described in \Cref{app:swissroll-training}.

In the Stochastic Latent Twin, for one-shot generation, we draw
\[
    X_1\sim\mathcal N(0,I),
    \qquad X_0\sim\Psi_{1,0}(X_1).
\]
Thus each generated sample requires one direct source-to-target evaluation; no intermediate denoising or recursively composed transitions are used. \Cref{fig:swiss-roll-t0-heatmap} shows the resulting distribution. The one-shot Stochastic Latent Twin attains an average sliced-Wasserstein distance of $0.086\pm0.017$, compared with $0.090\pm0.014$ between independent target samples and $0.231\pm0.011$ for the untransformed standard-normal source.  The reference reflects sampling noise at 800 samples rather than a lower bound: discrepancy between the model and the target becomes apparent with larger samples (\cref{fig:swiss-roll-t0-heatmap}).

\begin{figure}
    \centering
    \includegraphics[width=\linewidth]{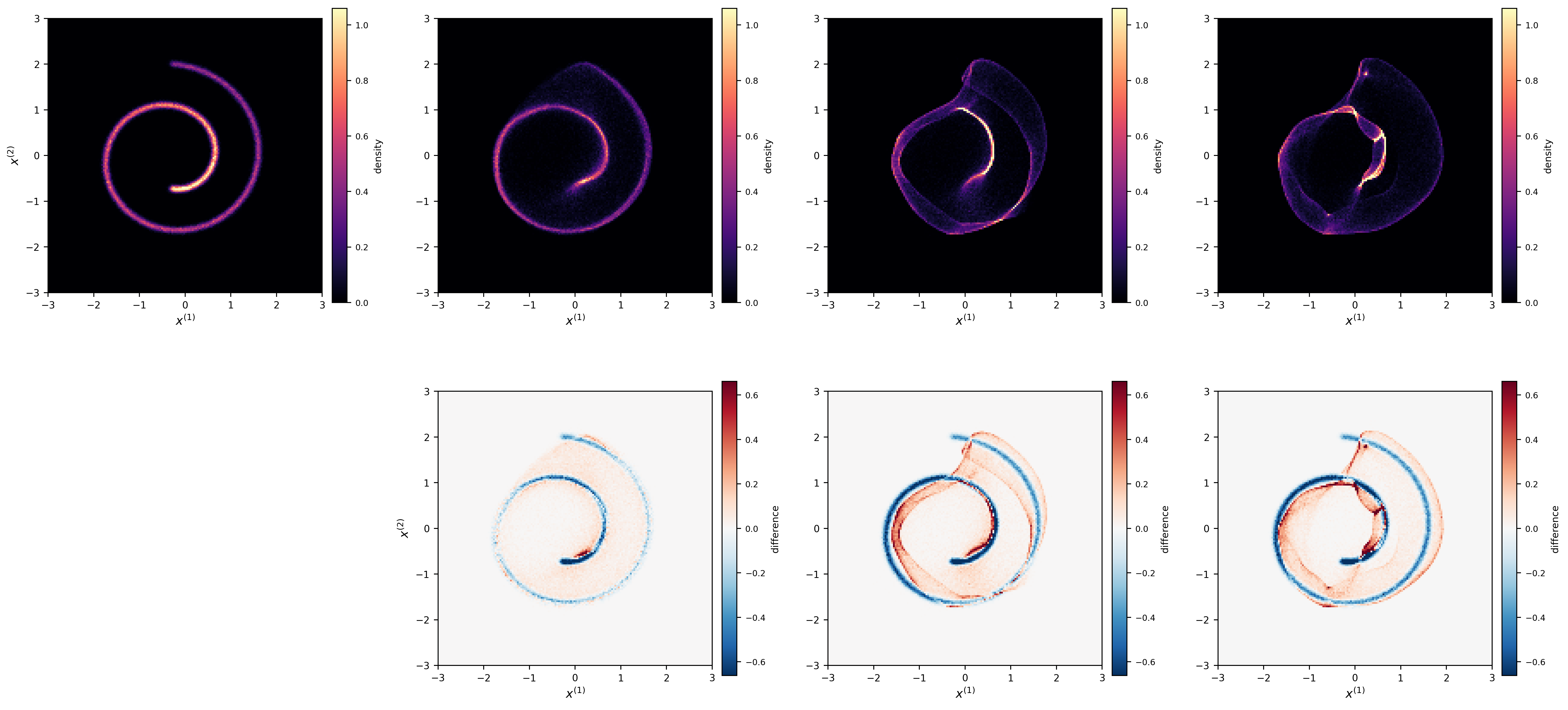}
    \caption{One-shot generation of the Swiss-roll target from $\mathcal N(0,I)$. Top row, from left to right: target density and generated densities for the Stochastic Latent Twin, MeanFlow, and SoFlow.  The images show two-dimensional histograms of 100,000 samples on a $160 \times 160$ grid over $[-3,3]^2$ for each respective setting. Bottom row: differences between each generated density and the target density.}
    \label{fig:swiss-roll-t0-heatmap}
\end{figure}

For context, we additionally compare the same one-shot generation task with MeanFlow \citep{geng2026mean} and SoFlow \citep{luo2026soflow}, trained from scratch on the same data at closely matched parameter counts, each with its own objective and without a sliced-Wasserstein term. The three methods all generate from $\mathcal N(0,I)$ in one network evaluation; their learned objects are different, however: MeanFlow learns an average velocity, SoFlow learns a deterministic solution-flow map, whereas the Stochastic Latent Twin learns conditional probability-valued transitions over arbitrary source and target times. \Cref{tab:swiss-roll-comparison} summarizes the one-shot generation results; implementation and training details are given in \cref{app:swissroll-competitors}.

\begin{table}
\centering
\caption{One-shot generation of the Swiss-roll target from $\mathcal N(0,I)$. Global sliced-Wasserstein distance (128 projections) is reported as mean $\pm$ one standard deviation over $10$ repeated batches of $800$ samples; lower is better. Wall-clock training times are reported for reference; the implementations were not optimized for computational efficiency.}
\label{tab:swiss-roll-comparison}
\begin{tabular}{lrrrr}
\toprule
Method & Parameters & Steps & Training time & Global SWD \\
\midrule
Stochastic Latent Twin & $14{,}017$ & $20{,}000$  & $243.7$ s & \textbf{$0.086\pm0.017$} \\
MeanFlow                & $14{,}188$ & $20{,}000$ & $60.8$ s  & $0.156\pm0.025$ \\
SoFlow                  & $14{,}188$ & $20{,}000$ & $47.0$ s  & $0.148\pm0.023$ \\
\bottomrule
\end{tabular}
\end{table}

The comparison here is primarily intended to assess one-shot generation accuracy at closely matched model sizes.

\subsection{Stochastic pendulum: physics conformity by construction} \label{subsec:pendulum}

We consider the stochastic pendulum
\begin{equation}\label{eq:stochastic-pendulum}
    \begin{aligned}
        \mathrm dq_t &= \left(p_t-\tfrac{\sigma^2}{2}\sin q_t\right)\mathrm dt +\sigma p_t\,\mathrm dW_t,\\
        \mathrm dp_t &= \left(-\sin q_t-\tfrac{\sigma^2}{2}p_t\cos q_t\right)\mathrm dt -\sigma\sin q_t\,\mathrm dW_t,
    \end{aligned}
    \qquad \text{with }\sigma=0.70.
\end{equation}

The stochastic forcing is tangent to the Hamiltonian energy manifolds, and the additional It\^o drift terms cancel the quadratic-variation contribution. Thus, the dynamics preserve
\[
    H(q_t,p_t)=\tfrac12p_t^2+1-\cos q_t
\]
pathwise: for every realization of the exact stochastic dynamics,
$H(q_t,p_t)=H(q_s,p_s)$ for all $t\geq s$. Hence, the dynamics remain genuinely stochastic: randomness changes the position along the orbit without changing its energy. This provides a direct test of the distributional physics-conformity framework of \cref{subsec:physics}. Initial conditions are sampled so that $H(q_0,p_0)\in[0.15,1.80]$, providing trajectories across a range of invariant energy manifolds; further sampling details are given in \cref{app:pendulum-data}.

We compare three Stochastic Latent Twins with the same representation and stochastic-transition architecture. The baseline \emph{Stochastic Latent Twin} uses no physical constraint. The \emph{Stochastic Latent Twin with soft conformity} promotes both latent invariance and latent-to-physical compatibility through the penalties of \Cref{eq:latent-conservation-loss,eq:pullback-loss}. In the \emph{Stochastic Latent Twin with conformity by construction}, the first latent coordinate is instead preserved exactly by the stochastic transition, while the remaining latent coordinates evolve stochastically. The pullback relation then aligns this invariant latent observable with the physical Hamiltonian. Thus, latent conformity holds exactly by construction, while its correspondence with the physical conserved quantity is learned from data. This is the setting of \Cref{cor:physics-conformity-construction} with $\varepsilon_{\rm latent}=0$ and a learned, nonzero pullback error. All three models are trained using reconstruction and the physical-space energy score \Cref{eq:energy-score-loss}; implementation details are given in \cref{app:pendulum-training}.

\begin{table}
\centering
\caption{Stochastic pendulum at horizon $t-s=1.5$. Conditional sliced-Wasserstein distance and physical energy defect are evaluated over eight held-out source states with $H\in[0.15,1.80]$; lower is better. The latent conformity defect is reported for the physics-conforming models, with zero corresponding to exact latent conservation by construction.}
\label{tab:pendulum-headline}
\begin{tabularx}{\linewidth}{@{}Xrrr@{}}
\toprule
Model
& Conditional SWD
& Physical energy defect
& Latent conformity defect \\
\midrule
SLT
& $0.1192$
& $1.144\times10^{-1}$
& -- \\
SLT with soft conformity
& $0.1096$
& $1.629\times10^{-2}$
& $3.551\times10^{-3}$ \\
SLT with conformity by construction
& $\mathbf{0.1026}$
& $\mathbf{1.538\times10^{-2}}$
& $\mathbf{0}$ (exact) \\
\bottomrule
\end{tabularx}
\end{table}

Both physics-conforming variants reduce the physical energy defect by approximately a factor of seven relative to the unconstrained Stochastic Latent Twin (\Cref{tab:pendulum-headline}) while retaining comparable conditional-law accuracy. Enforcing the latent invariant by construction removes the latent conformity defect exactly and gives the smallest physical energy defect among the three variants. Additional conformity diagnostics are reported in \Cref{app:pendulum-diagnostics}.

\subsection{Scalar two-dimensional stochastic Burgers equation}
\label{subsec:burgers}

The preceding experiments consider low-dimensional stochastic systems. We conclude our numerical investigations by testing whether the Stochastic Latent Twin framework can learn conditional laws for a spatially distributed stochastic PDE while retaining an exact physical conservation law. To this end, we consider the two-dimensional viscous Burgers equation with stochastic forcing,
\begin{equation} \label{eq:2dBurg}
    \d U_t = \left( -U_t\,\partial_{\xi_1}U_t -U_t\,\partial_{\xi_2}U_t +\nu\Delta U_t \right)\d t + \sigma\,\nabla\cdot \left( f(\xi_1,\xi_2)\,\d W_t \right),
\end{equation}
on $\Omega=[0,2\pi)^2$ with periodic boundary conditions. Here $W_t=(W_t^{(1)},\ldots,W_t^{(4)})$ is a standard Brownian motion in $\mathbb R^4$, and the stochastic forcing acts through the fixed smooth periodic flux matrix
\[
    f({\xi_1},{\xi_2}) =
    \begin{pmatrix}
        \cos {\xi_1} & 0 & \cos({\xi_1}+{\xi_2}) & \sin({\xi_1}-{\xi_2}) \\
        0 & \sin {\xi_2} & \cos({\xi_1}+{\xi_2}) & -\sin({\xi_1}-{\xi_2})
    \end{pmatrix}.
\]
Thus, $f({\xi_1},{\xi_2})\,\d W_t$ defines a stochastic flux with fixed spatial structure and random temporal amplitudes.

A key feature of \Cref{eq:2dBurg} is that it preserves mass pathwise. Defining the conserved spatial mean by $\mathcal C(U_t)=|\Omega|^{-1}\int_\Omega U_t(\xi_1,\xi_2)\,\d\xi_1\,\d\xi_2$, integration of \Cref{eq:2dBurg} over the periodic domain gives $\d \mathcal C(U_t)=0$, since the deterministic and stochastic flux terms have zero spatial integral by periodicity; see \Cref{app:burgers-numerics} for details. Hence, $\mathcal C(U_t)= \mathcal C(U_s)$ for any $s, t>0$, so the stochastic dynamics preserve mass exactly along every realization. This extends the physics-conformity test of \Cref{subsec:pendulum} from a low-dimensional invariant Hamiltonian system to a spatially distributed nonlinear stochastic PDE.

Trajectories of \Cref{eq:2dBurg} are simulated on a $n_{\mathrm g}\times n_{\mathrm g}$ periodic grid with $n_{\mathrm g}=64$, using Euler--Maruyama time integration and a conservative finite-volume discretization with local Lax--Friedrichs flux; further details are given in \Cref{app:burgers-numerics}. For simplicity, we denote the grid-discretized state at time $t$ by $u_t\in\mathbb R^{n_{\mathrm g}^2}$ and do not distinguish it from the continuous state unless relevant. Initial conditions are smooth random fields consisting of several large-scale blobs with random amplitudes, so different trajectories carry different conserved masses.

Because every term in the finite-volume update is a difference of face fluxes that cancel in pairs on the periodic grid, the discrete dynamics preserve the cell sum up to floating-point rounding. The discrete counterpart of the conserved quantity is $\mathcal C(u) = \frac{1}{n_{\mathrm g}^2} \sum_{i=1}^{n_{\mathrm g}^2}u^{(i)}$. Thus, the discretization preserves the same mass-conservation property as the continuous dynamics.

We compare a standard Stochastic Latent Twin with a physics-conforming variant that preserves $\mathcal C(u_t)$ by construction. Define $\mathcal C_{\mathcal Z}(z)=\tfrac{1}{n_z}\sum_{i=1}^{n_z}z^{(i)}$. Given the encoder output $e(u_s)$, we set
\[
    z_s = e(u_s) - \bigl(\mathcal C_{\mathcal Z}(e(u_s))-\mathcal C(u_s)\bigr)\mathbf 1_{n_z},
\]
so that $\mathcal C_{\mathcal Z}(z_s)=\mathcal C(u_s)$. Likewise, a sampled latent state $\widetilde Z_t$ is corrected by
\[
    Z_t = \widetilde Z_t - \bigl(\mathcal C_{\mathcal Z}(\widetilde Z_t)-\mathcal C_{\mathcal Z}(z_s)\bigr)
    \mathbf 1_{n_z}.
\]
The decoder output is shifted analogously so that $\mathcal C(d(z)) = \mathcal C_{\mathcal Z}(z)$ for any latent state $z$. Hence, $\mathcal C(d(Z_t))=\mathcal C_{\mathcal Z}(Z_t)=\mathcal C_{\mathcal Z}(z_s)=\mathcal C(u_s)$, and every predicted sample conserves mass exactly by construction.

We compare the standard Stochastic Latent Twin, the physics-conforming Stochastic Latent Twin, and the Latent Neural Stochastic Flow (Latent NSF) of \citet{kiyohara2026neural} across $32$ held-out initial conditions; see \Cref{fig:burgers2d}. For each initial condition and evaluation horizon, the reference conditional law is represented by $256$ independent solver realizations. As a finite-sample baseline, a second independent reference ensemble of the same size is compared with the first. Training, implementation and evaluation details are given in \Cref{app:burgers-training,app:burgers-eval}.

The left panel of \Cref{fig:burgers2d} shows the pointwise conditional medians for a representative held-out initial condition. All three models recover the dominant spatial structure over the training window, with the largest discrepancies concentrated near sharp fronts, where uncertainty in front location produces large pointwise errors. These discrepancies are most pronounced for the present Latent Neural Stochastic Flow implementation.

The right column of \Cref{fig:burgers2d} summarizes four metrics across all $32$ held-out initial conditions: pointwise-median error, sliced-Wasserstein distance, spread ratio, and conservation defect of $\mathcal C(u)$. Both Stochastic Latent Twin variants attain smaller median errors and sliced-Wasserstein discrepancies than the present Latent Neural Stochastic Flow implementation over the tested horizons. Their spread ratios remain close to one at short horizons and fall slightly below one at longer horizons, indicating mild underdispersion. Since this statistic averages the pointwise variance over space, it does not measure where the uncertainty is located. By construction, the physics-conforming Stochastic Latent Twin preserves $\mathcal C(u)$ to floating-point accuracy at every horizon, whereas the unconstrained Stochastic Latent Twin and Latent Neural Stochastic Flow exhibit conservation defects.

Thus, the physics-conforming Stochastic Latent Twin learns the conditional law of this spatially distributed nonlinear stochastic PDE while enforcing exact mass conservation by construction, without degrading predictive accuracy.

\begin{figure}
    \centering
    \includegraphics[width = 0.9\linewidth]{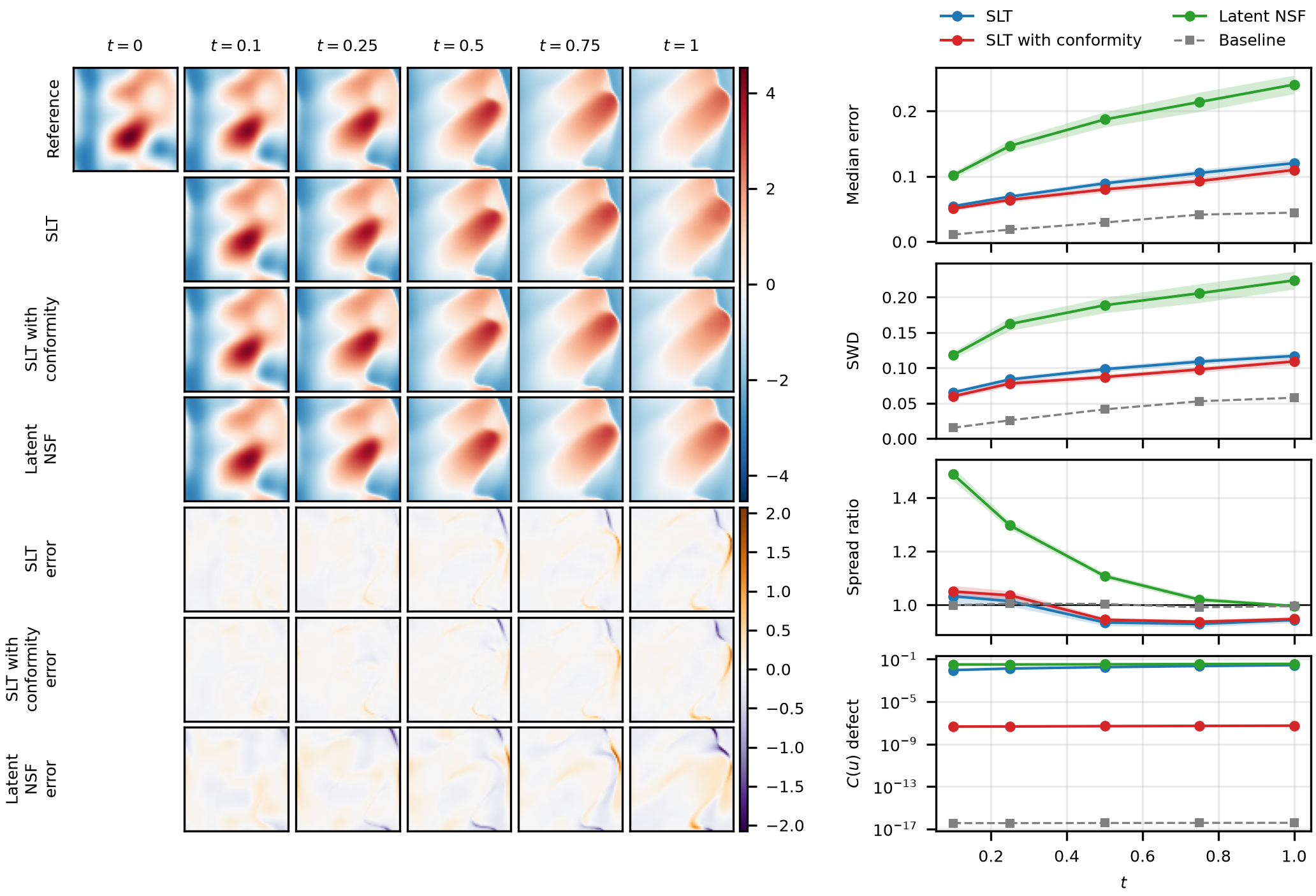}
    \caption{\emph{Stochastic two-dimensional Burgers equation.} \emph{Left:} Pointwise conditional medians for one held-out initial condition at $t=0.1,0.25,0.5,0.75,1.0$, with the initial condition at $t=0$ in the first column. The first row shows the reference ensemble median, the next three rows the medians predicted by the standard Stochastic Latent Twin, physics-conforming Stochastic Latent Twin, and Latent NSF, and the final three rows their errors relative to the reference. \emph{Right:} Median error, sliced-Wasserstein distance, spread ratio, and conservation defect $\mathcal C(u)$, averaged over $32$ held-out initial conditions. Each comparison uses $256$ predicted and $256$ reference samples; shading denotes $\pm1$ standard error. The baseline compares two independent reference ensembles of the same size. Lower is better for the median, Wasserstein, and conservation errors, while a spread ratio of one is ideal.}    
    \label{fig:burgers2d}
\end{figure}

\section{Conclusion and outlook}\label{sec:conclusion}

In this work, we introduced \emph{Stochastic Latent Twins}, a latent solution-operator framework for learning conditional probability laws of stochastic dynamical systems between arbitrary source and target times. A deterministic encoder--decoder separates state representation from stochastic temporal evolution, while the latent evolution map is probability-valued and may be realized by different transition families. We established an end-to-end Wasserstein approximation bound that separates reconstruction and latent-transition errors, proved stability with respect to perturbations of both source states and source distributions, and developed a distributional theory of physics conformity for conservation and dissipation. This theory provides latent-to-physical transfer bounds and shows how prescribed physical structure can be enforced by construction rather than recovered only through accurate solution-law approximation.

The numerical experiments were designed to isolate complementary aspects of this formulation. The linear-SDE example verifies the analytical reduced-approximation limits, while the Cox--Ingersoll--Ross experiment separates the roles of latent dimension, nonlinear state representation, and transition-family expressivity. The stochastic Lorenz experiment removes analytical access to the transition law and demonstrates fully data-driven arbitrary-time prediction, including a direct comparison with Neural Stochastic Flows. The Swiss-roll experiment uses the same conditional-law formulation in the backward direction for one-shot generation and compares it with recent one-step generative models. Finally, the stochastic pendulum and two-dimensional stochastic Burgers equation test physics conformity by construction, first for a low-dimensional Hamiltonian invariant and then for a spatially distributed stochastic PDE. In both settings, imposing the known physical structure substantially improves conformity without sacrificing conditional-law accuracy. For the pendulum, the latent invariant is conserved exactly and the physical energy defect falls by about a factor of seven; for Burgers, the construction preserves mass to round-off for every predicted sample.

Several directions remain open. These include connecting empirical distributional training more directly to the approximation theory, understanding long-time behavior and Chapman--Kolmogorov consistency, developing more expressive non-Gaussian transition models, and extending physics conformity beyond conserved quantities to dissipative structure. Scalable architectures for high-dimensional stochastic systems are another important direction.

Stochastic Latent Twins thus provide a path to scientific surrogates that carry uncertainty and physical structure along with the predicted state, which makes them candidates for data assimilation, inverse problems, and control.

\section*{Acknowledgments}
The authors thank Elizabeth Newman for helpful early discussions that contributed to the development of this work.

\section*{Code availability}
Code supporting the numerical experiments will be made publicly available on GitHub upon acceptance of the manuscript.

\bibliographystyle{cas-model2-names}

\bibliography{cas-refs}

@article{chung2026latent,
  title={Latent Twins},
  author={Chung, Matthias and Verma, Deepanshu and Collins, Max and Subrahmanya, Amit N and Katti Sastry, Varuni and Rao, Vishwas},
  journal={Machine Learning: Science and Technology},
  volume={7},
  number={2},
  pages={025039},
  year={2026},
  publisher={IOP Publishing}
}

@article{chung2027physics,
  title={Physics-conforming latent twins},
  author={Chung, Matthias and Bu, Yutong and Verma, Deepanshu},
  journal={Computer Methods in Applied Mechanics and Engineering},
  volume={463},
  pages={119351},
  year={2027},
  publisher={Elsevier}
}

@misc{verma2026latent,
      title={Latent Twin Operator}, 
      author={Deepanshu Verma and Riley Chen and Matthias Chung},
      year={2026},
      eprint={2609.35531},
      archivePrefix={arXiv},
      primaryClass={cs.CE},
      url={https://arxiv.org/abs/2609.35531}, 
}

@article{chen2024learning,
  author  = {Chen, Yuan and Xiu, Dongbin},
  journal = {Journal of Computational Physics},
  pages   = {112984},
  title   = {Learning Stochastic Dynamical System via Flow Map Operator},
  volume  = {508},
  year    = {2024}
}

@article{hasan2022identifying,
  author  = {Hasan, Ali and Pereira, Jo{\~a}o M. and Farsiu, Sina and Tarokh, Vahid},
  journal = {IEEE Transactions on Signal Processing},
  pages   = {89--104},
  title   = {Identifying Latent Stochastic Differential Equations},
  volume  = {70},
  year    = {2022}
}

@book{quarteroni2014scientific,
  author    = {Quarteroni, Alfio and Saleri, Fausto and Gervasio, Paola},
  title     = {Scientific Computing with {MATLAB} and {O}ctave},
  edition   = {4},
  series    = {Texts in Computational Science and Engineering},
  publisher = {Springer Berlin, Heidelberg},
  year      = {2014},
  isbn      = {978-3-642-45366-3},
  doi       = {10.1007/978-3-642-45367-0},
}

@inproceedings{kostic2024learning,
  author    = {Kostic, Vladimir R. and Novelli, Pietro and Grazzi, Riccardo and Lounici, Karim and Pontil, Massimiliano},
  booktitle = {International Conference on Learning Representations},
  title     = {Learning Invariant Representations of Time-Homogeneous Stochastic Dynamical Systems},
  year      = {2024}
}

@article{ke2025learning,
  title={Learning stochastic nonlinear dynamics with embedded latent transfer operators},
  author={Ke, Naichang and Tanaka, Ryogo and Kawahara, Yoshinobu},
  journal={arXiv preprint arXiv:2501.02721},
  year={2025}
}

@article{shi2026expanding,
  title={Expanding the chaos: Neural operator for stochastic (partial) differential equations},
  author={Shi, Dai and Lin, Lequan and Han, Andi and Thompson, Luke and Hern{\'a}ndez-Lobato, Jos{\'e} Miguel and Wang, Zhiyong and Gao, Junbin},
  journal={arXiv preprint arXiv:2601.01021},
  year={2026}
}

@article{gneiting2007strictly,
  title={Strictly proper scoring rules, prediction, and estimation},
  author={Gneiting, Tilmann and Raftery, Adrian E},
  journal={Journal of the American Statistical Association},
  volume={102},
  number={477},
  pages={359--378},
  year={2007},
  publisher={Taylor \& Francis}
}

@misc{kiyohara2025nsfcode,
  author       = {Kiyohara, Naoki and Johns, Edward and Li, Yingzhen},
  title        = {Neural Stochastic Flows: Official JAX Implementation},
  year         = {2025},
  howpublished = {\url{https://github.com/nkiyohara/jax-nsf}},
  note         = {Accessed 2026}
}

@article{pedregosa2011scikit,
  title={Scikit-learn: Machine learning in Python},
  author={Pedregosa, Fabian and Varoquaux, Ga{\"e}l and Gramfort, Alexandre and Michel, Vincent and Thirion, Bertrand and Grisel, Olivier and Blondel, Mathieu and Prettenhofer, Peter and Weiss, Ron and Dubourg, Vincent and others},
  journal={the Journal of Machine Learning Research},
  volume={12},
  pages={2825--2830},
  year={2011},
  publisher={JMLR. org}
}

@article{tenenbaum2000global,
  title={A global geometric framework for nonlinear dimensionality reduction},
  author={Tenenbaum, Joshua B and Silva, Vin de and Langford, John C},
  journal={science},
  volume={290},
  number={5500},
  pages={2319--2323},
  year={2000},
  publisher={American Association for the Advancement of Science}
}

@article{lorenz1963deterministic,
  author  = {Lorenz, Edward N.},
  title   = {Deterministic Nonperiodic Flow},
  journal = {Journal of the Atmospheric Sciences},
  volume  = {20},
  number  = {2},
  pages   = {130--141},
  year    = {1963},
  doi     = {10.1175/1520-0469(1963)020<0130:DNF>2.0.CO;2}
}

@inproceedings{rowland2019orthogonal,
  title={Orthogonal estimation of wasserstein distances},
  author={Rowland, Mark and Hron, Jiri and Tang, Yunhao and Choromanski, Krzysztof and Sarlos, Tamas and Weller, Adrian},
  booktitle={The 22nd International conference on artificial intelligence and statistics},
  pages={186--195},
  year={2019},
  organization={PMLR}
}

@inproceedings{frans2025one,
  title={One step diffusion via shortcut models},
  author={Frans, Kevin and Hafner, Danijar and Levine, Sergey and Abbeel, Pieter},
  booktitle={International Conference on Learning Representations},
  volume={2025},
  pages={34668--34684},
  year={2025}
}

@article{kiyohara2026neural,
  title={Neural stochastic flows: Solver-free modelling and inference for SDE solutions},
  author={Kiyohara, Naoki and Johns, Edward and Li, Yingzhen},
  journal={Advances in Neural Information Processing Systems},
  volume={38},
  pages={16917--16962},
  year={2025}
}

@inproceedings{luo2026soflow,
  title={SoFlow: Solution flow models for one-step generative modeling},
  author={Luo, Tianze and Yuan, Haotian and Liu, Zhuang},
  booktitle={International Conference on Learning Representations},
  volume={2026},
  pages={101623--101641},
  year={2026}
}

@article{geng2026mean,
  title={Mean flows for one-step generative modeling},
  author={Geng, Zhengyang and Deng, Mingyang and Bai, Xingjian and Kolter, Zico and He, Kaiming},
  journal={Advances in Neural Information Processing Systems},
  volume={38},
  pages={75460--75482},
  year={2025}
}

@article{bulte2025probabilistic,
  author  = {B{\"u}lte, Christopher and Scholl, Philipp and Kutyniok, Gitta},
  journal = {Transactions on Machine Learning Research},
  title   = {Probabilistic Neural Operators for Functional Uncertainty Quantification},
  year    = {2025}
}

@inproceedings{li2020scalable,
  title     = {Scalable Gradients for Stochastic Differential Equations},
  author    = {Li, Xuechen and Wong, Ting-Kam Leonard and Chen, Ricky T. Q. and Duvenaud, David},
  booktitle = {Proceedings of the Twenty Third International Conference on Artificial Intelligence and Statistics},
  series    = {Proceedings of Machine Learning Research},
  volume    = {108},
  pages     = {3870--3882},
  year      = {2020},
  publisher = {PMLR}
}

@inproceedings{kidger2021neuralsdes,
  title     = {Neural {SDE}s as Infinite-Dimensional {GAN}s},
  author    = {Kidger, Patrick and Foster, James and Li, Xuechen and Oberhauser, Harald and Lyons, Terry},
  booktitle = {Proceedings of the 38th International Conference on Machine Learning},
  series    = {Proceedings of Machine Learning Research},
  volume    = {139},
  pages     = {5453--5463},
  year      = {2021},
  publisher = {PMLR}
}

@article{golub1973differentiation,
  title={The differentiation of pseudo-inverses and nonlinear least squares problems whose variables separate},
  author={Golub, Gene H and Pereyra, Victor},
  journal={SIAM Journal on numerical analysis},
  volume={10},
  number={2},
  pages={413--432},
  year={1973},
  publisher={SIAM}
}

@article{chung2023variable,
  title={A variable projection method for large-scale inverse problems with $\ell_1$ regularization},
  author={Chung, Matthias and Renaut, Rosemary A},
  journal={Applied Numerical Mathematics},
  volume={192},
  pages={297--318},
  year={2023},
  publisher={Elsevier}
}

@inproceedings{course2023amortized,
  title     = {Amortized Reparametrization: Efficient and Scalable Variational Inference for Latent {SDE}s},
  author    = {Course, Kevin and Nair, Prasanth B.},
  booktitle = {Advances in Neural Information Processing Systems},
  volume    = {36},
  year      = {2023}
}

@book{bhatia1997matrix,
  author    = {Rajendra Bhatia},
  title     = {Matrix Analysis},
  series    = {Graduate Texts in Mathematics},
  volume    = {169},
  publisher = {Springer},
  year      = {1997},
  doi       = {10.1007/978-1-4612-0653-8}
}

@inproceedings{bartosh2025sdematching,
  title     = {{SDE} Matching: Scalable and Simulation-Free Training of Latent Stochastic Differential Equations},
  author    = {Bartosh, Grigory and Vetrov, Dmitry and Naesseth, Christian A.},
  booktitle = {Proceedings of the 42nd International Conference on Machine Learning},
  series    = {Proceedings of Machine Learning Research},
  volume    = {267},
  pages     = {3054--3070},
  year      = {2025},
  publisher = {PMLR}
}

@book{oksendal2003stochastic,
  author    = {\O{}ksendal, Bernt},
  title     = {Stochastic Differential Equations: An Introduction with Applications},
  edition   = {6},
  publisher = {Springer},
  address   = {Berlin, Heidelberg},
  year      = {2003},
  series    = {Universitext},
  doi       = {10.1007/978-3-642-14394-6}
}

@book{kloeden1992numerical,
  author    = {Kloeden, Peter E. and Platen, Eckhard},
  title     = {Numerical Solution of Stochastic Differential Equations},
  publisher = {Springer},
  address   = {Berlin, Heidelberg},
  year      = {1992},
  series    = {Applications of Mathematics},
  volume    = {23},
  doi       = {10.1007/978-3-662-12616-5}
}

@book{ethier2009markov,
  title={Markov processes: characterization and convergence},
  author={Ethier, Stewart N and Kurtz, Thomas G},
  year={2009},
  publisher={John Wiley \& Sons}
}

@article{gelbrich1990formula,
  title={On a formula for the L2 Wasserstein metric between measures on Euclidean and Hilbert spaces},
  author={Gelbrich, Matthias},
  journal={Mathematische Nachrichten},
  volume={147},
  number={1},
  pages={185--203},
  year={1990},
  publisher={Wiley Online Library}
}

@article{peyre2019computational,
  title={Computational optimal transport with applications to data sciences},
  author={Peyr{\'e}, Gabriel and Cuturi, Marco},
  journal={Foundations and trends{\textregistered} in machine learning},
  volume={11},
  number={5-6},
  pages={355--607},
  year={2019},
  publisher={Emerald Publishing Limited}
}

@article{bhatia2019bures,
  author  = {Bhatia, Rajendra and Jain, Tanvi and Lim, Yongdo},
  title   = {On the Bures--Wasserstein Distance between Positive Definite Matrices},
  journal = {Expositiones Mathematicae},
  volume  = {37},
  number  = {2},
  pages   = {165--191},
  year    = {2019},
  doi     = {10.1016/j.exmath.2018.01.002}
}

@inproceedings{song2021score,
  title     = {Score-Based Generative Modeling through Stochastic Differential Equations},
  author    = {Song, Yang and Sohl-Dickstein, Jascha and Kingma, Diederik P. and
               Kumar, Abhishek and Ermon, Stefano and Poole, Ben},
  booktitle = {International Conference on Learning Representations},
  year      = {2021}
}

@inproceedings{lipman2023flow,
  title     = {Flow Matching for Generative Modeling},
  author    = {Lipman, Yaron and Chen, Ricky T. Q. and Ben-Hamu, Heli and
               Nickel, Maximilian and Le, Matthew},
  booktitle = {International Conference on Learning Representations},
  year      = {2023}
}

@book{kaipio2005statistical,
  title     = {Statistical and Computational Inverse Problems},
  author    = {Kaipio, Jari and Somersalo, Erkki},
  series    = {Applied Mathematical Sciences},
  volume    = {160},
  publisher = {Springer},
  year      = {2005}
}

@article{stuart2010inverse,
  title   = {Inverse Problems: A Bayesian Perspective},
  author  = {Stuart, Andrew M.},
  journal = {Acta Numerica},
  volume  = {19},
  pages   = {451--559},
  year    = {2010}
}

@article{haussmann1986time,
  author  = {Haussmann, U. G. and Pardoux, {\'E}tienne},
  title   = {Time Reversal of Diffusions},
  journal = {The Annals of Probability},
  volume  = {14},
  number  = {4},
  pages   = {1188--1205},
  year    = {1986},
  doi     = {10.1214/aop/1176992362}
}

@book{daprato2014stochastic,
  author    = {Da Prato, Giuseppe and Zabczyk, Jerzy},
  title     = {Stochastic Equations in Infinite Dimensions},
  edition   = {2},
  series    = {Encyclopedia of Mathematics and its Applications},
  volume    = {152},
  publisher = {Cambridge University Press},
  year      = {2014}
}

@book{villani2009optimal,
  author    = {Villani, C{\'e}dric},
  title     = {Optimal Transport: Old and New},
  series    = {Grundlehren der mathematischen Wissenschaften},
  volume    = {338},
  publisher = {Springer},
  address   = {Berlin, Heidelberg},
  year      = {2009},
  doi       = {10.1007/978-3-540-71050-9}
}

@misc{zeng2026transition,
  author        = {Zeng, Li and Wan, Xiaoliang and Wang, Yaobin and Nobile, Fabio and Zhou, Tao},
  title         = {A Transition-Density-Based Operator Learning Method for Fokker--Planck Equations with Various Initial Conditions},
  year          = {2026},
  eprint        = {2606.09434},
  archivePrefix = {arXiv}
}

@article{su2025transition,
  author  = {Su, Haozhe and Tretyakov, M. V. and Newton, David P.},
  title   = {Deep Learning of Transition Probability Densities for Stochastic Asset Models with Applications in Option Pricing},
  journal = {Management Science},
  volume  = {71},
  number  = {4},
  pages   = {2922--2952},
  year    = {2024},
  doi     = {10.1287/mnsc.2022.01448}
}

@article{bahmani2025neuralchaos,
  author  = {Bahmani, Bahador and Kevrekidis, Ioannis G. and Shields, Michael D.},
  title   = {Neural Chaos: A Spectral Stochastic Neural Operator},
  journal = {Journal of Computational Physics},
  volume  = {539},
  pages   = {114233},
  year    = {2025},
  doi     = {10.1016/j.jcp.2025.114233}
}

@article{cox1985theory,
  title={A theory of the term structure of interest rates},
  author={Cox, John C and Ingersoll, Jonathan E and Ross, Stephen A and others},
  journal={Econometrica},
  volume={53},
  number={2},
  pages={385--407},
  year={1985},
  publisher={World Scientific}
}

@article{vallender1974calculation,
  title={Calculation of the Wasserstein distance between probability distributions on the line},
  author={Vallender, SS},
  journal={Theory of Probability \& Its Applications},
  volume={18},
  number={4},
  pages={784--786},
  year={1974},
  publisher={SIAM}
}

@article{wang2026relative,
  author  = {Wang, Binshuai and Wei, Peng},
  title   = {Relative Wasserstein Angle and the Problem of the
             {$W_2$}-Nearest Gaussian Distribution},
  journal = {arXiv preprint arXiv:2601.22355},
  year    = {2026}
}

\appendix

\section{Relation to existing stochastic learning frameworks}
\label{app:related-methods}

The related-work discussion in \Cref{sec:background} organizes existing approaches according to the mathematical object used to represent stochastic evolution. Here, we provide a more detailed structural comparison of the methods most closely related to Stochastic Latent Twins. \Cref{tab:related-stochastic-methods} summarizes the learned object, temporal representation, state representation, and principal structural or theoretical emphasis of each approach.

\begin{table*}[t]
\centering
\caption{\emph{Structural comparison of representative approaches for learning stochastic dynamics.} The methods differ primarily in the object learned, the representation of temporal evolution, and the accompanying structural or theoretical emphasis.} \label{tab:related-stochastic-methods}

\renewcommand{\arraystretch}{1.12}
\setlength{\tabcolsep}{3pt}
\scriptsize

\begin{tabularx}{\textwidth}{
>{\raggedright\arraybackslash}p{0.16\textwidth}
>{\raggedright\arraybackslash}p{0.21\textwidth}
>{\raggedright\arraybackslash}p{0.14\textwidth}
>{\raggedright\arraybackslash}p{0.20\textwidth}
>{\raggedright\arraybackslash}X
}
\toprule

\rowcolor{tablehead}
\textbf{Method} & \textbf{Learned object} & \textbf{Temporal access} & \textbf{State representation} & \textbf{Structural / theoretical emphasis} \\
\midrule

\textbf{Neural / latent SDEs}
\citep{li2020scalable,kidger2021neuralsdes,
hasan2022identifying,course2023amortized,bartosh2025sdematching}
&
Drift and diffusion, possibly in a learned latent space
&
Local dynamics + numerical propagation
&
Physical or latent stochastic state
&
SDE and variational formulations; properties depend on the
parameterization and learning setup
\\

\addlinespace[0.4ex]

\textbf{Stochastic flow-map learning}
\citep{chen2024learning}
&
Finite-interval stochastic evolution map
&
Direct over prescribed intervals
&
Physical state
&
Stochastic flow-map structure and numerical distributional validation
\\

\addlinespace[0.4ex]

\textbf{Neural Stochastic Flows}
\citep{kiyohara2026neural}
&
Conditional transition density via normalizing flows
&
Direct arbitrary $(s,t)$
&
Physical state; latent variant uses a variational state-space model
&
Identity and stationarity by construction; Chapman--Kolmogorov consistency
through regularization
\\

\addlinespace[0.4ex]

\textbf{Transition-density / Fokker--Planck learning}
\citep{su2025transition,zeng2026transition}
&
Transition density or probability-density evolution operator
&
Direct transition or density evolution
&
Physical state or function space
&
Kolmogorov and Fokker--Planck structure; transition-density analysis and
validation
\\

\addlinespace[0.4ex]

\textbf{Probabilistic / stochastic neural operators}
\citep{bulte2025probabilistic,bahmani2025neuralchaos,shi2026expanding}
&
Distribution-valued or stochastic operator between function spaces
&
Problem dependent
&
Function-space representation
&
Distributional operator learning; proper scoring and stochastic
representation
\\

\addlinespace[0.4ex]

\textbf{Stochastic transfer / Koopman operators}
\citep{kostic2024learning,ke2025learning}
&
Evolution of observables or learned representations
&
Operator propagation
&
Feature or observable representation
&
Statistical and operator-theoretic analysis, including long-time forecasting
\\

\specialrule{1.0pt}{0.45em}{0.35em}

\rowcolor{python1!10}
\textbf{\textcolor{python1}{Stochastic Latent Twins}}
&
Conditional latent transition law
$m(z_s,s,t)\in\mathcal P(\mathcal Z)$
&
Direct arbitrary $(s,t)$
&
Deterministic encoder--decoder; stochastic latent evolution
&
End-to-end Wasserstein approximation and stability; distributional
conservation and dissipation; latent-to-physical conformity transfer
\\

\bottomrule
\end{tabularx}
\end{table*}

\paragraph{Closest comparison: Neural Stochastic Flows.}

Among the approaches in \Cref{tab:related-stochastic-methods}, Neural
Stochastic Flows \citep{kiyohara2026neural} provide the closest comparison
to Stochastic Latent Twins. Both learn conditional transition laws that can be
queried directly between arbitrary source and target times. Neural Stochastic
Flows represent these laws using conditional normalizing flows and incorporate
structural properties of stochastic flows into the model: identity and, for
autonomous systems, stationarity are encoded through the parameterization,
while Chapman--Kolmogorov consistency is promoted through a bidirectional
KL-based regularizer. Their latent extension is formulated as a variational
state-space model.

The comparison with Stochastic Latent Twins, including the difference in theoretical scope, is given in \Cref{sec:background}, and direct numerical comparisons are given in \Cref{subsec:lorenz,subsec:burgers}.

\paragraph{Theoretical scope.}

Beyond this closest comparison, related frameworks address different
theoretical questions. Probabilistic Neural Operators justify distributional
training through strict propriety of the energy score
\citep{bulte2025probabilistic}, while stochastic transfer- and
Koopman-operator methods provide statistical and operator-theoretic guarantees,
including results for long-time forecasting
\citep{kostic2024learning,ke2025learning}. Transition-density approaches,
in turn, are closely tied to the analytical structure of the Kolmogorov and
Fokker--Planck equations
\citep{su2025transition,zeng2026transition}.

These perspectives are complementary to the approximation and structural
analysis developed here. The Stochastic Latent Twin results are conditional
on the stated reconstruction and latent-transition approximation assumptions;
they do not constitute statistical learning guarantees that finite-data
optimization necessarily attains these errors. This distinction is important,
since related approaches provide complementary guarantees concerning
distributional training objectives, statistical estimation, or long-time
forecasting.

\section{Optimal reduced approximation for linear stochastic systems}
\label{app:linear-low-rank}

This section provides the derivation of the linear-Gaussian results for $n_z<n_x$ summarized in \cref{subsec:linear-stochastic}. We first formulate the reduced Wasserstein approximation problem. We then show that, for any candidate encoder and decoder, the latent Gaussian transition can be eliminated analytically. Finally, by allowing the encoder and decoder to be optimized for one fixed source and target time, we derive a closed-form reference solution that provides a lower bound on the transition error attainable by a Stochastic Latent Twin using one encoder--decoder representation across multiple time pairs.

\subsection{Reduced Wasserstein approximation problem}
\label{app:linear-wasserstein-problem}

For fixed source and target times $s<t$, recall from \cref{eq:linear-transition,eq:linear-slt-transition} that $\Phi_{s,t}(x_s)=\mathcal N(M_{s,t}x_s,Q_{s,t})$ and
\[
    \Psi_{s,t}(x_s) = \mathcal N\!\left( DM_{s,t}^{\mathcal Z}Ex_s, DQ_{s,t}^{\mathcal Z}D^\top \right).
\]
Let $C_s:=\mathbb E[X_sX_s^\top]$ denote the second-moment matrix of the source state. We consider the reduced approximation problem
\begin{equation} \label{eq:linear-gaussian-w2-objective}
\begin{aligned}
    \min_{E,D,M_{s,t}^{\mathcal Z},Q_{s,t}^{\mathcal Z}}    \qquad& \mathcal J_{s,t} \left( E,D, M_{s,t}^{\mathcal Z}, Q_{s,t}^{\mathcal Z} \right) = \mathbb E_{X_s} \left[ W_2^2\!\left( \Psi_{s,t}(X_s), \Phi_{s,t}(X_s) \right) \right].\\
    \text{subject to}\quad& Q_{s,t}^{\mathcal Z}\succeq0, \qquad \operatorname{rank}(E) = \operatorname{rank}(D) = n_z.
\end{aligned}
\end{equation}

Since both conditional transition distributions are Gaussian, the objective in \cref{eq:linear-gaussian-w2-objective} can be evaluated explicitly as
\begin{equation} \label{eq:linear-gaussian-w2-decomposition}
    \mathcal J_{s,t} =  \left\| \left( M_{s,t} - DM_{s,t}^{\mathcal Z}E \right) C_s^{1/2} \right\|_{\mathrm F}^2
    + d_{\rm B}^2 \left( DQ_{s,t}^{\mathcal Z}D^\top, Q_{s,t} \right),
\end{equation}
where
\begin{equation} \label{eq:bures-distance}
    d_{\rm B}^2(S,Q) = \operatorname{tr}(S) + \operatorname{tr}(Q) - 2\operatorname{tr} \left( \left( Q^{1/2}SQ^{1/2} \right)^{1/2} \right)
\end{equation}
is the squared Bures--Wasserstein distance between positive semidefinite covariance matrices \citep{gelbrich1990formula,bhatia2019bures,peyre2019computational}.

Thus, \cref{eq:linear-gaussian-w2-objective} is equivalently the constrained minimization of the sum of the conditional-mean error and the covariance error in \cref{eq:linear-gaussian-w2-decomposition}.

\subsection{Analytical elimination of the latent Gaussian transition}
\label{app:latent-transition-elimination}

We now consider the inner optimization over $M_{s,t}^{\mathcal Z}$ and $Q_{s,t}^{\mathcal Z}$ for a candidate encoder $E$ and decoder $D$. This minimization admits a closed-form solution. 

\begin{proposition}[Optimal latent Gaussian transition] \label{prop:optimal-latent-gaussian-transition}
Let $E\in\mathbb R^{n_z\times n_x}$ and $D\in\mathbb R^{n_x\times n_z}$ be given, with $D^\top D=I$, and suppose that $EC_sE^\top$ is nonsingular. Then, for this encoder and decoder, the unique minimizers of
\[
    \mathcal J_{s,t} \left( E,D, M_{s,t}^{\mathcal Z}, Q_{s,t}^{\mathcal Z} \right)
\]
over $M_{s,t}^{\mathcal Z}$ and $Q_{s,t}^{\mathcal Z}\succeq0$ are
\begin{equation} \label{eq:optimal-latent-gaussian}
    \widehat M_{s,t}^{\mathcal Z} = D^\top M_{s,t}C_sE^\top \left( EC_sE^\top \right)^{-1}
    \qquad \text{and} \qquad
    \widehat Q_{s,t}^{\mathcal Z} = D^\top Q_{s,t}D.
\end{equation}
\end{proposition}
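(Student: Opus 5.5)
The plan is to use the decomposition \cref{eq:linear-gaussian-w2-decomposition}. In it, $\mathcal J_{s,t}$ splits into a mean term that depends only on $M_{s,t}^{\mathcal Z}$ and a Bures term that depends only on $Q_{s,t}^{\mathcal Z}$. The joint minimization therefore decouples into two independent problems, and it suffices to show that each has the stated unique minimizer. Throughout I write $M=M_{s,t}$, $Q=Q_{s,t}$, $C=C_s$, and $P=DD^\top$. Since $D^\top D=I$, $P$ is the orthogonal projector onto $\operatorname{range}(D)$.

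\emph{Mean term.} Set $f(K)=\|(M-DKE)C^{1/2}\|_{\mathrm F}^2$ for $K\in\mathbb R^{n_z\times n_z}$. Expanding the square and using $D^\top D=I$ gives
\[
f(K)=\operatorname{tr}\bigl(KECE^\top K^\top\bigr)-2\operatorname{tr}\bigl(K^\top D^\top MCE^\top\bigr)+\text{const}.
\]
This is a quadratic in $K$. Its quadratic part is $\operatorname{tr}(KECE^\top K^\top)$, which is strictly positive for $K\neq0$ because $ECE^\top$ is positive definite; here we use that it is positive semidefinite and, by assumption, nonsingular. Hence $f$ is strictly convex. Setting the gradient to zero gives $K\,ECE^\top=D^\top MCE^\top$. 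This is solved uniquely by $\widehat M_{s,t}^{\mathcal Z}=D^\top MCE^\top(ECE^\top)^{-1}$.

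\emph{Covariance term.} Here I would avoid differentiating the matrix square root in \cref{eq:bures-distance}. Instead I would use that $d_{\rm B}^2(A,Q)=W_2^2(\mathcal N(0,A),\mathcal N(0,Q))$ and argue with couplings.

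For any $S\succeq0$, put $A=DSD^\top$. Take any coupling $(Y,X)$ of $\mathcal N(0,A)$ and $\mathcal N(0,Q)$. Then $Y\in\operatorname{range}(D)$ almost surely, so $(I-P)Y=0$, and Pythagoras gives
\[
\mathbb E\|Y-X\|^2=\mathbb E\|Y-PX\|^2+\mathbb E\|(I-P)X\|^2\ \ge\ \operatorname{tr}\bigl((I-P)Q\bigr).
\]
This lower bound does not depend on $S$. It is attained by the choice $S=D^\top QD$: then $A=PQP$, and the coupling $Y=PX$ achieves equality. So $D^\top QD$ is a minimizer, and the minimum value is $\operatorname{tr}((I-P)Q)$.

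For uniqueness, suppose $S$ attains this value. An optimal coupling exists for Gaussian measures, so take one. Equality in the display then forces $\mathbb E\|Y-PX\|^2=0$, that is, $Y=PX$ almost surely. Consequently $A=\operatorname{Cov}(Y)=PQP$, and applying $D^\top(\cdot)D$ recovers $S=D^\top AD=D^\top QD$.

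The main obstacle is this uniqueness claim for the covariance part, because the Bures distance is not obviously strictly convex in $S$. The coupling argument above is the route I would take to settle it. The mean part is routine least squares.
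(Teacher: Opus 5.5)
Your proposal is correct. The mean part coincides with the paper's argument: expand using $D^\top D=I$, impose stationarity to get $\widehat M_{s,t}^{\mathcal Z}EC_sE^\top=D^\top M_{s,t}C_sE^\top$, and use nonsingularity of $EC_sE^\top$ for strict convexity.

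The two routes diverge in the covariance part. The paper stays algebraic. It uses $\operatorname{tr}(DQ_{s,t}^{\mathcal Z}D^\top)=\operatorname{tr}(Q_{s,t}^{\mathcal Z})$ and the fact that $Q_{s,t}^{1/2}DQ_{s,t}^{\mathcal Z}D^\top Q_{s,t}^{1/2}$ and $(Q_{s,t}^{\mathcal Z})^{1/2}D^\top Q_{s,t}D(Q_{s,t}^{\mathcal Z})^{1/2}$ have the same nonzero spectrum. This rewrites the covariance contribution as $d_{\rm B}^2(Q_{s,t}^{\mathcal Z},D^\top Q_{s,t}D)+\operatorname{tr}(Q_{s,t})-\operatorname{tr}(D^\top Q_{s,t}D)$, a Bures distance in the $n_z$-dimensional latent space plus a constant. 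Uniqueness then follows from the metric property of $d_{\rm B}$.

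Your coupling and Pythagoras argument with the projector $P=DD^\top$ avoids matrix square roots altogether. It needs only two facts: Gelbrich's identification of $d_{\rm B}^2$ with $W_2^2$ between centered Gaussians, which the paper's decomposition already uses, and the existence of optimal couplings. Your uniqueness step is sound: $Y=PX$ almost surely, and then $S=D^\top(DSD^\top)D$ recovers $S$.

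The paper's route is shorter but treats the spectral identity and the Bures metric property as black boxes. Yours is a bit longer, but it makes the optimizer geometrically transparent. The optimal decoded law is the orthogonal projection $P_\#\mathcal N(0,Q_{s,t})$, with minimum value $\operatorname{tr}((I-P)Q_{s,t})=\operatorname{tr}(Q_{s,t})-\operatorname{tr}(D^\top Q_{s,t}D)$. It also proves more than required: the same argument shows this projection is the unique $W_2$-nearest law to $\mathcal N(0,Q_{s,t})$ among all laws with finite second moment supported on $\operatorname{range}(D)$, Gaussian or not.
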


\begin{proof}
We first note that the assumption $D^\top D=I$ entails no loss of generality. Indeed, any full-column-rank decoder $\widetilde D$ admits a thin QR factorization $\widetilde D=DR$ with $D^\top D=I$ and $R\in\mathbb R^{n_z\times n_z}$ invertible. Under the corresponding change of latent coordinates $E_R:=RE$, $M_{s,t,R}^{\mathcal Z}:=RM_{s,t}^{\mathcal Z}R^{-1}$, and $Q_{s,t,R}^{\mathcal Z}:=RQ_{s,t}^{\mathcal Z}R^\top$, we have
\[
    DM_{s,t,R}^{\mathcal Z}E_R = \widetilde D M_{s,t}^{\mathcal Z}E, \qquad DQ_{s,t,R}^{\mathcal Z}D^\top = \widetilde DQ_{s,t}^{\mathcal Z}\widetilde D^\top.
\]
Hence the induced physical-space conditional transition distribution is unchanged, so restricting the analysis to $D^\top D=I$ does not restrict the model class. 

From \cref{eq:linear-gaussian-w2-decomposition}, the objective can then be written as
\begin{align}
    \mathcal J_{s,t} \left( E,D,M_{s,t}^{\mathcal Z},Q_{s,t}^{\mathcal Z} \right)
    ={}& \left\| \left( M_{s,t}-DM_{s,t}^{\mathcal Z}E \right) C_s^{1/2} \right\|_{\mathrm F}^2 \nonumber\\
    &+ \operatorname{tr} \left( Q_{s,t}^{\mathcal Z} \right) + \operatorname{tr} \left( Q_{s,t} \right) - 2\operatorname{tr} \left[ \left( (Q_{s,t}^{\mathcal Z})^{1/2} D^\top Q_{s,t}D (Q_{s,t}^{\mathcal Z})^{1/2} \right)^{1/2} \right].
\label{eq:linear-gaussian-objective-expanded}
\end{align}
The first term depends on $M_{s,t}^{\mathcal Z}$ but not on $Q_{s,t}^{\mathcal Z}$, whereas the remaining terms depend on $Q_{s,t}^{\mathcal Z}$ but not on $M_{s,t}^{\mathcal Z}$. Hence the two latent transition parameters may be optimized separately.

\begin{enumerate}
    \item Expanding the conditional-mean contribution gives
    \begin{align}
        \left\| \left( M_{s,t}-DM_{s,t}^{\mathcal Z}E \right)C_s^{1/2} \right\|_{\mathrm F}^2 = \operatorname{tr} \left( M_{s,t}C_sM_{s,t}^\top \right) - 2\operatorname{tr} \left( M_{s,t}^{\mathcal Z} EC_sM_{s,t}^\top D \right) + \operatorname{tr} \left( M_{s,t}^{\mathcal Z} EC_sE^\top (M_{s,t}^{\mathcal Z})^\top \right).
    \end{align}
    The first term is independent of $M_{s,t}^{\mathcal Z}$. Differentiating the remaining quadratic expression with respect to $M_{s,t}^{\mathcal Z}$ and imposing stationarity gives
    \[
        \widehat M_{s,t}^{\mathcal Z}EC_sE^\top = D^\top M_{s,t}C_sE^\top.
    \]
    Since $EC_sE^\top$ is nonsingular, the quadratic term is strictly convex in $M_{s,t}^{\mathcal Z}$, so this stationary point is the unique minimizer. Therefore,
    \[
        \widehat M_{s,t}^{\mathcal Z} = D^\top M_{s,t}C_sE^\top \left( EC_sE^\top \right)^{-1}.
    \]
    \item We next minimize with respect to $Q_{s,t}^{\mathcal Z}$. Using the definition of the Bures--Wasserstein distance in \cref{eq:bures-distance}, the covariance contribution can be written as
    \begin{align} \operatorname{tr}\left(Q_{s,t}^{\mathcal Z}\right) + \operatorname{tr}\left(Q_{s,t}\right) - 2&\operatorname{tr} \left[ \left( (Q_{s,t}^{\mathcal Z})^{1/2} D^\top Q_{s,t}D (Q_{s,t}^{\mathcal Z})^{1/2} \right)^{1/2} \right] \nonumber\\
    &\qquad= d_{\rm B}^2 \left( Q_{s,t}^{\mathcal Z}, D^\top Q_{s,t}D \right) + \operatorname{tr}(Q_{s,t}) - \operatorname{tr} \left( D^\top Q_{s,t}D \right).
    \end{align}
    The remaining terms are independent of $Q_{s,t}^{\mathcal Z}$, so the minimization reduces to the Bures--Wasserstein distance above, which is uniquely minimized at
    \[
        \widehat Q_{s,t}^{\mathcal Z} = D^\top Q_{s,t}D.
    \]
\end{enumerate}
This proves \cref{eq:optimal-latent-gaussian}.
\end{proof}

By \cref{prop:optimal-latent-gaussian-transition}, the latent transition parameters can therefore be eliminated analytically, and the joint optimization reduces to
\[
    \min_{E,D} \mathcal J_{s,t} \left( E,D, \widehat M_{s,t}^{\mathcal Z}(E,D), \widehat Q_{s,t}^{\mathcal Z}(D) \right).
\]
Thus, the latent Gaussian transition is determined explicitly for every candidate encoder and decoder, leaving only the optimization over $E$ and $D$. This is precisely the structure of a variable-projection reduction \citep{golub1973differentiation,chung2023variable}.

\subsection{Closed-form benchmark for a fixed source and target time}
\label{app:transition-wise-optimum}

The preceding reduction applies to a Stochastic Latent Twin that uses the same encoder and decoder across all considered source and target times. Although the optimal latent Gaussian transition can be eliminated separately for each $(s,t)$, the remaining optimization over $E$ and $D$ must represent all transitions simultaneously and generally has no closed-form solution.

To obtain a closed-form lower bound at a prescribed latent dimension, we consider a relaxed reference problem in which the encoder and decoder may be optimized separately for each time pair $(s,t)$. We denote these pair-dependent maps by $E_{s,t}$ and $D_{s,t}$. Since $E_{s,t}$ may depend jointly on the source and target times, the latent mean-propagation matrix can be absorbed into the encoder and may, without loss of generality, be taken as $M_{s,t}^{\mathcal Z}=I$. The resulting reference problem is therefore
\begin{equation} \label{eq:pair-dependent-benchmark}
    \min_{\substack{ E_{s,t},\,D_{s,t},\,Q_{s,t}^{\mathcal Z}\succeq0\\ D_{s,t}^\top D_{s,t}=I}} \left\| \left( M_{s,t}-D_{s,t}E_{s,t} \right)C_s^{1/2} \right\|_{\mathrm F}^2 + d_{\rm B}^2 \left( D_{s,t}Q_{s,t}^{\mathcal Z}D_{s,t}^\top, Q_{s,t} \right).
\end{equation}

\begin{proposition}[Optimal pair-dependent linear-Gaussian approximation] \label{prop:transition-wise-optimum} 
Let $ H_{s,t} := M_{s,t}C_sM_{s,t}^\top + Q_{s,t}$, and $ \eta_1\geq\eta_2\geq\cdots\geq\eta_{n_x}\geq0 $ denote its eigenvalues. Further, let $U_{s,t,n_z}$ contain eigenvectors associated with the $n_z$ largest eigenvalues. Then an optimal solution of \cref{eq:pair-dependent-benchmark} is
\[
    D_{s,t}^{\star} = U_{s,t,n_z}, \qquad E_{s,t}^{\star} = U_{s,t,n_z}^\top M_{s,t}, \qquad Q_{s,t}^{\mathcal Z,\star} = U_{s,t,n_z}^\top Q_{s,t}U_{s,t,n_z}.
\]
and the minimum expected squared $2$-Wasserstein transition error is
\[
    \mathcal E_{s,t}^{\star}(n_z) = \sum_{i=n_z+1}^{n_x}\eta_i.
\]
\end{proposition}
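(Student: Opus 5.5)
The plan is to reduce \cref{eq:pair-dependent-benchmark} to a trace minimization over rank-$n_z$ orthogonal projectors and then apply Ky Fan's maximum principle. Throughout, fix an admissible $D=D_{s,t}$ with $D^\top D=I$, and write $P:=DD^\top$ for the orthogonal projector onto its range. The mean and covariance terms in \cref{eq:pair-dependent-benchmark} decouple, exactly as in the proof of \cref{prop:optimal-latent-gaussian-transition}. So I will minimize each term separately over $E_{s,t}$ and over $Q_{s,t}^{\mathcal Z}$ for this fixed $D$.

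\emph{Mean term.} The map $E\mapsto\|(M_{s,t}-DE)C_s^{1/2}\|_{\mathrm F}^2$ is a least-squares problem. Since $D$ has orthonormal columns, I will split
\[
M_{s,t}C_s^{1/2}=PM_{s,t}C_s^{1/2}+(I-P)M_{s,t}C_s^{1/2}.
\]
The second piece is orthogonal to every matrix of the form $DEC_s^{1/2}$, so the residual is at least $\|(I-P)M_{s,t}C_s^{1/2}\|_{\mathrm F}^2$. This value is attained at $E=D^\top M_{s,t}$, and no invertibility of $C_s$ is needed for that. The minimal mean error is therefore $\operatorname{tr}\bigl((I-P)M_{s,t}C_sM_{s,t}^\top\bigr)$.

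\emph{Covariance term.} The second step of the proof of \cref{prop:optimal-latent-gaussian-transition} already gives, for every $Q^{\mathcal Z}\succeq0$,
\[
d_{\rm B}^2\bigl(DQ^{\mathcal Z}D^\top,Q_{s,t}\bigr)=d_{\rm B}^2\bigl(Q^{\mathcal Z},D^\top Q_{s,t}D\bigr)+\operatorname{tr}(Q_{s,t})-\operatorname{tr}(D^\top Q_{s,t}D).
\]
This is minimized at $Q^{\mathcal Z}=D^\top Q_{s,t}D$, with minimal value $\operatorname{tr}\bigl((I-P)Q_{s,t}\bigr)$. I will also double-check this value directly. With $S=PQ_{s,t}P$, we have $Q_{s,t}^{1/2}SQ_{s,t}^{1/2}=\bigl(Q_{s,t}^{1/2}PQ_{s,t}^{1/2}\bigr)^2$. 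Its positive semidefinite square root is therefore $Q_{s,t}^{1/2}PQ_{s,t}^{1/2}$, whose trace is $\operatorname{tr}(PQ_{s,t})$. Substituting into \cref{eq:bures-distance} gives $\operatorname{tr}(PQ_{s,t})+\operatorname{tr}(Q_{s,t})-2\operatorname{tr}(PQ_{s,t})=\operatorname{tr}((I-P)Q_{s,t})$.

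\emph{Reduction to a projector problem.} Adding the two minimal values, for fixed $D$ the optimal objective is
\[
\operatorname{tr}\bigl((I-P)H_{s,t}\bigr)=\operatorname{tr}(H_{s,t})-\operatorname{tr}(D^\top H_{s,t}D).
\]
Minimizing over $D$ with $D^\top D=I_{n_z}$ is the same as maximizing $\operatorname{tr}(D^\top H_{s,t}D)$. Since $H_{s,t}$ is symmetric positive semidefinite, Ky Fan's maximum principle \citep{bhatia1997matrix} gives the maximum $\sum_{i\le n_z}\eta_i$, attained at $D=U_{s,t,n_z}$. Hence the minimum is $\sum_{i>n_z}\eta_i$. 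The minimizers stated in the proposition are then $D^\star=U_{s,t,n_z}$, $E^\star=U_{s,t,n_z}^\top M_{s,t}$ and $Q^{\mathcal Z,\star}=U_{s,t,n_z}^\top Q_{s,t}U_{s,t,n_z}$.

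I expect the main obstacle to be the Bures computation, namely confirming that the square root of $Q_{s,t}^{1/2}PQ_{s,t}P\,Q_{s,t}^{1/2}$ collapses as claimed. The squaring identity above settles it. A second point to check is that the rank constraint on $E$ is harmless: if $U_{s,t,n_z}^\top M_{s,t}$ happens to be rank-deficient, the infimum is still approached by small perturbations. Finally, the lower bound must hold over \emph{all} admissible triples and not just the candidate. That is ensured because for every fixed $D$ the bound holds for arbitrary $E$ and $Q^{\mathcal Z}$, and the Ky Fan bound then holds for every $D$.
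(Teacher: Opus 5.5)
Your proof is correct and follows the paper's own argument: for fixed $D_{s,t}$ you eliminate $E_{s,t}=D_{s,t}^\top M_{s,t}$ and $Q_{s,t}^{\mathcal Z}=D_{s,t}^\top Q_{s,t}D_{s,t}$, reduce to maximizing $\operatorname{tr}(D_{s,t}^\top H_{s,t}D_{s,t})$, and conclude with Ky Fan; your projector argument and explicit Bures check are slightly more careful versions of the paper's stationarity step and its appeal to the covariance step of \cref{prop:optimal-latent-gaussian-transition}. Your rank concern is moot, since $M_{s,t}=\mathrm e^{(t-s)A}$ is invertible, so $U_{s,t,n_z}^\top M_{s,t}$ has full row rank, and \cref{eq:pair-dependent-benchmark} imposes no rank constraint on $E_{s,t}$ in any case.
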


\begin{proof}
For a given $D_{s,t}$, expanding the conditional-mean contribution and using $D_{s,t}^\top D_{s,t}=I$ gives
\[
    \left\| \left( M_{s,t}-D_{s,t}E_{s,t} \right) C_s^{1/2} \right\|_{\mathrm F}^2 = \operatorname{tr}\left(M_{s,t}C_sM_{s,t}^\top\right) -2\operatorname{tr}\left(E_{s,t}C_sM_{s,t}^\top D_{s,t}\right) +\operatorname{tr}\left(E_{s,t}C_sE_{s,t}^\top\right).
\]
The covariance contribution is independent of $E_{s,t}$, so the optimization over the encoder is determined entirely by the conditional-mean term. The quadratic expression is minimized by
\[
    E_{s,t}^{\star}=D_{s,t}^\top M_{s,t}.
\]
By \cref{prop:optimal-latent-gaussian-transition}, for a given $D_{s,t}$ the covariance contribution is minimized by
\[
    Q_{s,t}^{\mathcal Z,\star}=D_{s,t}^\top Q_{s,t}D_{s,t}.
\]
Hence, after eliminating $E_{s,t}$ and $Q_{s,t}^{\mathcal Z}$, the remaining optimization is
\[
    \min_{D_{s,t}^\top D_{s,t}=I} \quad \operatorname{tr}(H_{s,t}) - \operatorname{tr}\left(D_{s,t}^\top H_{s,t}D_{s,t}\right),
\]
where $H_{s,t}$ is defined in \cref{prop:transition-wise-optimum}. Since $\operatorname{tr}(H_{s,t})$ is independent of $D_{s,t}$, this is equivalent to
\[
    \max_{D_{s,t}^\top D_{s,t}=I} \operatorname{tr}\left(D_{s,t}^\top H_{s,t}D_{s,t}\right).
\]
For a symmetric positive semidefinite matrix $H_{s,t}$ with eigenvalues $\eta_1\geq\cdots\geq\eta_{n_x}$, the quantity $\operatorname{tr}(D_{s,t}^\top H_{s,t}D_{s,t})$ over all $D_{s,t}$ with $D_{s,t}^\top D_{s,t}=I$ is maximized by choosing the columns of $D_{s,t}$ as eigenvectors associated with the $n_z$ largest eigenvalues. This result is known as the Ky Fan maximum principle \citep{bhatia1997matrix}. Hence, the optimal decoder, encoder, and latent covariance are
\[
    D_{s,t}^{\star}=U_{s,t,n_z}, \qquad E_{s,t}^{\star}=U_{s,t,n_z}^\top M_{s,t}, \qquad Q_{s,t}^{\mathcal Z,\star} = U_{s,t,n_z}^\top Q_{s,t}U_{s,t,n_z}.
\]

The minimum value follows from the eigenvalues $\eta_1\geq\cdots\geq\eta_{n_x}\geq0$ of $H_{s,t}$. Since
\[
    \operatorname{tr}(H_{s,t}) = \sum_{i=1}^{n_x}\eta_i 
    \qquad \text{and} \qquad
    \max_{D_{s,t}^\top D_{s,t}=I} \operatorname{tr} \left( D_{s,t}^\top H_{s,t}D_{s,t} \right) = \sum_{i=1}^{n_z}\eta_i,
\]
the minimum expected squared $2$-Wasserstein transition error is
\[
    \mathcal E_{s,t}^{\star}(n_z) = \sum_{i=n_z+1}^{n_x}\eta_i.
\]
\end{proof}

Because the encoder and decoder in this reference problem may depend jointly
on $(s,t)$, \cref{eq:linear-transition-optimal-error} represents a best-case
transition error within the pair-dependent linear-Gaussian factorization, not
the error of a Stochastic Latent Twin with a common encoder--decoder
representation. It therefore provides a lower reference value for the latter.

\subsection{Extensions of the linear parameterization}
\label{app:linear-extensions}

The preceding analysis considers a time independent encoder and decoder and allows the latent Gaussian transition to depend directly on $(s,t)$. We briefly note two related structured parameterizations.

\paragraph{Time-dependent encoder and decoder.}
One may allow the representation to depend separately on the source and target times,
\[
    e_s(x)=E_sx, \qquad d_t(z)=D_tz.
\]
Then
\[
    \Psi_{s,t}(x_s) = \mathcal N\!\left( D_tM_{s,t}^{\mathcal Z}E_sx_s, D_tQ_{s,t}^{\mathcal Z}D_t^\top \right).
\]
If $D_t^\top D_t=I$ and $E_sC_sE_s^\top$ is nonsingular, the preceding elimination carries over directly:
\[
    \widehat M_{s,t}^{\mathcal Z} = D_t^\top M_{s,t}C_sE_s^\top \left(E_sC_sE_s^\top\right)^{-1}, \qquad \widehat Q_{s,t}^{\mathcal Z} = D_t^\top Q_{s,t}D_t.
\]
Thus the inner latent transition law remains available in closed form, while the outer representation problem is now posed over the time-dependent matrices $E_s$ and $D_t$. This differs from the pair-dependent benchmark in \cref{app:transition-wise-optimum}, where the representation may depend jointly on $(s,t)$.

\paragraph{Generator-constrained latent mean transitions.}
A different restriction is to impose
\[
    M_{s,t}^{\mathcal Z} = \mathrm e^{(t-s)A_{\mathcal Z}}
\]
for a single latent generator $A_{\mathcal Z}$. This couples the latent mean transitions across prediction intervals, and the pointwise optimal matrices from \cref{eq:optimal-latent-gaussian} need not be representable by one common matrix exponential. Hence the latent mean transition can generally no longer be eliminated independently for each $(s,t)$, and $E$, $D$, and $A_{\mathcal Z}$ must instead be optimized jointly.

\begin{remark}[Nonlinear representations]
The closed-form elimination above extends to nonlinear encoders as long as the decoder remains linear. Writing $Z_s=e(X_s)$, one obtains
\[
    \widehat M_{s,t}^{\mathcal Z} = D^\top M_{s,t}\mathbb E[X_sZ_s^\top] \left(\mathbb E[Z_sZ_s^\top]\right)^{-1}, \qquad \widehat Q_{s,t}^{\mathcal Z} = D^\top Q_{s,t}D,
\]
provided $\mathbb E[Z_sZ_s^\top]$ is nonsingular. For nonlinear decoders, however, the Gaussian latent law is generally mapped to a non-Gaussian physical-space distribution, and the corresponding closed-form elimination is lost.
\end{remark}


\section{Additional numerical details for the linear SDE}
\label{app:linear-numerics}

The linear numerical experiment uses a stable, nonnormal $n_x=6$ dimensional SDE. The nonnormal dynamics avoid a setting in which the dominant transition subspaces are trivially aligned across all prediction horizons. Let $C_0=\mathbb E[X_0X_0^\top]$ denote the initial second-moment matrix. For the linear SDE, the source-state second moments are propagated analytically as
\[
    C_s = M_{0,s}C_0M_{0,s}^\top+Q_{0,s},
\]
and these matrices are used in the Wasserstein objectives below.

Both representations are trained by minimizing the pair-averaged objective of \cref{eq:linear-gaussian-w2-objective}, with the latent transition parameters $M_{s,t}^{\mathcal Z},Q_{s,t}^{\mathcal Z}$ eliminated in closed form at every pair via \cref{prop:optimal-latent-gaussian-transition}, so only $(E,D)$ -- or $(E_s,D_t)$ for the time-dependent representation -- remain as free parameters. The decoder is parameterized as an unconstrained matrix re-orthonormalized by a QR factorization on every forward evaluation, so $D^\top D=I$ holds exactly throughout training rather than only approximately or only at initialization; the encoder is unconstrained. Both are initialized with i.i.d.\ standard Gaussian entries, the encoder additionally scaled by $1/\sqrt{n_x}$. Training pairs are the exhaustive set of forward pairs $s<t$ over a common $100$-point uniform time grid on $[0,t_{\max}]$ (no subsampling). Parameters are optimized with Adam at learning rate $0.03$, cosine-annealed to $2\%$ of that value over $800$ steps, from $3$ independent random restarts; the restart attaining the lowest final training objective is kept. This procedure is repeated independently for each reported latent dimension $n_z$. The relaxed solution of \cref{prop:transition-wise-optimum} requires no numerical optimization and provides the optimal lower bound used for comparison.

For \cref{fig:linear-sde} (right), the reported error at each latent dimension is the arithmetic mean of $\sqrt{\mathcal J_{s,t}}$ over all $4{,}950$ forward pairs on the $100$-point time grid. For the optimal lower bound, $\mathcal J_{s,t}$ takes the minimum value $\mathcal E_{s,t}^{\star}(n_z)$ of \cref{prop:transition-wise-optimum}; for the trainable representations, the same objective is evaluated at the optimized $(E,D)$ or $(E_s,D_t)$. The square root is taken for each time pair before averaging.

Beyond the fixed rank $n_z=n_x=6$ at which the transition law is recovered exactly, it is useful to see how the three references of \cref{fig:linear-sde} (right) compare across the prediction horizon itself rather than only in the single horizon-averaged number reported there. \cref{fig:linear-sde-method-comparison} plots $\sqrt{\mathcal J_{s,t}}$ against $t-s$ at a fixed source time $s=0$, for $n_z=1,\ldots,5$. The fixed representation $(E,D)$ tracks the pair-dependent lower bound and the time-dependent representation closely at short horizons for every rank, and the three remain close throughout at low rank; the gap that opens at longer horizons for the higher ranks is consistent with a single $(E,D)$ having to serve every horizon simultaneously, whereas the pair-dependent bound and $(E_s,D_t)$ are each allowed to specialize to the horizon being evaluated.

\begin{figure}
    \centering
    \includegraphics[width=0.85\linewidth]{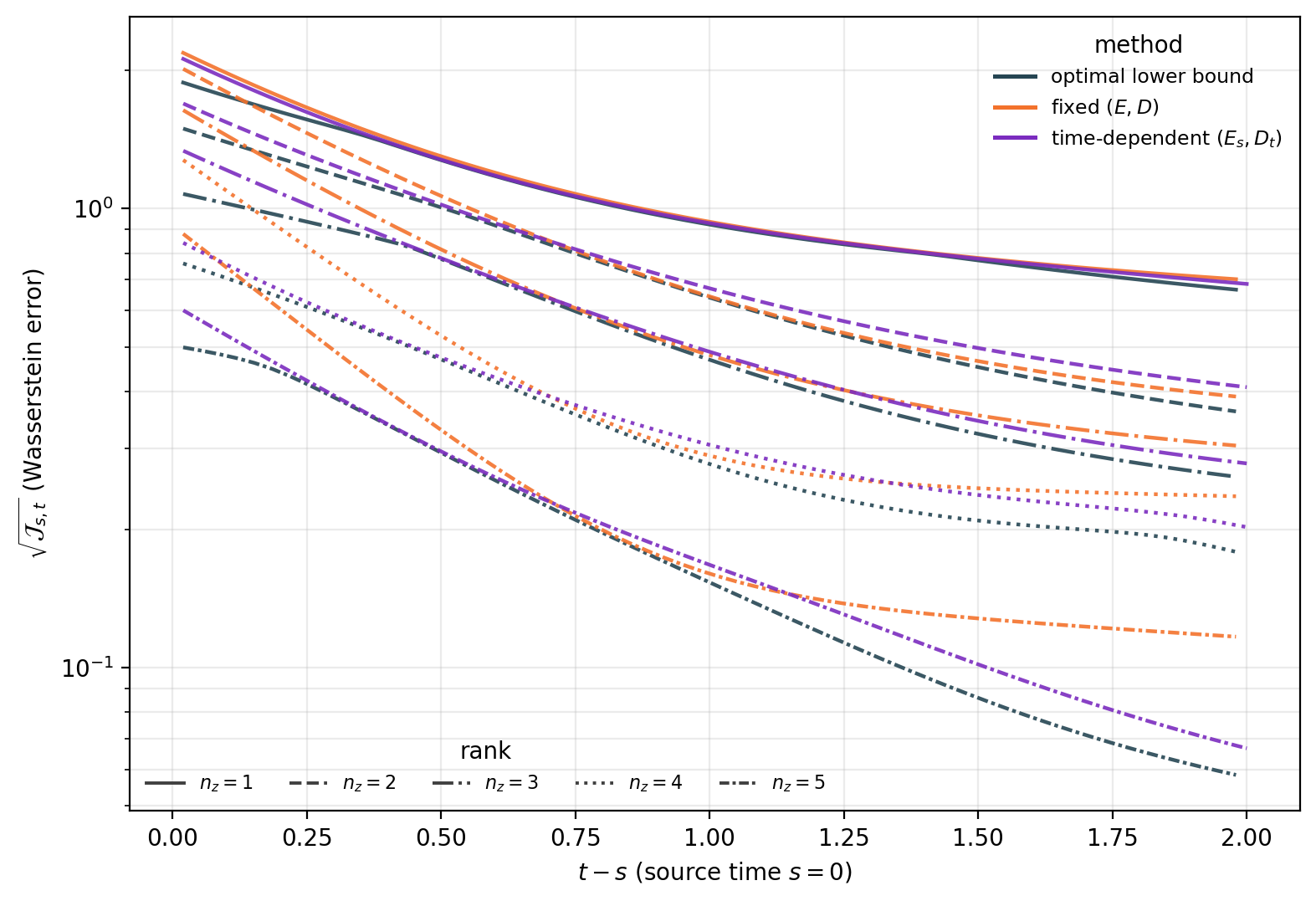}
    \caption{Wasserstein transition error $\sqrt{\mathcal J_{s,t}}$ as a function of prediction horizon $t-s$ at fixed source time $s=0$ for the optimal lower bound, the fixed representation $(E,D)$, and the time-dependent representation $(E_s,D_t)$, for latent dimensions $n_z=1,\ldots,5$.}
    \label{fig:linear-sde-method-comparison}
\end{figure}

We also illustrate trajectory generation using the direct latent recursion described above. \Cref{fig:linear-sde-latent-twin} shows representative trajectories for several latent dimensions. As $n_z$ increases, the predicted mean and uncertainty approach the exact conditional evolution, and at $n_z=6$ the full transition law is recovered.

\begin{figure}
    \centering
    \includegraphics[width=\linewidth]{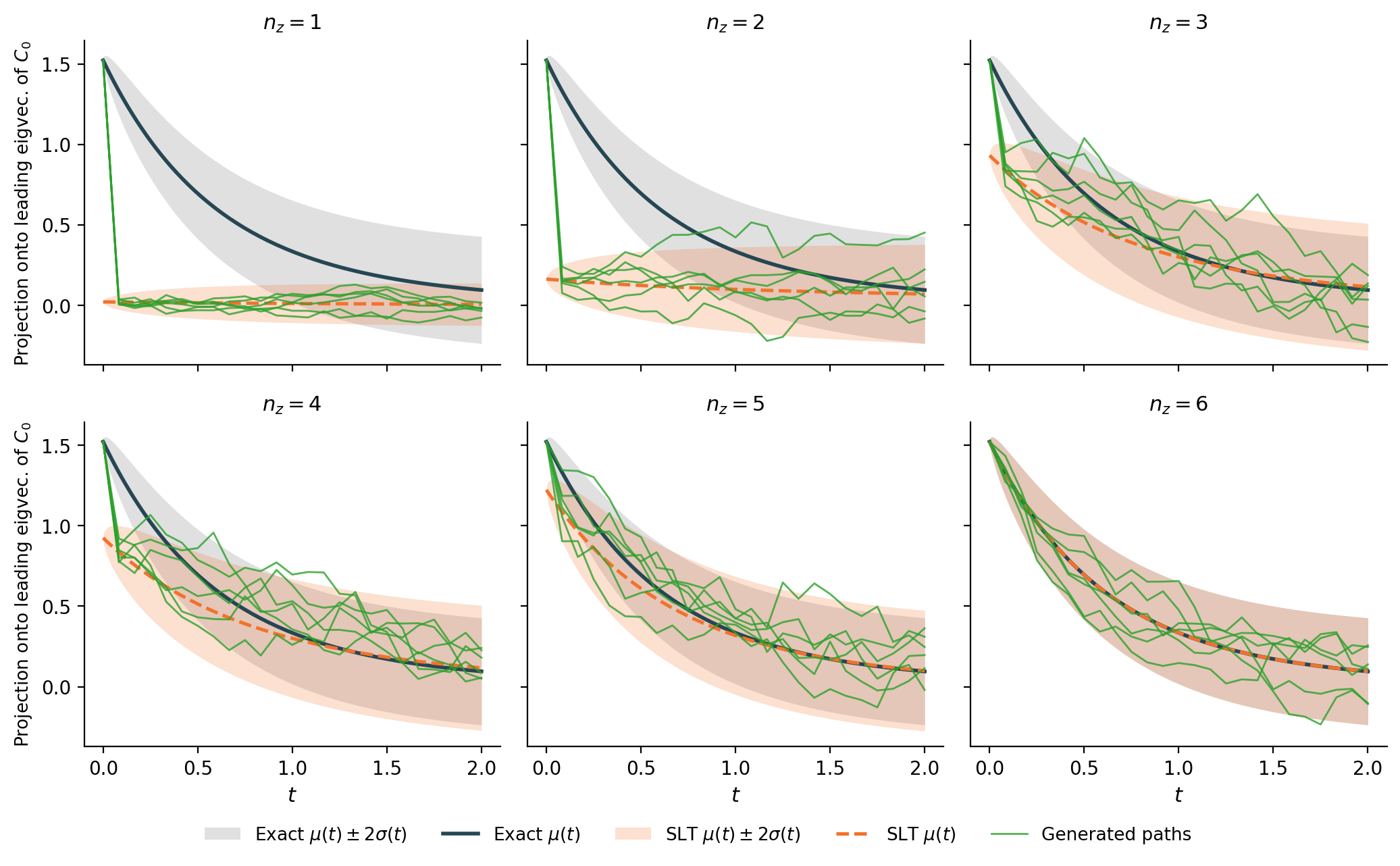}
    \caption{Recursive trajectory generation with the fixed representation $(E,D)$ for latent dimensions $n_z =1,\ldots, 6$. Exact and predicted conditional means and uncertainty regions are shown together with generated trajectories.}
    \label{fig:linear-sde-latent-twin}
\end{figure}
 
\section{Additional details for the CIR experiment}
\label{app:cir-numerics}

\subsection{Transition law, training design, and evaluation}
\label{app:cir-data}

For the CIR process in \cref{eq:cir-sde}, let $h=t-s$. The conditional
transition law in \cref{eq:cir-transition} can be written as
\[
    \Phi_{s,t}(x_s)
    =
    c_{s,t}\,
    \chi^2_{\nu}\!\left(\lambda_{s,t}(x_s)\right),
\]
with
\[
    c_{s,t}
    =
    \frac{\sigma^2\bigl(1-\mathrm e^{-\kappa h}\bigr)}{4\kappa},
    \qquad
    \nu
    =
    \frac{4\kappa\theta}{\sigma^2},
    \qquad
    \lambda_{s,t}(x_s)
    =
    \frac{\mathrm e^{-\kappa h}x_s}{c_{s,t}}.
\]
For $\kappa=1$, $\theta=1$, and $\sigma=1.2$, we have $\nu\approx2.78$. The condition
$2\kappa\theta/\sigma^2\approx1.39>1$ also places the process in the
strictly positive Feller regime.

Source states are sampled from the stationary CIR distribution and
prediction horizons from
\[
    h\sim\mathcal U[0.05,2].
\]
The same source-state and prediction-horizon design points are used for all
learned models.

The amount of conditional-distribution information supplied during training
is matched across the different objectives. For models trained directly with
a Wasserstein objective, the exact conditional law is represented by
128 quantile values at each design point. For likelihood-trained models,
128 independent target states are drawn from the exact conditional
transition at the same design point. Thus differences between the reported
models are not caused simply by providing one model with more realizations
of each conditional distribution.

All models are evaluated on a common collection of source states and
prediction horizons. When the predicted physical transition admits a
convenient quantile representation, the one-dimensional $2$-Wasserstein
distance is evaluated directly. Otherwise it is evaluated from empirical
quantiles of samples from the learned transition. Finite-sample quantiles
introduce a small positive evaluation error for highly accurate models.
We therefore apply the same sample-based estimator to samples drawn from the
exact CIR transition. The resulting nonzero value for the exact transition
should be interpreted as the numerical resolution of the evaluation
procedure rather than approximation error.

\subsection{Optimal Gaussian approximation}
\label{app:optimal-gaussian}

For a fixed source state $x_s$ and time pair $(s,t)$, the best Gaussian approximation is the minimizer in \cref{eq:cir-best-gaussian-problem}, and $\mathcal E_{\rm G}^{\star}(x_s,s,t)$ is its squared error.
Using the one-dimensional quantile representation of the Wasserstein
distance \citep{vallender1974calculation}, we have
\[
    W_2^2\!\left(
        \Phi_{s,t}(x_s),
        \mathcal N(\mu,\sigma^2)
    \right)
    =
    \int_0^1
    \left[
        \Phi_{s,t}^{-1}(u;x_s)
        -
        \mu
        -
        \sigma\Phi_{\rm N}^{-1}(u)
    \right]^2
    \,\mathrm du .
\]
Since
\[
    \int_0^1\Phi_{\rm N}^{-1}(u)\,\mathrm du=0,
    \qquad
    \int_0^1\bigl(\Phi_{\rm N}^{-1}(u)\bigr)^2\,\mathrm du=1,
\]
minimization with respect to $\mu$ gives
\[
    \mu^\star_{s,t}(x_s)
    =
    \int_0^1
    \Phi_{s,t}^{-1}(u;x_s)\,\mathrm du
    =
    \int_{\mathbb R}x\,\mathrm d\Phi_{s,t}(x_s)(x),
\]
while minimization with respect to $\sigma$ gives
\[
    \sigma^\star_{s,t}(x_s)
    =
    \int_0^1
    \Phi_{s,t}^{-1}(u;x_s)
    \Phi_{\rm N}^{-1}(u)
    \,\mathrm du .
\]
Substituting these minimizers gives \cref{eq:cir-optimal-gaussian-error}. This characterization is the one-dimensional $W_2$ projection onto the Gaussian family; see also \citet{wang2026relative}.

\subsection{Model parameterizations and training}
\label{app:cir-models}

The CIR experiment compares several ways of enlarging the class of conditional transition laws.

\paragraph{Gaussian latent transition law with linear representation.} For latent dimension $n_z$ we use
\[
    e(x)=Ex,
    \qquad
    d(z)=Dz,
\]
and
\[
    m(z_s,s,t)
    =
    \mathcal N\!\left(
        \mu_\theta(z_s,t-s),
        \Sigma_\theta(z_s,t-s)
    \right).
\]
A small neural network predicts the latent mean and a Cholesky factor of the latent covariance. Because the decoder is linear, the induced physical transition remains Gaussian. Models with $n_z=1,2,$ and $4$ are trained using the one-dimensional Wasserstein objective described above.

\paragraph{Gaussian-mixture latent transition law.} To enlarge the probability family while retaining a linear representation, we use
\[
    m(z_s,s,t)
    =
    \sum_{j=1}^{K}
    \pi_j(z_s,t-s)
    \mathcal N\!\left(
        \mu_j(z_s,t-s),
        \Sigma_j(z_s,t-s)
    \right),
\]
with $K=4$ in the reported experiment. The mixture parameters are learned from sampled transitions by maximum likelihood. A linear pushforward of this latent distribution remains a Gaussian mixture in physical space.

\paragraph{Gaussian latent transition law with nonlinear representation.} For the nonlinear-representation experiment, both the encoder and decoder are small multilayer perceptrons,
\[
    e:\mathbb R\rightarrow\mathbb R^{n_z},
    \qquad
    d:\mathbb R^{n_z}\rightarrow\mathbb R,
\]
while the latent transition law remains Gaussian. For $n_z=1$, standard-normal quantile nodes are transformed through the predicted latent Gaussian and the decoder and compared directly with the exact CIR quantiles. Because $d$ is nonlinear, the physical pushforward $d_\#m(z_s,s,t)$ is no longer restricted to the Gaussian family.

\paragraph{Learned noncentral-$\chi^2$ transition.} Finally, because the exact CIR transition family is known, we consider the identity representation $z=x$ and parameterize
\[
    m_\theta(x_s,s,t)
    =
    c_\theta(x_s,t-s)\,
    \chi^2_{\nu_\theta(x_s,t-s)}
    \!\left(
        \lambda_\theta(x_s,t-s)
    \right).
\]
A neural network maps
\[
    (x_s,t-s)
    \longmapsto
    \bigl(
        c_\theta,
        \nu_\theta,
        \lambda_\theta
    \bigr),
\]
with positive parameterizations for all three distribution parameters. The network is trained by negative log-likelihood using sampled target states and is not supplied with $\kappa$, $\theta$, $\sigma$, or the
analytical formulas for the CIR transition parameters.

\subsection{Distributional diagnostics}
\label{app:cir-diagnostics}

The mean Wasserstein errors reported in the main text summarize the discrepancy between complete conditional transition laws. To localize where these distributions differ, we additionally compare their conditional cumulative distribution functions. For a fixed source state $x_s$, define
\[
    F_{s,t}(x_t;x_s)
    :=
    \mathbb P(X_t\leq x_t\mid X_s=x_s).
\]
We evaluate the pointwise difference between the predicted and exact conditional CDFs at the representative source state $x_s=0.40$ over target values $x_t$ and prediction horizons $t-s$.

\Cref{fig:cir-cdf-deviation} shows that the linear-decoder Gaussian model exhibits the largest systematic discrepancy, reflecting its inability to capture the skewed CIR transition. The Gaussian mixture reduces this error, while the nonlinear representation and learned noncentral-$\chi^2$ transition provide substantially more accurate conditional distributions.

\begin{figure}
    \centering
    \includegraphics[width=\linewidth]{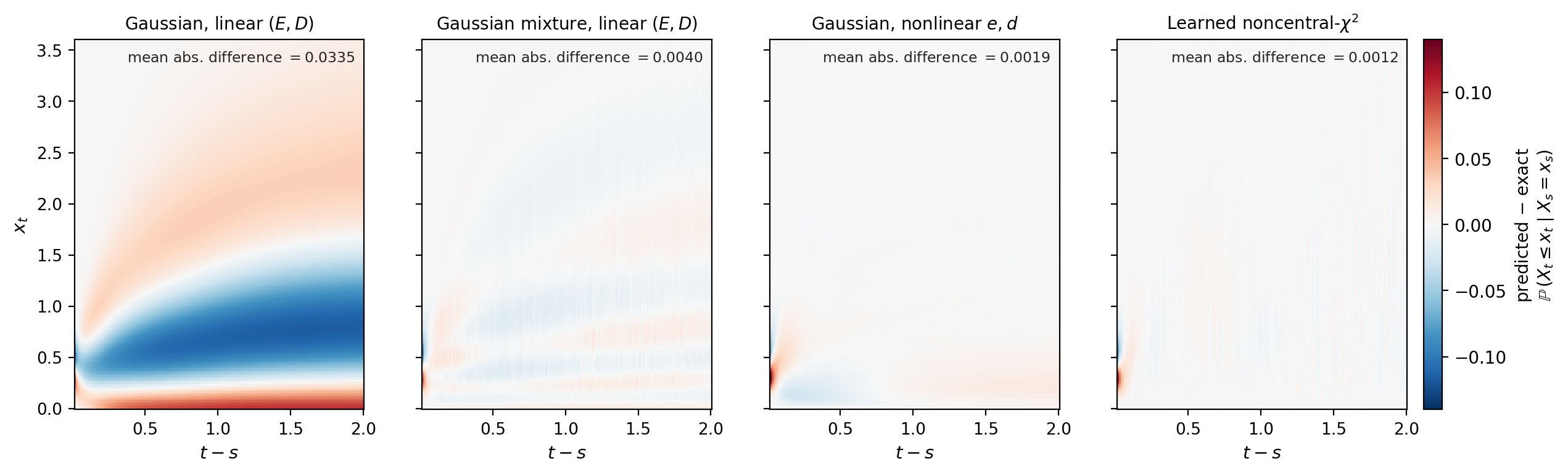}
    \caption{Pointwise conditional-CDF error for the CIR transition at $x_s=0.40$. The panels show $\widehat F_{s,t}(x_t;x_s)-F_{s,t}(x_t;x_s)$ as a function of target value $x_t$ and prediction horizon $t-s$ for the linear-Gaussian, Gaussian-mixture, nonlinear-representation, and learned noncentral-$\chi^2$ models.}
    \label{fig:cir-cdf-deviation}
\end{figure}

\subsection{Recursive generation of CIR trajectories}
\label{app:cir-recursion}

Although the Stochastic Latent Twin is trained as a conditional transition
model, repeated application of the learned physical transition operator can
also be used to generate discrete stochastic trajectories. Starting from
$x_k$ at time $t_k$, we use
\[
    z_k=e(x_k),
    \qquad
    Z_{k+1}\sim m(z_k,t_k,t_{k+1}),
    \qquad
    x_{k+1}=d(Z_{k+1}).
\]
For the direct noncentral-$\chi^2$ model, latent and physical states coincide
and the transition is applied directly to $x_k$.

The resulting ensembles are compared with the exact CIR marginals
$\Phi_{0,t}(x_0)$. The linear-Gaussian model retains the symmetric shape and
possible negative support inherited from its one-step transition family,
whereas the nonlinear-representation and noncentral-$\chi^2$ models more
closely reproduce the exact CIR marginals.

For diagnostic purposes, we additionally consider the pure latent recursion
\[
    Z_{k+1}\sim m(Z_k,t_k,t_{k+1}),
\]
without decoding and re-encoding at each step. This imposes the stronger
requirement that the learned latent evolution map remain accurate on latent
states generated by its own previous predictions. Since this property is not
imposed during training, we regard the experiment as a diagnostic of latent
closure rather than as part of the principal CIR comparison.

\begin{figure}
    \centering
    \includegraphics[width=\linewidth]{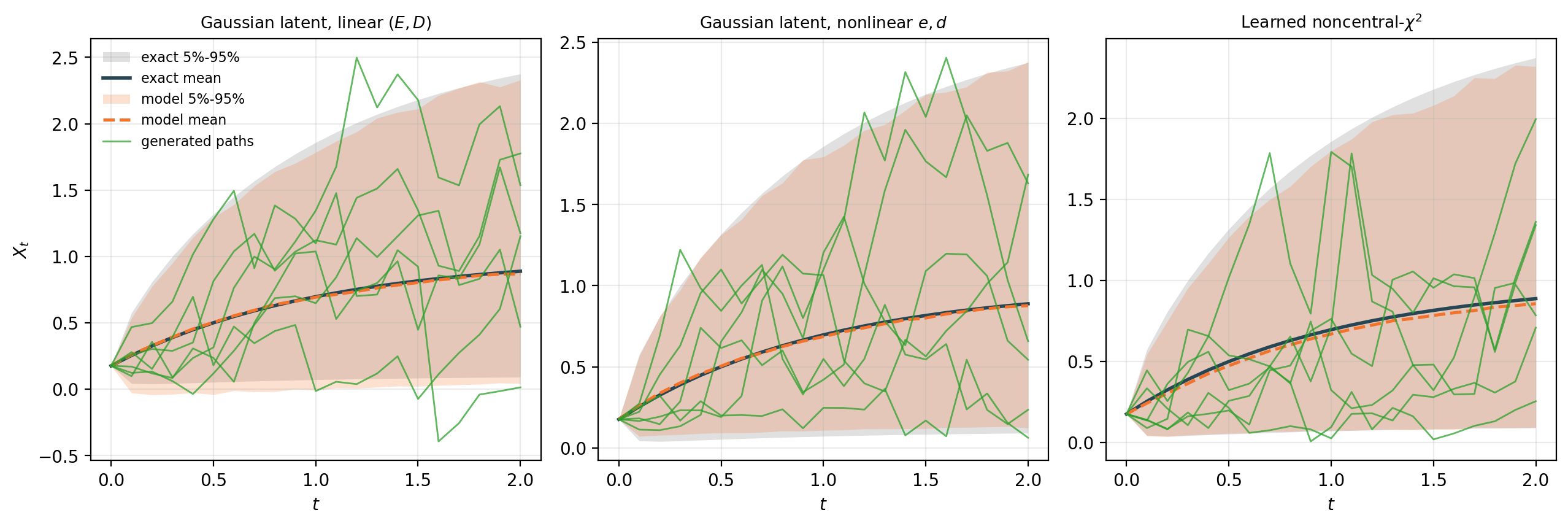}
    \caption{Recursive CIR trajectory generation. Ensembles obtained by
    repeated application of the learned physical transition operator are
    compared with the exact CIR marginals.}
    \label{fig:cir-recursive-paths}
\end{figure}

\section{Additional details for the stochastic Lorenz experiment}
\label{app:lorenz-numerics}

\subsection{Data generation and transition pairs}
\label{app:lorenz-data}

Following the Neural Stochastic Flows reference setup \citep{kiyohara2026neural},\footnote{\url{https://github.com/nkiyohara/jax-nsf/tree/main/examples/lorenz}} trajectories are generated from the stochastic Lorenz-63 system \cref{eq:lorenz-sde} using Euler--Maruyama with integration step size $10^{-3}$ and are recorded at $41$ equally spaced times on $[0,1]$. Initial states are sampled independently from $X_0\sim\mathcal N(0,I)$, with no additional observation noise. As in the reference implementation, we use $1{,}024$ training trajectories and $256$ validation trajectories; we additionally generate $1{,}024$ independent test trajectories.

States supplied to the Stochastic Latent Twin are standardized using training-set statistics only, whereas Neural Stochastic Flows is trained directly in physical coordinates. All direct comparisons are therefore reported in physical units.

Following the Neural Stochastic Flow training setup, every ordered pair of observation times $s_i<t_i$ is used as a transition pair, $(x_{s_i},x_{t_i},s_i,t_i)$,  giving $\binom{41}{2}=820$ source--target pairs per trajectory and $839{,}680$ training pairs. The same observed transition pairs are used for both methods.

\subsection{Architecture, training, and evaluation}
\label{app:lorenz-training}

The encoder and decoder are multilayer perceptrons with two hidden layers of width $128$ and SiLU activations. For the state-dependent Gaussian transition introduced in \cref{subsec:lorenz}, the transition network has two hidden layers of width $96$ with SiLU activations and takes $(z_s,s,t)$ as input. Its outputs parameterize the conditional mean and lower-triangular Cholesky factor, with a softplus applied to the diagonal entries. The resulting model has $50{,}743$ trainable parameters.

A latent-dimension screen over $n_z\in\{2,3,4,6,8,12\}$ showed that the validation energy score changes little from $n_z=4$ onward, while the reconstruction error continues to decrease. We therefore use $n_z=8$.

Training minimizes reconstruction of both states together with the energy score \cref{eq:energy-score-loss}, approximated with $16$ decoded predictive samples per conditioning state. We use Adam with batch size $512$, initial learning rate $10^{-3}$ with cosine annealing, and $100$ epochs.

Prediction accuracy is evaluated on $4{,}000$ held-out transitions drawn across all source and target times from the $1{,}024$ test trajectories. We additionally evaluate the direct $0\to1$ transition to assess the final bimodal state distribution and its $x^{(1)}$ marginal. For each test transition, we draw $64$ independent one-shot predictive samples. The predictive mean of the $64$ samples is computed, and the root-mean-squared error between the predictive mean and the observed target state is taken over all $4{,}000$ test transitions. The energy score is the Monte Carlo estimate of \cref{eq:energy-score-loss} from the same $64$ samples, averaged over the same transitions. The endpoint $W_1$ error uses a single one-shot sample from $0 \rightarrow 1$ for each test trajectory.

\Cref{fig:lorenz-density-evolution} illustrates the learned temporal evolution. Each distribution $\Psi_{0,t}(x_0)$ is evaluated directly at the target time rather than obtained by recursive propagation.

\begin{figure}
    \centering
    \includegraphics[width=0.29\linewidth]{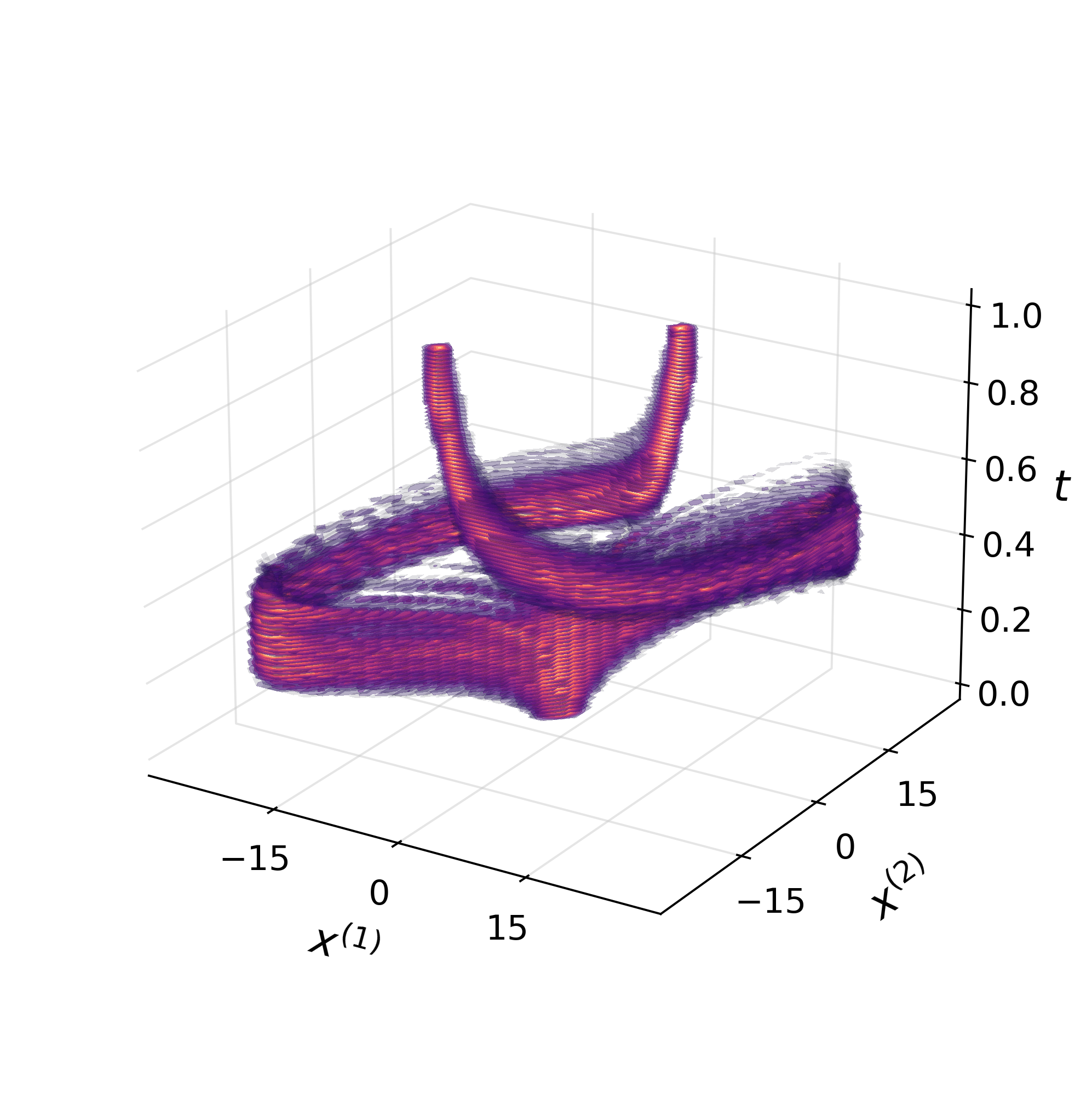}
    \includegraphics[width=0.29\linewidth]{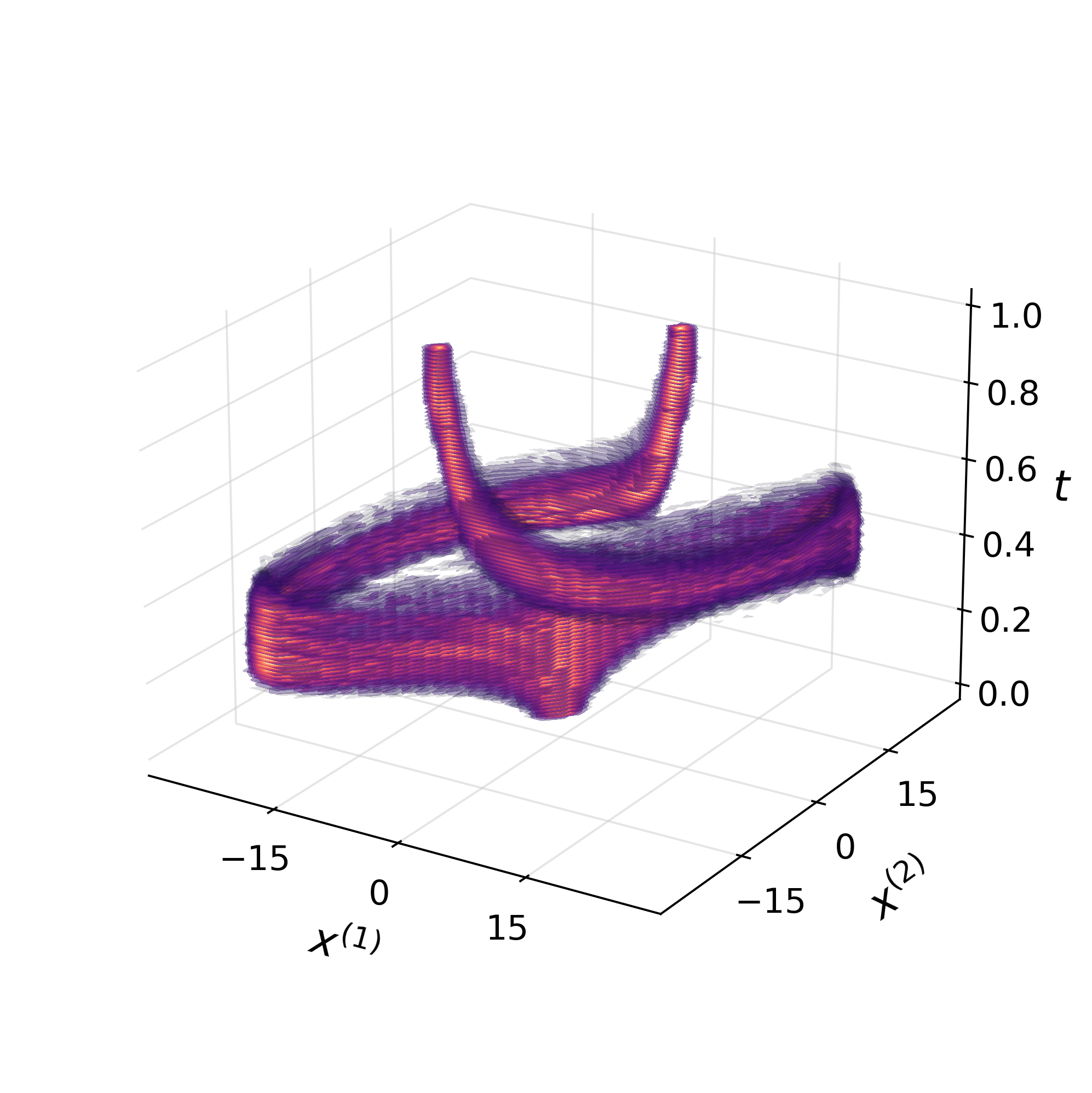}
    \includegraphics[width=0.38\linewidth]{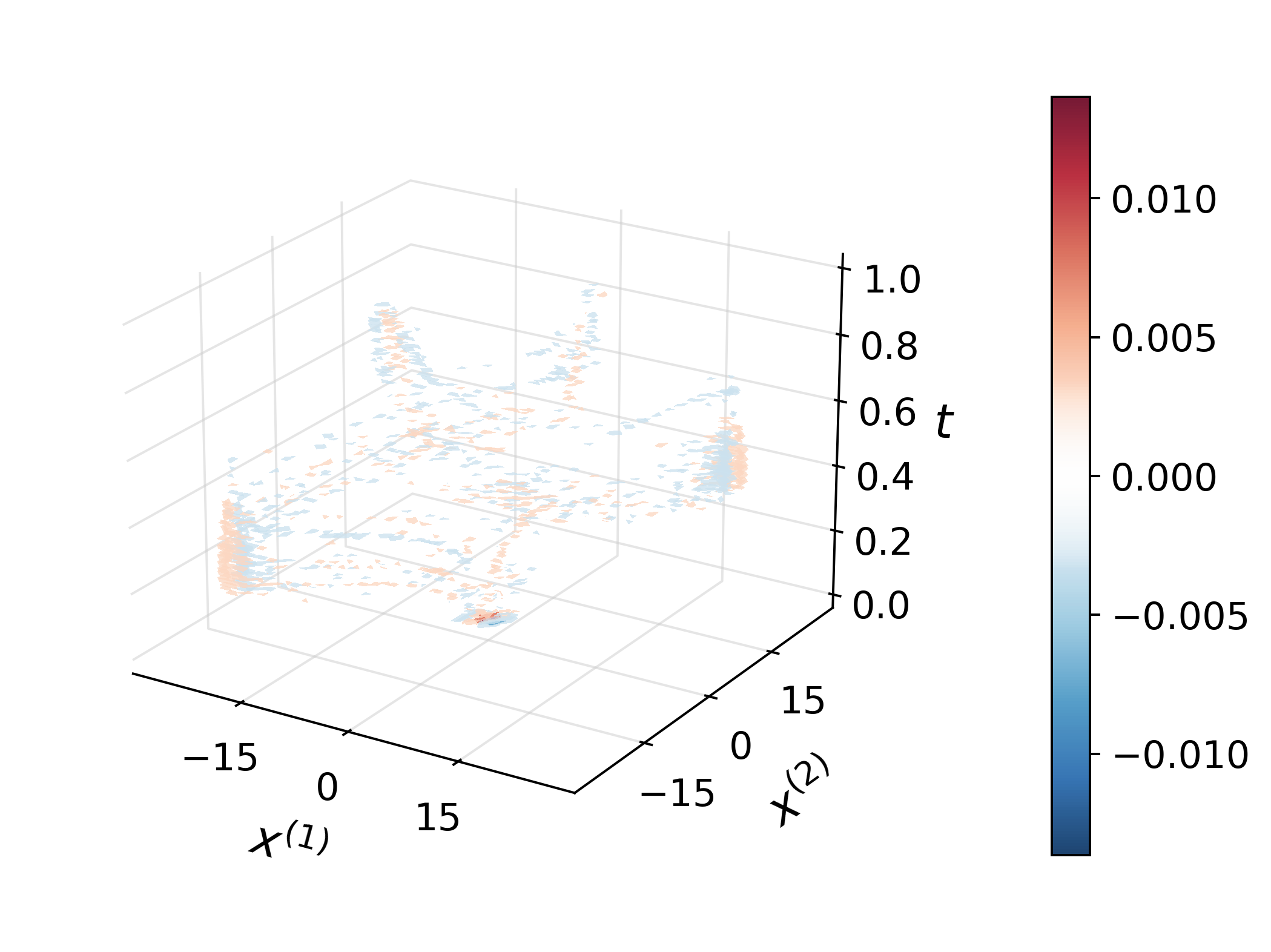}
    \caption{Space–time density of $(x^{(1)},x^{(2)})$ for the $1{,}024$ test trajectories over $t\in[0.025,1]$: reference process (left), Stochastic Latent Twin prediction $\Psi_{0,t}(x_0)$ (middle), and their difference (right). The twin is evaluated directly from the initial states at $101$ times, most of which lie between the observation times used for training. Since the test trajectories are observed only every $0.025$, the reference at intermediate times is obtained by linear interpolation. Densities are smoothed histograms.}
    \label{fig:lorenz-density-evolution}
\end{figure}

\subsection{Comparison with Neural Stochastic Flows}
\label{app:lorenz-competitors}

Neural Stochastic Flows \citep{kiyohara2026neural} is trained on the same stochastic Lorenz data using the authors' reference implementation. In addition to its conditional normalizing flow, Neural Stochastic Flow uses an auxiliary bridge model to construct intermediate-time transitions during training, reflecting the same transition-composition structure underlying the Chapman--Kolmogorov consistency, see \cref{eq:latent-chapman-kolmogorov}. The flow and bridge networks together contain $51{,}540$ trainable parameters.

Although both methods use the same observed source--target pairs, their training procedures differ: the Stochastic Latent Twin learns these conditional transitions directly, whereas Neural Stochastic Flow additionally uses its bridge model and intermediate-time flow objective. The Stochastic Latent Twin is trained on the energy score, which is one of the reported metrics, while the Neural Stochastic Flow is trained on its own objective. We compare the methods at common epoch budgets rather than by equating optimization steps.  

The same ordering observed in the main $100$-epoch comparison is already present after $10$ epochs: the Stochastic Latent Twin attains predictive-mean root-mean-squared error $1.71$, energy score $0.309$, and endpoint $W_1$ error $0.338$, compared with $2.14$, $0.497$, and $0.469$, respectively, for Neural Stochastic Flows.  The Neural Stochastic Flow model is evaluated with the same protocol as the Stochastic Latent Twin, using the final model.

\section{Additional details for the Swiss-roll experiment}
\label{app:swissroll-numerics}

\subsection{Data and backward transition pairs}
\label{app:swissroll-data}

The target data are generated using
\texttt{sklearn.datasets.make\_swiss\_roll} with noise level $0.30$ and
random seed $7$. We retain coordinates $0$ and $2$, yielding the
two-dimensional spiral projection used in the experiment. The data are
standardized using statistics computed from the training set only. We use
$4{,}096$ training samples and $512$ validation samples. 

The observation grid contains $65$ equally spaced time points
$t_0,\ldots,t_{64}$ on $[0,1]$. Training proceeds backward in time: for
grid indices $i<j$, the later and more strongly corrupted state $X_{t_j}$
is used as the conditioning state, and the model learns the conditional
distribution of the earlier state $X_{t_i}$. To sample all prediction horizons uniformly, the index difference $j-i$ is drawn uniformly from $\{1,\ldots,64\}$, after which a compatible earlier index $i$ is selected.

For $a<b$, the forward OU process in \cref{eq:swiss-roll-forward} has the
exact transition
\[
    X_b \mid X_a=x_a
    \sim
    \mathcal N\!\left(
        e^{-\lambda(b-a)}x_a,\,
        \left(1-e^{-2\lambda(b-a)}\right)I
    \right).
\]
Thus, a training pair is generated by first sampling the transition from
$0$ to $t_i$ and then from $t_i$ to $t_j$, so that the conditioning and
target states belong to the same stochastic realization without requiring
a full trajectory to be stored.

\subsection{Architecture, training, and evaluation}
\label{app:swissroll-training}

The model uses a Gaussian latent transition law with latent dimension $n_z=6$,
\[
m(z_s,s,t)
=
\mathcal N\!\left(
    M_{s,t}^{\mathcal Z} z_s,\,
    L_{s,t}^{\mathcal Z}
    \bigl(L_{s,t}^{\mathcal Z}\bigr)^\top
\right).
\]
The encoder and decoder are multilayer perceptrons with two hidden layers of
width $48$, SiLU activations, and LayerNorm after each hidden activation. The transition network has two hidden layers of width $64$ with SiLU activations and takes $(s,t)$ as input.  It produces $M_{s,t}^{\mathcal Z}$ and the lower-triangular entries of $L_{s,t}^{\mathcal Z}$, with positive diagonal entries enforced through
a softplus parameterization. The model has $14{,}017$ trainable parameters.

Training minimizes
\[
    \mathcal L_{\rm ae}
    +
    \mathcal L_{\rm es}
    +
    \lambda_{\rm marg}\mathcal L_{\rm marg},
\]
where $\mathcal L_{\rm ae}$ is the reconstruction loss of \cref{eq:ae-training-loss} summed over the conditioning and target state of each pair, and $\mathcal L_{\rm es}$ is the physical-space energy score of \cref{eq:energy-score-loss}. The energy score is approximated using $16$ decoded predictive samples per conditioning state. The auxiliary term $\mathcal L_{\rm marg}$ is a sliced-Wasserstein discrepancy between a batch of one-shot samples generated from $\mathcal N(0,I)$ and an independent batch of training samples at $t=0$, both of the training batch size. The reported model uses $\lambda_{\rm marg}=1.5$.

Training uses Adam with batch size $128$, initial learning rate $10^{-3}$ with cosine annealing, and $20{,}000$ optimization steps.

We use the sliced-Wasserstein distance: all reported sliced-Wasserstein distance values use $128$ projection directions. For the reported generation results, we average over $10$ independent repetitions. Each iteration samples $800$ fresh standard-normal source samples and $800$ target samples shared by all methods together with the projection directions. We report the mean and standard deviation over the repetitions. The finite-sample reference compares two
independent target samples, while the Gaussian-source baseline compares an
untransformed sample from $\mathcal N(0,I)$ with the target sample.

The auxiliary sliced-Wasserstein term emphasizes the particular
$t=1$ to $t=0$ distributional query used for one-shot generation.
Preliminary experiments without this term produced noticeably poorer
one-shot samples (data not shown).

\subsection{MeanFlow and SoFlow comparison} \label{app:swissroll-competitors}

For comparison with other one-step generative approaches, we use the authors' official implementations of MeanFlow \citep{geng2026mean} and SoFlow \citep{luo2026soflow}, whose training objectives and samplers are applied unchanged, and adapt them to the present two-dimensional setting. Both methods are trained from scratch on the same standardized Swiss-roll data used for the Stochastic Latent Twin and operate directly in physical space. To obtain closely matched model sizes, both competitors use multilayer perceptrons in place of the networks used in the original implementations.

MeanFlow \citep{geng2026mean} learns an average velocity associated with a straight-line interpolation between data and Gaussian noise. We follow the training objective of the reference implementation, which imposes the corresponding self-consistency relation through a Jacobian--vector product. For one-shot generation, the learned average velocity is evaluated directly between $t=1$ and $t=0$.  We use the settings of its ImageNet experiments and condition the network on $(t,r)$ as opposed to $(t, t-r)$ used in the reference code as this yielded better performance.

SoFlow \citep{luo2026soflow} learns a deterministic solution map between source and target times. We follow the reference implementation, which combines a flow-matching velocity objective with a solution-consistency objective based on a finite-difference approximation. We use the method settings of its ImageNet configuration without classifier-free guidance.   One-shot generation is obtained by directly evaluating the learned solution map between $t=1$ and $t=0$.

For MeanFlow and SoFlow, we use multilayer perceptrons with three hidden layers of width $82$ and SiLU activations, giving $14{,}188$ trainable parameters, closely matching the $14{,}017$ parameters of the Stochastic Latent Twin. Both models are trained with Adam, batch size $128$, initial learning rate $10^{-3}$ with cosine annealing, and $20{,}000$ optimization steps without weight averaging. The learning rate was selected using short preliminary hyperparameter screens; the remaining settings follow the reference configurations described above.  A summary of training configurations is given in \Cref{tab:swiss-roll-competitors-detailed}.

All three methods are evaluated using the same one-shot generation protocol.  The resulting parameter counts, wall-clock training times, and one-shot generation errors are reported in \cref{tab:swiss-roll-comparison}.

\begin{table}
\centering
\caption{Implementation details for the one-shot Swiss-roll comparison. All methods are evaluated on the same target distribution and generate from $\mathcal N(0,I)$ in one model evaluation. Wall-clock times refer to the present research implementations and are reported for reference.}
\label{tab:swiss-roll-competitors-detailed}
\small
\begin{tabularx}{\linewidth}{@{}lXXX@{}}
\toprule
 & Stochastic Latent Twin & MeanFlow & SoFlow \\
\midrule

Learned object
&
Latent evolution map
&
Average velocity
&
Deterministic solution map
\\

Training objective
&
Reconstruction + energy score + endpoint SWD
&
MeanFlow consistency objective
&
Velocity + solution consistency
\\

Network width
&
Encoder/decoder $48$; transition $64$
&
$82$
&
$82$
\\

Parameters
&
$14{,}017$
&
$14{,}188$
&
$14{,}188$
\\

Batch size
&
$128$
&
$128$
&
$128$
\\

Training steps
&
$20{,}000$
&
$20{,}000$
&
$20{,}000$
\\

Initial learning rate
&
$10^{-3}$
&
$10^{-3}$
&
$10^{-3}$
\\

Training time
&
$243.7$ s
&
$60.8$ s
&
$47.0$ s
\\

Global SWD
&
$\mathbf{0.086\pm0.017}$
&
$0.156\pm0.025$
&
$0.148\pm0.023$
\\

\bottomrule
\end{tabularx}
\end{table}

\section{Additional details for the stochastic pendulum experiment} \label{app:pendulum-numerics}

\subsection{Data generation and training pairs} \label{app:pendulum-data}

Initial states are rejection-sampled uniformly from
$q\in[-2.5,2.5]$ and $p\in[-1.8,1.8]$, subject to $ H(q,p)\in[0.15,1.80]$. This range remains below $H=2$, the energy at which the pendulum changes from oscillatory to rotational motion. We generate $2{,}048$ training,
$512$ validation, and $512$ test trajectories, recorded at $601$ equally spaced times on $[0,30]$ with spacing $\Delta t=0.05$.

To generate the stochastic dynamics of \cref{eq:stochastic-pendulum}, we exploit their random-time representation. At each nominal increment $\Delta t$, we draw $\tau=\Delta t+\sigma\sqrt{\Delta t}\,\xi$, with  $\xi\sim\mathcal N(0,1)$, and propagate the deterministic pendulum over the random elapsed time $\tau$. Thus, no stochastic time integrator is required for the reference trajectories; the stochasticity enters entirely through the sampled time increment. The deterministic flow is approximated using fourth-order Runge--Kutta with eight substeps. The resulting numerical energy drift has mean $5.79\times10^{-8}$ and maximum $8.92\times10^{-7}$, well below the energy defects reported for the learned models.

Training pairs $(x_s,x_t,s,t)$ are sampled from the same stochastic realization. To provide sufficient coverage of both local and longer transitions, half of the pairs use time differences up to $0.5$, while the remaining half are sampled over the full training interval $(0,30]$.

\subsection{Architecture and training}
\label{app:pendulum-training}

The encoder and decoder are multilayer perceptrons with two hidden layers of width $64$ and SiLU activations, and the latent dimension is $n_z=4$. The transition network has two hidden layers of width $96$ and takes $(z_s,s,t)$ as input.

The transition network parameterizes the Gaussian latent transition law. To enforce the identity property at $t=s$, the predicted mean increment is scaled by $t-s$ and the Cholesky factor of the covariance by $\sqrt{t-s}$. Hence,
$m(z,s,s)=\delta_z$ holds exactly by construction.

For the Stochastic Latent Twin and the Stochastic Latent Twin with soft conformity, all four latent coordinates are evolved by the Gaussian transition. For the Stochastic Latent Twin with conformity by construction, the first latent coordinate is copied unchanged and only the remaining three coordinates evolve stochastically. For the two physics-conforming models, the latent physical quantity is
\[
    \mathcal C_{\mathcal Z}(z)=c_{\rm pull}z_1,
\]
where $c_{\rm pull}$ is learned through the pullback relation.

All models are trained with reconstruction of both states in each transition pair together with the physical-space energy score \cref{eq:energy-score-loss}, approximated using $16$ predictive samples per conditioning state. The Stochastic Latent Twin with soft conformity additionally uses both $\mathcal L_{\rm pull}$ and $\mathcal L_{\rm con}$, each with unit weight, whereas the Stochastic Latent Twin with conformity by construction uses only $\mathcal L_{\rm pull}$, also with unit weight, since its latent invariant is already preserved exactly. Training uses Adam with batch size $256$, initial learning rate $10^{-3}$ with cosine annealing, and $40{,}000$ optimizer updates.

\subsection{Evaluation and conformity diagnostics}
\label{app:pendulum-diagnostics}

The conditional-law error is assessed using eight source states from the test set, selected across the prescribed energy range. For each source state and time interval $t-s$, we generate $256$ samples from both the reference conditional distribution and the Stochastic Latent Twin approximation. Reference samples are obtained from the random-time representation by propagating the deterministic pendulum over independently sampled times
$(t-s)+\sigma\sqrt{t-s}\,\xi$, with  $\xi\sim\mathcal N(0,1)$.
The sliced-Wasserstein distance is approximated using $128$ projection directions and averaged over the eight source states. Physical conformity is measured by the root-mean-square energy defect
\[
    \left( \mathbb E\!\left[ \bigl(H(X_t)-H(x_s)\bigr)^2 \right] \right)^{1/2},
\]
approximated from the same predicted samples.

\Cref{fig:pendulum-conditional-clouds} gives a representative comparison at $t-s=1.5$ for three source states with different energies. The unconstrained Stochastic Latent Twin produces visible deviations from the corresponding energy manifolds, whereas both physics-conforming variants remain much more closely concentrated around them.

\begin{figure}
    \centering
    \includegraphics[width=\linewidth]
    {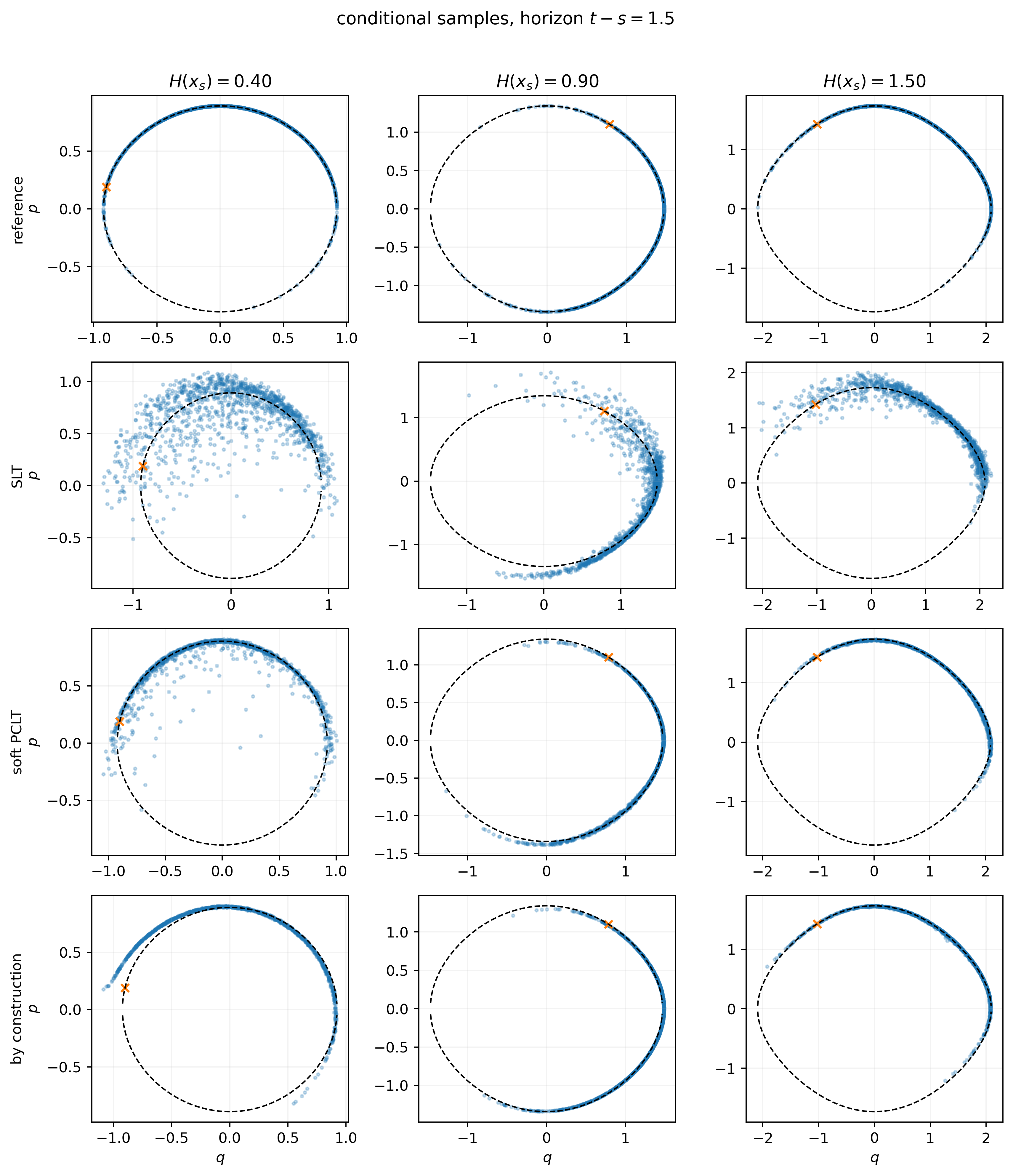}
    \caption{Reference and predicted conditional samples at $t-s=1.5$ for source states with
    $H(x_s)=0.40$, $0.90$, and $1.50$. Dashed curves denote the invariant energy manifolds and crosses the corresponding source states. The physics-conforming variants substantially reduce deviations from the prescribed energy manifold while retaining the stochastic spread of the conditional law.}
    \label{fig:pendulum-conditional-clouds}
\end{figure}

To examine the latent-to-physical conformity transfer, \cref{fig:pendulum-conformity-decomposition} separates the physical energy defect into the contributions associated with target pullback compatibility, latent conformity, and source calibration. 

\begin{figure}
    \centering
    \includegraphics[width=0.85\linewidth]
    {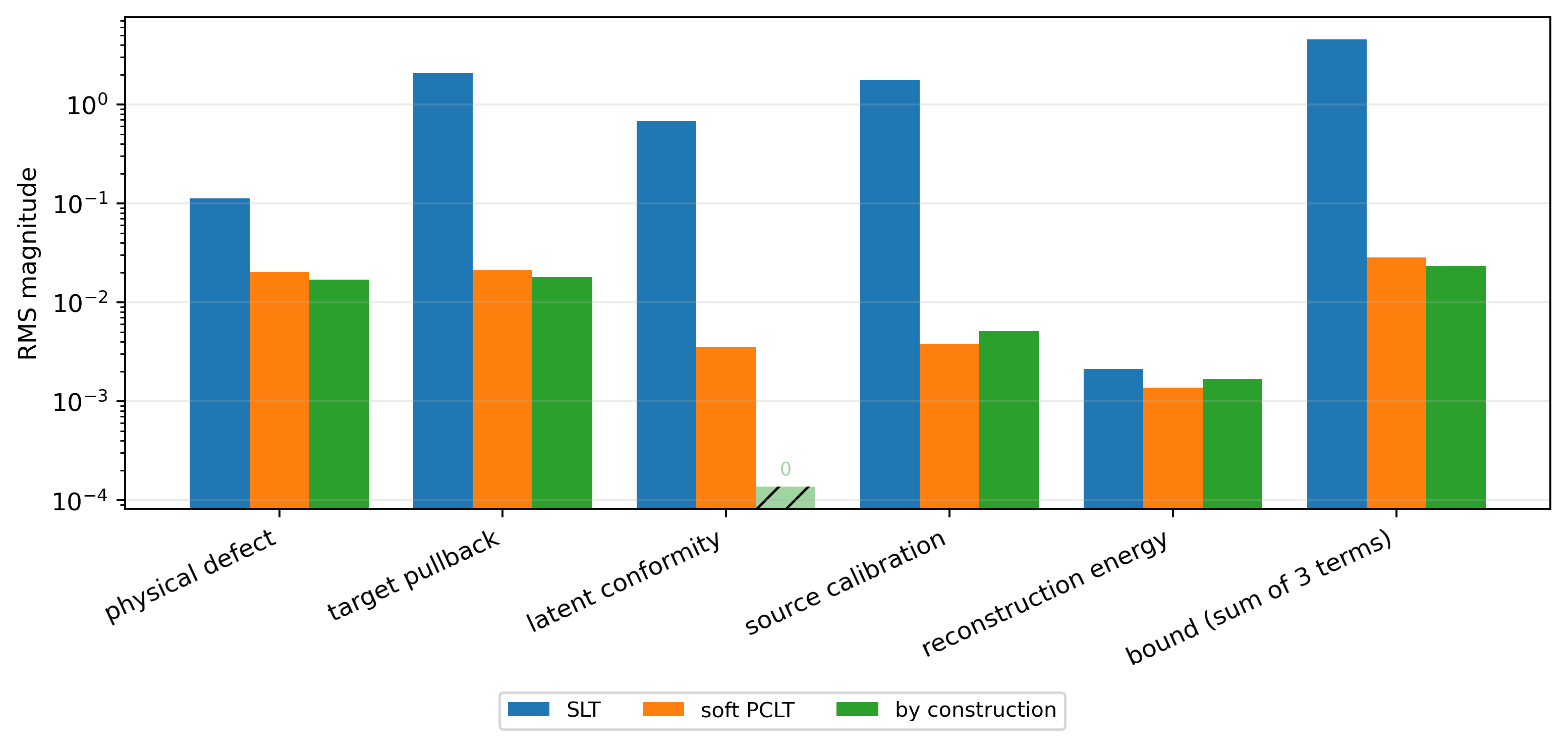}
    \caption{Sample-based decomposition of the physical energy defect at
    $t-s=1.5$ into target pullback, latent-conformity, and source-calibration
    contributions, together with their sum. For the Stochastic Latent Twin with conformity
    by construction, the latent-conformity contribution is zero.}
    \label{fig:pendulum-conformity-decomposition}
\end{figure}

\section{Additional details for the 2D stochastic Burgers Equation experiment}\label{app:burger-numerics}

\subsection{Data generation and training pairs}\label{app:burgers-numerics}

To simulate \Cref{eq:2dBurg}, the domain $[0, 2\pi)^2$ was first discretized into a $64 \times 64$ grid of cells. The advective and stochastic terms were computed as fluxes across the edges of the cells: the former was approximated using the local Lax--Friedrichs flux \citep{quarteroni2014scientific}, and the latter was evaluated using $f(\xi_1,\xi_2)$, multiplied by $\sigma$ and Brownian increments sampled independently at each time step and for each trajectory. The divergence of these fluxes was then approximated by a finite difference of their values at the cell edges. The diffusion term was approximated with the standard five-point finite-difference Laplacian.

Time integration used the Euler--Maruyama method with step size $\Delta t \le 2.5\times10^{-3}$: training uses $32$ equally spaced times on $[0, 1]$, including both endpoints, so time steps were adjusted to fall on each of these recorded times. In this scheme, $\sum_{i = 1}^{n_g^2} u^{(i)}$ is conserved to round-off at every time step, as the finite differences of the fluxes and the five-point Laplacian telescope to zero when summed over the periodic grid. Initial conditions were generated as smooth fields obtained by filtering white noise to its lowest wavenumbers, scaling by a random amplitude in $[1,2]$, and adding a random constant in $[-0.4, 0.4]$ so that the mass varied across trajectories.

With $\nu = 0.01$ and $\sigma = 0.15$, we generate $10{,}000$ initial conditions and simulate a trajectory from each over $t \in [0,1]$, recording the solution at $32$ equally spaced times for training. Each trajectory is a single realization of the noise, and the training pairs consist of $(u_s, u_t, s, t)$ samples drawn from these simulated trajectories with $s \leq t$ two of the 32 times. 

Both the training data and the reference ensembles are generated by the same $64\times64$ solver of \Cref{eq:2dBurg}. At this resolution, the simulated solutions can deviate noticeably from those of the continuous equation, particularly near sharp fronts. The models are therefore trained and evaluated on the conditional law of this discretized stochastic system, rather than on the conditional law of the continuous equation.

\paragraph{Mass conservation.} The pathwise mass conservation used in \Cref{subsec:burgers} follows directly from the conservative form of \Cref{eq:2dBurg}. For sufficiently regular solutions, 
\[
    u\,\partial_{\xi_1}u+u\,\partial_{\xi_2}u = \nabla\cdot \left( \tfrac{u^2}{2}, \tfrac{u^2}{2} \right), \qquad \Delta u=\nabla\cdot(\nabla u),
\]
and integration over the periodic domain gives
\[
    |\Omega| \,\cdot \, \d \mathcal C(u_t)  = -\int_\Omega \nabla\cdot \left( \tfrac{u^2}{2}, \tfrac{u^2}{2} \right) \d\xi_1\d \xi_2\,\d t\, +\nu\int_\Omega \Delta u\,\d \xi_1\d \xi_2\,\d t\,  +\sigma\int_\Omega \nabla\cdot\bigl(f(\xi_1,\xi_2)\,\d W_t\bigr) \d \xi_1\d \xi_2 =0,
\]
because the spatial integral of each periodic divergence vanishes. Therefore, $\mathcal C(u_t)=\mathcal C(u_s)$, for all $s,t > 0$ along every realization.

\subsection{Architecture and training}\label{app:burgers-training}

The architectures used for the Stochastic Latent Twin are identical between the standard and physics-conforming variant.  A latent space of $n_z = 512$ is used, with  CNN-based encoders and decoders.  To reflect the periodic setting, all convolutions are $3\times 3$ with circular padding.  The full network is built from residual blocks consisting of two group-normalization (GN) with 8 groups, SiLU, and convolution layers in sequence:
\[
   R:\ u \;\to\; \mathrm{GN} \;\to\; \mathrm{SiLU} \;\to\; \mathrm{Conv} \;\to\; \mathrm{GN} \;\to\;
  \mathrm{SiLU} \;\to\; \mathrm{Conv},
  \qquad u \mapsto u +  R(u).
\]
The encoder maps the standardized field to $z \in \mathbb{R}^{n_z}$ with an initial convolution followed by four of the above residual stages with stride-2 downsampling, a final residual block, and a final linear layer. The decoder mirrors it with nearest-neighbor upsampling. 

The latent transition law was implemented as a Gaussian with outputs given by an MLP.  Similar to the design of \citep{kiyohara2026neural}, we take
\[
    Z_t \mid Z_s = z_s \sim \mathcal{N}\big( \mu_{s,t},\, L_{s,t}L_{s,t}^\top \big)
\]
with
\begin{align*}
    [\tilde \mu, \tilde L] = {\rm MLP}(z_s, t-s),\qquad \mu_{s,t} = z_s + (t-s)\,\tilde\mu, \qquad L_{s,t} = \sqrt{t-s}\, \tilde L,
\end{align*}
where $\tilde L$ is lower triangular with diagonal entries passed through $\mathrm{softplus}(\cdot) + 10^{-6}$ to ensure positivity. The input to the MLP is the concatenation of the source code $z_s$ with a two-dimensional time encoding, either the pair $(s, t)$ or $(0, t-s)$, the latter used in our experiment. This input is passed through a single hidden layer of width 112 with SiLU activation, followed by a linear layer. The first $n_z$ outputs form $\tilde\mu$, and the remaining $n_z(n_z+1)/2$ fill the lower-triangular entries of $\tilde L$.

Both Stochastic Latent Twin variants are trained on the same $10{,}000$ training trajectories.  Each training step draws a batch of 64 pairs, $(u_s, u_t, s, t)$, $s < t$ from randomly chosen trajectories.  The time gap, $t - s$ is chosen uniformly in $1$ to $31$ with the start time chosen uniformly among possible start times.  The loss, averaged over the batch, consists of the reconstruction loss of both drawn states and the energy score with 8 samples drawn from the models predicted distribution.  Here, we use the norm $\|\cdot\| = \frac{1}{n_g}\|\cdot\|_2$.  We train for 10{,}000 steps with Adam (learning rate $10^{-3}$, cosine-annealed to $10^{-5}$). Both variants see the same batches and the same latent noise.  The two variants differ only in the mean shifts applied after encoding, after latent sampling and after decoding. 

Our Latent Neural Stochastic Flow implementation uses the authors' reference code \citep{kiyohara2025nsfcode}, with their Stochastic Moving MNIST configuration as a template: the model class, stochastic flow, bridge network, loss and training step are unchanged. We made the following adjustments for our setting. The posterior network was changed from the original GRU to the MLP multivariate normal posterior provided in the repository and a latent dimension of $n_z = 512$ was used. The encoder and decoder were replaced by those of the Stochastic Latent Twin so that the two methods share the same architectures and no scene encoder was used.  The decoder output is the mean of a Gaussian whose standard deviation is fixed at $1$, rather than learned as in the reference configuration. The hidden widths of the flow and bridge networks were increased from $64$ to $1024$.  These changes support comparison between learned latent dynamics as all implemented methods share similar parameter counts at inference--see \Cref{tab:Burg-model-comp}.

\begin{table}
\centering
\caption{Parameter count comparison of Stochastic Latent Twin, Stochastic Latent Twin with conformity and Latent Neural Stochastic Flow model components.}
\label{tab:Burg-model-comp}
\begin{tabular}{lccc}
\toprule
Model component & SLT & SLT with conformity & Latent NSF \\
\midrule
Encoder                   & $4{,}064{,}800$  & $4{,}064{,}800$  & $4{,}064{,}800$  \\
Decoder                   & $4{,}067{,}649$  & $4{,}067{,}649$  & $4{,}067{,}649$  \\
Latent transition network & $14{,}955{,}600$ & $14{,}955{,}600$ & $15{,}226{,}880$ \\
Posterior network         & --               & --               & $1{,}050{,}624$  \\
Bridge network            & --               & --               & $17{,}853{,}440$ \\
\midrule
Total                     & $23{,}088{,}049$ & $23{,}088{,}049$ & $42{,}263{,}393$ \\
Total without bridge      & $23{,}088{,}049$ & $23{,}088{,}049$ & $24{,}409{,}953$ \\
\bottomrule
\end{tabular}
\end{table}

The training configuration was modified as follows. We kept the reference value $\lambda = 0.1$ and set $\beta = \beta_{\rm skip} = 0.2$ based on a small hyperparameter sweep. We trained on the same $10{,}000$ training trajectories as the Stochastic Latent Twin: each training batch contained 32 trajectories, and the loss combined reconstruction at every frame, transition losses for all 31 one-step transitions, and 25 skip-ahead pairs per trajectory with larger time gaps. The number of skip-ahead pairs follows the reference configuration, but we sampled the gap of each pair uniformly, matching the Stochastic Latent Twin's sampling of training pairs. The weights $\beta$, $\beta_{\rm skip}$, and $\lambda$ were ramped linearly from zero over the first six epochs, per optimization step. The model was trained for 64 epochs on the $10{,}000$ training trajectories, producing approximately twice the number of optimization steps used for the Stochastic Latent Twin. AdamW was used with a constant learning rate of $10^{-3}$ as in the reference configuration. 

These adjustments were chosen to provide a controlled and representative comparison between the two methods in our setting. The Latent Neural Stochastic Flow was not subjected to an exhaustive architecture or hyperparameter search, so further tuning may improve its performance. The comparison across evaluation times is shown in \Cref{fig:burgers2d}, while \Cref{tab:Burg-metrics} reports the corresponding metrics at $t=0.5$.

\begin{table}
\centering
\caption{Comparison of Stochastic Latent Twin, Stochastic Latent Twin with mass conformity and Latent Neural Stochastic Flow for $t = 0.5$.  Metrics correspond to respective model approximations $\Psi_{0, 0.5}(x_0)$ and are evaluated according to \Cref{app:burgers-eval}. Values are means $\pm$ one standard error over the 32 initial conditions.}
\label{tab:Burg-metrics}
\begin{tabular}{lcccc}
\toprule
Metric & SLT & SLT with conformity & Latent NSF & Baseline \\
\midrule
Median error         & $0.0894 \pm 0.0044$ & $\mathbf{0.0801 \pm 0.0047}$ & $0.1873 \pm 0.0117$ & $0.0294 \pm 0.0013$ \\
SWD                  & $0.0983 \pm 0.0037$ & $\mathbf{0.0872 \pm 0.0038}$ & $0.1887 \pm 0.0109$ & $0.0414 \pm 0.0011$ \\
Spread ratio         & $0.935 \pm 0.013$   & $\mathbf{0.945 \pm 0.009}$   & $1.108 \pm 0.011$   & $1.004 \pm 0.007$   \\
$\mathcal C(u)$ defect & $1.80 \times 10^{-2}$ & {$\mathbf{4.86 \times 10^{-8}}$} & $3.32 \times 10^{-2}$ & $3.53 \times 10^{-17}$ \\
\bottomrule
\end{tabular}
\end{table}

For all methods, data are standardized with training-set statistics, all metrics are computed in physical units and the final models were used in evaluation.

\subsection{Evaluation and diagnostics}\label{app:burgers-eval}

For evaluation, we take $32$ held-out initial conditions and simulate $256$ independent trajectories for each time $t \in \{0.1, 0.25, 0.5, 0.75, 1\}$, keeping the resulting states for each time. This provides samples of the conditional law of the simulated system at time $t$ given the initial condition. As before, the time steps are no larger than $2.5\times10^{-3}$ and chosen so that the simulation lands exactly on each $t$.  To provide a baseline for the metrics, we repeat this simulation with new noise realizations.

For each of the $32$ initial states $u_0$, all models are evaluated on its one-shot prediction of $U_t \mid U_0 = u_0$, drawing 256 samples from its predicted distribution at each specified $t$.  In each of the following metrics, the 256 reference samples are compared to the 256 model samples and aggregated into the reported metrics as follows.
\begin{itemize}
    \item \textbf{Median error:} For each of the initial conditions and specified time points, we compute the root-mean-squared difference over the grid between the pointwise medians of the 256 reference and 256 model samples and average it over the 32 reference initial conditions. The baseline is computed similarly between the reference samples and the second realization of reference samples.

    \item \textbf{sliced-Wasserstein distance: } The sliced-Wasserstein distance is described \Cref{sec:numerics}. Here it is computed between the reference and model samples using $1{,}000$ random directions, compared with $128$ in the Swiss-roll and pendulum experiments. This is then averaged over the 32 initial conditions.   The baseline reports the sliced-Wasserstein distance computed between the reference samples and second independent draw of the reference samples again averaged over the 32 initial conditions. 

    \item \textbf{$\mathcal C (u)$ defect:} For each initial state $u_0$ and time $t$, we compute the root-mean-squared error between $\mathcal C (u_t)$ of the 256 model samples and $\mathcal C (u_0)$,and average over the 32 initial states.  The baseline computes the same value for initial states $u_0$ and reference samples drawn from the second realization.
        
    \item \textbf{Spread ratio: } For each initial condition and time, we compute the pointwise variance of each grid cell across the model's 256 samples. This is then averaged over the cells and square rooted to provide the spread. This procedure is repeated for the reference samples. The ratio of the model spread to the reference spread is averaged over the 32 initial conditions and reported.  The baseline is reported as the ratio of the spread of the second independent reference ensemble to that of the reference ensemble.  This metric provides a measure of the calibration of the variance of the samples.  Although the metric does not take into account the location of the variance, a value close to 1 indicates that, on average, the model samples have similar spread to the reference samples.  Values below 1 and over 1 indicate overconfidence and underconfidence, respectively.

\end{itemize}

\end{document}